\newcommand{\lp}{\left(}
\newcommand{\rp}{\right)}
\newcommand{\lac}{\left\{}
\newcommand{\rac}{\right\}}
\def\E{\mathbb{E}}
\def\R{\mathbb{R}}
\def\N{\mathbb{N}}
\def\Q{\mathbb{Q}}
\def\Prob{\mathbb{P}}
\newcommand{\ind}{\boldsymbol{1}}
\newtheorem{theorem}{Theorem}
\newtheorem{proposition}[theorem]{Proposition}
\newtheorem{lemma}[theorem]{Lemma}
\newtheorem{corollary}[theorem]{Corollary}
\theoremstyle{definition}
\newtheorem{example}[theorem]{Example}
\newtheorem{assump}[theorem]{Assumption}
\begin{document}

\title{Ultrasensitivity and sharp threshold theorems for multisite systems
}

\author[1]{Micha\"el Dougoud, Christian Mazza and Laura Vinckenbosch}
\affil[1]{University of Fribourg, Department of Mathematics, Chemin du Mus\'ee 23, CH-1700 Fribourg, Switzerland}
\date{\today}

\maketitle

\begin{abstract}
We study the ultrasensitivity of multisite binding processes where ligand molecules can bind to several binding sites, considering more particularly recent models involving complex chemical reactions in phosphorylation systems such as allosteric phosphorylation processes, or substrate-catalyst chain reactions and nucleosome mediated cooperativity. New statistics based formulas for the Hill coefficient and the effective Hill coefficient are provided and necessary conditions for a system to be ultrasensitive are exhibited. We first show that the ultrasensitivity of binding processes can be approached using sharp-threshold theorems which have been developed in applied probability theory and statistical mechanics for studying sharp threshold phenomena in reliability theory, random graph theory and percolation theory. We hence introduce influence functions and show that general results can be obtained for monotone measures.
We then assume  that the binding process is described by a density dependent birth and death process. We provide precise large deviation results for the steady state distribution of the process, and show that switch-like ultrasensitive responses are strongly related to the multi-stability of the associated dynamical system. Ultrasensitivity occurs if and only if the entropy of the dynamical system has more than one global minimum for some critical ligand concentration. In this case, the Hill coefficient is proportional to the number of binding sites, and the systems is highly ultrasensitive. We also discuss the interpretation of an extension $I_q$ of the effective Hill coefficient $I_{0.9}$ for which we recommend the computation of a broad range of values of $q$ instead of just the standard one corresponding to the 10\% to 90\% variation in the dose-response. It is shown that this single choice can sometimes mislead the conclusion by not detecting ultrasensitivity. This new approach allows a better understanding of multisite ultrasensitive systems and provides new tools for the design of such systems.
\end{abstract}

\section{Introduction}

Ultrasensitive responses, that is, switch-like input-output relationships are common-place in signal transduction systems involving signaling cascades or bistable switches, see, e.g., the review articles \cite{ferrell2014a,ferrell2014b,ferrell2014c}. We focus on switching mechanisms based on multisite phosphorylation processes, or, more generally, on multisite binding processes, where ligand molecules can bind cooperatively to $N$ binding sites. Such processes create thresholds such that the proportion of highly phosphorylated substrate is close to 0 when the ratio of kinase to phosphatase activity is below a critical level. The system is ultrasensitive if the response switches abruptly from a low to a high phosphorylation level when the ratio of kinase to phosphatase crosses the critical threshold. Usually this occurs when $N$ is large, but the fact of having many phosphorylation sites is not sufficient to ensure ultrasensitivity, see \cite{Gunawardena2005} or \cite{Thomson2009}. Various processes like protein or enzyme sequestration \cite{Buchler2009,Martins2013} or allosteric mechanisms, see e.g., \cite{tenWolde,enciso2013,Martins2013,EncisoKelloggVargas2014,Edelstein} are known to induce ultrasensitivity. Our approach considers stochastic kinetics based on birth and death processes modeling phosphorylation processes. We focus on recent dynamics which involve allosteric mechanisms, and show, e.g., how one can relate the multi-stability of an underlying ordinary differential equation to ultrasensitivity of the stochastic system. The examples cover allosteric phosphorylation processes, substrate-catalyst chain reactions and nucleosome mediated cooperativity.

We present basic models for processes involving the binding of ligand molecules on macromolecules, or for phosphorylation processes where molecules can be phosphorylated at multiple sites. Let $M$ be a macromolecule containing $N$ sites $S=\{1,\cdots,N\}$ where ligand molecules can bind. We will use the binary variables $n_i = 0,\ 1$, $i=1,\cdots,N$ to describe the occupancy of the various sites: $n_i =1$ means that site $i$ is occupied or phosphorylated, while $n_i =0$ indicates that no molecule is bound at site $i$. The configuration space is denoted by $\Lambda = \{n=(n_i)_{1\le i\le N};\ n_i =0,\ 1\}$, which has size $\vert\Lambda\vert = 2^N$. We suppose that the ligand concentration is given by a positive variable $v>0$, and that the probability $\pi(n)$ to see a configuration $n$ is of the generic form
\begin{equation}\label{BasicProbability}
\pi(n)=\frac{\mu(n)v^{\vert n\vert}}{Z(v)},
\end{equation}
where the $\mu(n)$ are non-negative weights, $\vert n\vert$ denotes the number of bound sites, that is,
$$\vert n\vert =\sum_{i=1}^N n_i$$
and $Z(v)$ is the normalization constant $Z(v)=\sum_{n\in\Lambda}\mu(n)v^{\vert n\vert}$.  Let $a$ be a non-negative and increasing function defined on the unit interval $[0,1]$. The activity of the macromolecule is defined as the statistical average
\begin{equation}\label{Activity}
f(v)=\langle a(\tfrac{\vert n\vert}{N})\rangle_\pi = \sum_{n\in\Lambda}a(\tfrac{\vert n\vert}{N})\pi(n),
\end{equation}
see e.g.~\cite{ryerson2014ultrasensitivity}, which is a non-decreasing function of $v>0$.

As a matter of clarity, notice that the statistical average in (\ref{Activity}) is taken over all possible values of $n$, but can be defined with respect to $\bar\pi$, the law of $\vert n\vert$, which is a probability measure defined on the set $\bar\Lambda =\{0,1,\cdots,N\}$, with
\begin{equation}\label{Image0}
\bar\pi(k)=\sum_{n:\ \vert n\vert =k}\pi(n)= v^k \frac{\sum_{n:\ \vert n\vert =k}\mu(n)}{Z(v)}\,.
\end{equation}
In this case, the activity of the macromolecule becomes $f(v)=\sum_{k=0}^N a(\tfrac{\vert n\vert}{N})\bar\pi(k)$. When the weights $\mu(n)$ only depend on $n$ through the number of bound sites $\vert n\vert$, that is, when $\mu(n)=V(\vert n\vert)$ for some function $V$, then
\begin{equation}\label{Image}
\bar\pi (k)=\frac{{N\choose k}V(k)v^k}{\sum_{k=0}^N {N\choose k}V(k)v^k}\,.
\end{equation}

The {\bf Hill coefficient} of cooperativity 
\begin{equation}\label{HillCoefficient}
\eta_H(v)=v \frac{\,{\rm d}}{\,{\rm d}v}\ln\Big(\frac{f(v)}{f(\infty)-f(v)}\Big),
\end{equation}
provides a measure of the effect of binding of one ligand molecule at some site on the binding at other sites, see e.g.~\cite{MazzaBenaim} and the references therein for more details. Let $\bar v$ be a concentration such that $f(\bar v)$ is halfway between the minimum and the maximum of $f$. One speaks of {\bf positive cooperativity} when $\eta_H(\bar v)$ is larger than one, and of { \bf ultrasensitivity} when $\eta_H(\bar v)$ is very large. 
We will study the large $N$ behaviour of $\eta_H(v)$ as a function of the ligand concentration $v$, and look for critical concentrations $v_c$ ensuring that $\eta_H(v)$ diverges towards infinity as $N\to\infty$. This defines our generic notion of ultrasensitivity.
But of course, this particular choice of concentration $\bar v$ is more pragmatic than based on scientific grounds. In the current work, we will look for concentrations $v_c$ for which $\eta_H(v_c)$ is high and study the dependence between this coefficient and the number of binding sites. The biochemical literature also considers a second measure of cooperativity, which is sometimes called the Goldbeter-Koshland index or the {\bf effective Hill coefficient}, see e.g.~\cite{Karlin1980a,Karlin1980b}. In the next section, we provide an extended definition of this index and establish its link with the Hill coefficient $\eta_H(v_c)$.

\section{New formulas for Hill coefficients}
\subsection{Statistical interpretation of Hill coefficients}
The Hill coefficient  (\ref{HillCoefficient}) has a nice statistical representation. For $\pi$ be a probability measure of the form given in (\ref{BasicProbability}), we prove in the Appendix the following result.
\begin{theorem}\label{SecondFormula}
Let $\pi$ be a probability measure of the form given in~\eqref{BasicProbability}. Then
\begin{equation}\label{Formula1}
\eta_H(v)=\frac{{\rm Cov}_\pi(a(\tfrac{\vert n\vert}{N}),\tfrac{\vert n\vert}{N}) a(\frac{k^*}{N})}{f(v)(a(\frac{k^*}{N})-f(v))} \, N.
\end{equation}
where $k^*$ denotes the largest $k$ for which there is a configuration $n$ such that $\vert n\vert =k$ and $\mu(n)>0$, in such a way that $a(\frac{k^*}{N})=f(\infty)$. Moreover, 
$\lim_{v\to\infty}\eta_H(v) = \lim_{v\to0}\eta_H(v) =1$.
\end{theorem}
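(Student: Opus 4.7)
The plan is to reduce the Hill coefficient to a logarithmic derivative and identify the numerator of that derivative as a covariance with respect to $\pi$.

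First, I would rewrite
\begin{equation*}
\eta_H(v)=v\frac{d}{dv}\ln\!\left(\frac{f(v)}{f(\infty)-f(v)}\right)=\frac{v\,f'(v)\,f(\infty)}{f(v)\bigl(f(\infty)-f(v)\bigr)},
\end{equation*}
so everything reduces to computing $vf'(v)$ and identifying $f(\infty)$. For $f(\infty)$, I would argue that as $v\to\infty$, the weights $\mu(n)v^{|n|}$ are dominated by the configurations with the maximal $|n|$ at which $\mu$ is positive, i.e.\ $|n|=k^*$. Hence $\pi$ concentrates on $\{|n|=k^*\}$, giving $f(\infty)=a(k^*/N)$.

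Next, to compute $vf'(v)$, I would use the standard log-derivative trick. Differentiating $\pi(n)=\mu(n)v^{|n|}/Z(v)$ gives
\begin{equation*}
v\,\frac{d}{dv}\pi(n)=|n|\,\pi(n)-\pi(n)\,\frac{v\,Z'(v)}{Z(v)},\qquad \frac{v\,Z'(v)}{Z(v)}=\langle|n|\rangle_\pi.
\end{equation*}
Multiplying by $a(|n|/N)$ and summing over $n\in\Lambda$ yields
\begin{equation*}
v\,f'(v)=\langle a(\tfrac{|n|}{N})\,|n|\rangle_\pi-\langle a(\tfrac{|n|}{N})\rangle_\pi\langle|n|\rangle_\pi=N\,{\rm Cov}_\pi\!\bigl(a(\tfrac{|n|}{N}),\tfrac{|n|}{N}\bigr).
\end{equation*}
Plugging this into the previous display and substituting $f(\infty)=a(k^*/N)$ gives formula~\eqref{Formula1} at once.

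For the limits, I would expand $Z(v)$ as a polynomial in $v$ and read off the leading behaviour. Writing $\tilde\mu_k=\sum_{|n|=k}\mu(n)$, we have $Z(v)=\sum_k\tilde\mu_k v^k$ and $f(v)=\sum_k a(k/N)\tilde\mu_k v^k/Z(v)$. Assuming $a(0)=0$ and $\tilde\mu_0,\tilde\mu_1>0$, a Taylor expansion at $v=0$ gives $f(v)=\alpha v+O(v^2)$ with $\alpha=a(1/N)\tilde\mu_1/\tilde\mu_0>0$, so $vf'(v)/f(v)\to 1$ and $f(\infty)/(f(\infty)-f(v))\to 1$, yielding $\eta_H(v)\to 1$. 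Symmetrically, at $v=\infty$, factoring $v^{k^*}$ from $Z$ and assuming $\tilde\mu_{k^*-1}>0$ gives $f(\infty)-f(v)\sim \beta/v$ for some $\beta>0$; then $v(\ln f)'\to 0$ and $-v(\ln(f(\infty)-f))'\to 1$, again producing the limit~$1$. The main subtlety to flag is the dependence of these limits on mild non-degeneracy assumptions (namely $a(0)=0$ and $\tilde\mu_k>0$ at the first/last relevant levels); without them, the limits equal the gap between consecutive occupied levels rather than $1$, and one should expect the paper to impose such conditions tacitly on the model.
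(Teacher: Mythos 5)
Your proposal is correct and follows essentially the same route as the paper: the identity $vf'(v)=N\,{\rm Cov}_\pi(a(\tfrac{|n|}{N}),\tfrac{|n|}{N})$ obtained from the log-derivative of $\pi(n)=\mu(n)v^{|n|}/Z(v)$ is exactly the computation behind the paper's Lemma~\ref{Mixture} (specialized to $\alpha\equiv 1$), and your boundary limits via leading-order expansions of $Z(v)$ match the paper's polynomial-expansion argument in the Appendix. Your explicit flagging of the tacit non-degeneracy assumptions ($a(0)=0$ and positivity of the weights at the levels adjacent to $0$ and $k^*$) is a legitimate observation; the paper relies on these silently.
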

In the special case, $a(x)\equiv x$ and $k^* =N$, one gets the formula
\begin{equation}\label{Formula2}
\eta_H(v)= \frac{{\rm Var}_\pi(\frac{\vert n\vert}{N})}{\bar p (1-\bar p)} \,N,
\end{equation}
where $\bar p = \sum_i \pi(n_i=1)/N$.

\subsection{Effective Hill coefficients}
The ratio $0\le f(v)/f(\infty)\le 1$ can be seen as a probability distribution function. Let $q\in [1/2,1]$, and consider the quantiles $v_q$ and $v_{1-q}$ defined as $q=f(v_q)/f(\infty)$ and $1-q=f(v_{1-q})/f(\infty)$. The Goldbeter-Koshland index, or the effective Hill coefficient, is defined by
\begin{equation}\label{Koshland}
I_q = \frac{2\ln(\frac{q}{1-q})}{\ln(\frac{v_q}{v_{1-q}})}.
\end{equation}
The standard definition corresponds to the special choice $q=0.9$, and the related index $I_{0.9}$ provides a measure of the dose difference one must consider to move $f(v)/f(\infty)$ from a low 10\% saturation level to a high 90\% saturation level. Steep curves have high $I_q$, or $v_q/v_{1-q}$ close to 1. The Hill coefficient and the effective Hill coefficient are related as follow. Let $\bar \eta_H$ be such that $\eta_H(v)=\bar\eta_H(\ln(v))$. Then
\begin{equation}\label{Relation}
I_q = \frac{1}{\ln(v_q)-\ln(v_{1-q})}\int_{\ln(v_{1-q})}^{\ln(v_q)}\bar\eta_H(y)\,{\rm d}y,
\end{equation}
see e.g.~\cite{MazzaBenaim} for more details.

\section{Influence functions and sharp-thresholds}

As seen above, Hill coefficients $\eta_H(v)$ and their effective versions $I_q$ are used to measure the steepness of binding curves in biological problems. Efficient genetic switches occur when the binding curve switches abruptly from a low saturation level to a high saturation level within a small concentration interval at the log scale. Similar switches occur in many frameworks of applied probability and statistical mechanics, like reliability theory, random graph theory and percolation theory, where sharp-threshold phenomena are common place. A well developed theory to study such phenomena exists, see, e.g., \cite{ben1990,bourgain1992,grimmett2003,grimmett2006,friedgut2004}; we will make explicit links between these fields in what follows. These results give general conditions ensuring the emergence of ultrasensitivity in systems biology. 

\subsection{Site-specific Hill coefficients and conditional influences}

Site-specific Hill coefficients $\eta_{H,i}(v)$ are defined to measure the effect of the binding of a molecule at site $i$ on the binding of molecules at sites different from $i$, see \cite{dicera} and \cite{MazzaBenaim}. More formally,
\begin{equation}\label{specificHill}
\eta_{H,i}(v)=1+\E_\pi(\bar n_i\vert n_i=1)-\E_\pi(\bar n_i\vert n_i=0),
\end{equation}
where $\E_\pi(\cdot\vert n_i=\varepsilon)$ is the conditional expectation
under the probability measure $\pi$ conditional to the event $\{n_i=\varepsilon\}$, $\varepsilon =0,1$, and where $\bar n_i =\sum_{j\ne i}n_j$  is the ligation number at sites different from $i$. $\eta_{H,i}(v)$ gives thus the gain in bound molecules at site different from $i$ when adding a molecule at site $i$, and is larger than 1 for cooperative biochemical systems. 

Assume that $a(x)\equiv x$ and that $\mu(n)>0$, $\forall n$. Then, see \cite{MazzaBenaim},
$$\eta_H(v)=\frac{{\rm Var}_\pi(\vert n\vert)}{N\bar p (1-\bar p)}
=\frac{\sum_i p_i(1-p_i)\eta_{H,i}(v)}{N\bar p (1-\bar p)},$$
where $p_i = \E_\pi(n_i)=\pi(n_i=1)$ and $\bar p=\sum_i p_i/N$. We follow \cite{grimmett2006}; let $\mu$ be a positive probability measure on $\Lambda$, and define, for $0<p<1$, the new probability measure
\begin{equation}\label{SecondProbability}
\mu_p(n)=\frac{1}{Z_p}\mu(n)\prod_i \Big(p^{n_i}(1-p)^{1-n_i}\Big),
\end{equation}
where $Z_p$ is the normalization constant or partition function. Then $\mu_p$ coincides with $\pi$ when the concentration $v$ and $p$ are such that
 $v=p/(1-p)$. Let $A\subset \Lambda$ be a subset of $\Lambda$. The 
 {\bf conditional influence} $I_A(i)$ is defined as
 \begin{equation}\label{ConditionalInfluence}
 I_A(i)=\mu_p(A\vert n_i=1)-\mu_p(A\vert n_i=0),
 \end{equation}
 that is,
 $$I_A(i)=\E_{\mu_p}({\rm I}_A\vert n_i=1)-\E_{\mu_p}({\rm I}_A \vert n_i=0),$$
 where ${\rm I}_A$ is the indicator function of the subset $A$. One sees that
 $$\eta_{H,i}(v)=1+\sum_{j\ne i}I_{\{j\}}(i),$$
 when $v=p/(1-p)$. 
 An event $A\subset \Lambda$ is called increasing if $n\in A$
 whenever there exists $n'\in A$ such that
 $n\ge n'$. When $\mu_p$ is a product measure with
 $p_i\equiv p$, and $A$ is an increasing event, there exists an absolute positive constant $c$ such that, $\forall N$, $p\in (0,1)$, there exists
 $i\in [N] = \left\{1,\dots,N \right\}$ such that
 \begin{equation}\label{Inequality}
 I_A(i)\ge c \min\{\mu_p(A),1-\mu_p(A)\}\frac{\ln(N)}{N},
 \end{equation}
 see \cite{kahn1988,bourgain1992,friedgut1996,talagrand1994}.
 \subsection{Conditional influence functions and sharp-thresholds}
 The aim of the sharp-threshold theory is to give conditions ensuring that 
 the function $\mu_p(A)$ exhibits a sharp-threshold as $p$ varies within a small interval of values of $p$ of size $1/\ln(N)$. Such conditions are obtained using a Russo-type formula (see \cite{grimmett2003,grimmett2006}) of the form
 \begin{equation}\label{Russo}
 \frac{{\rm d}\mu_p(A)}{{\rm d}p}=\frac{1}{p(1-p)}{\rm Cov}_{\mu_p}({\rm I}_A,\vert n\vert),
 \end{equation}
 which is similar to our equation (\ref{Formula1}). A direct computation shows that (\ref{Russo}) is a special instance of (\ref{Formula1})
 for the special activity function $a(x)={\rm I}_A(x)$. 
 We follow next \cite{grimmett2006} to introduce various probabilistic notions and 
 a powerful Theorem that yields results on sharp-thresholds.
 For $J\subset S$ and $\xi\in\Lambda$, let $\Lambda_J =\{0,1\}^J$ and
 $$\Lambda_J^\xi = \{n\in\Lambda;\ n_j =\xi_j\hbox{ for }j\in S\setminus J\}.$$
 The set of all subsets of $\Lambda_J$ is denoted by ${\cal F}_J$. 
 Let $\mu$ be a positive probability measure on $(\Lambda,{\cal F}_S)$. The conditional probability measure $\mu_J^\xi$ on $(\Lambda_J,{\cal F}_J)$
 is defined by
 $$\mu_J^\xi(n_J)=\mu(n_J\vert n_i =\xi_i \hbox{ for }i\in S\setminus J),\ n_J\in \Lambda_J.$$
The probability measure $\mu$ is said to be {\it monotonic} when, for all
$J\subset S$, all increasing subsets $A\subset \Lambda_J$, and all $\xi\in\Lambda$,
\begin{equation}\label{monotone}
\mu_J^\xi(A)\le \mu_J^\eta(A) \hbox{ whenever }\xi \le \eta.
\end{equation}
It turns out that $\mu$ is monotonic if and only if it is {\it 1-monotonic}, that is, if (\ref{monotone}) holds for all singleton sets $J$.  The following result from 
\cite{grimmett2006} is very useful to obtain results on sharp-thresholds:
\begin{theorem}\label{ThSharp}
There exists a positive constant $c$ such that the following holds. Let $A\in {\cal F}_S$ be an increasing event. Assume that $\mu_p$ is monotonic for all $p$. If there exists a subgroup of the permutation group of $N$ elements that acts transitively on $S$ leaving both $\mu$ and $A$ invariants, then
\begin{equation}\label{inequality}
 \frac{{\rm d}\mu_p(A)}{{\rm d}p}\ge \frac{c \alpha_p}{p(1-p)}
 \min\{\mu_p(A),1-\mu_p(A)\}\ln(N),
 \end{equation}
 where
 $\alpha_p = \mu_p(n_i)(1-\mu_p(n_i))$.
 \end{theorem}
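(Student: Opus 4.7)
The plan is to combine the Russo-type identity (\ref{Russo}) with a strengthened version of the influence inequality (\ref{Inequality}) that holds for monotonic (not only product) measures, and then to use the transitively acting symmetry group to upgrade a bound on a single influence to the bound claimed for $\mathrm{d}\mu_p(A)/\mathrm{d}p$.

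First I would rewrite the covariance in Russo's identity in terms of the conditional influences $I_A(i)$ of (\ref{ConditionalInfluence}). Expanding $|n|=\sum_i n_i$ and conditioning site by site,
\begin{equation*}
{\rm Cov}_{\mu_p}({\rm I}_A,|n|)=\sum_i\mu_p(n_i=1)\mu_p(n_i=0)\,I_A(i),
\end{equation*}
since each summand equals $\mu_p(n_i=1)\mu_p(n_i=0)[\mu_p(A|n_i=1)-\mu_p(A|n_i=0)]$. Substituting into (\ref{Russo}) gives a representation of the derivative as a $\mu_p$-weighted sum of influences. Next I would exploit the hypothesis: since the product factor $\prod_i p^{n_i}(1-p)^{1-n_i}$ in (\ref{SecondProbability}) depends on $n$ only through $|n|$, invariance of $\mu$ under the transitive group $G$ entails invariance of $\mu_p$ under $G$; together with the $G$-invariance of $A$, transitivity of $G$ on $S$ forces both $\mu_p(n_i=1)$ and $I_A(i)$ to be independent of $i\in S$. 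Denoting these common values by $\bar p$ and $\bar I_A$ and writing $\alpha_p=\bar p(1-\bar p)$, the previous step collapses to
\begin{equation*}
\frac{{\rm d}\mu_p(A)}{{\rm d}p}=\frac{\alpha_p\,N\,\bar I_A}{p(1-p)}.
\end{equation*}

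The substantive analytic step is to lower bound $\bar I_A$. For product measures, the Kahn--Kalai--Linial / BKKKL inequality (\ref{Inequality}) already yields some $i$ with $I_A(i)\ge c\min\{\mu_p(A),1-\mu_p(A)\}\ln(N)/N$. The role of the monotonicity hypothesis is precisely to allow this inequality to be pushed through to non-product $\mu_p$, as established in \cite{grimmett2006} by coupling the conditioned measures $\mu_{p,J}^{\xi}$ with suitable product measures (using $1$-monotonicity and an FKG-type stochastic comparison) and then invoking the discrete harmonic-analytic argument underlying (\ref{Inequality}). Under transitivity of $G$ one has $\bar I_A=\max_i I_A(i)$, so this produces $\bar I_A\ge c\min\{\mu_p(A),1-\mu_p(A)\}\ln(N)/N$; inserted into the displayed formula above, this yields (\ref{inequality}).

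The main obstacle is clearly the third step: extending the BKKKL-type influence inequality from product measures, where it is classical, to general monotonic $\mu_p$. This is the substantive content of the theorem and requires the full machinery of \cite{grimmett2006}; the first two steps (the Russo-type expansion and the symmetry reduction via the transitive group action) are then essentially mechanical, and merely serve to reformulate the derivative of $\mu_p(A)$ in a shape where the influence bound can be applied and the per-site factor $\alpha_p$ emerges naturally.
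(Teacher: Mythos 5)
The paper does not actually prove Theorem~\ref{ThSharp}: it is imported verbatim from \cite{grimmett2006}, so there is no in-paper argument to compare yours against. That said, your reduction is correct and is precisely how the result is obtained in that reference: Russo's formula~(\ref{Russo}), the identity ${\rm Cov}_{\mu_p}({\rm I}_A,\vert n\vert)=\sum_i\mu_p(n_i=1)\,\mu_p(n_i=0)\,I_A(i)$ (your site-by-site computation checks out), the observation that invariance of $\mu$ and $A$ under a transitively acting group forces all $\mu_p(n_i=1)$ and all $I_A(i)$ to coincide so the derivative equals $N\alpha_p\bar I_A/(p(1-p))$, and finally the influence lower bound $\bar I_A\ge c\min\{\mu_p(A),1-\mu_p(A)\}\ln(N)/N$. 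You correctly flag that the entire weight of the theorem sits in that last bound --- the extension of the BKKKL-type influence inequality from product measures to monotonic measures via coupling of the conditioned measures --- which you cite rather than prove; as a self-contained proof this is therefore incomplete at exactly the hard step, but since the paper itself cites the whole statement without proof, your account is if anything more detailed than the paper's treatment.
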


  This implies that, for $0<\varepsilon <1/2$, the function $f(p)=\mu_p(A)$ increases from 
 $\varepsilon$ to $1-\varepsilon$ over an interval of values of $p$
 with length smaller in order than $1/\ln(N)$, which is precisely a sharp-threshold, which implies that the quantiles $v_q$ and $v_{1-q}$ with
 $q=1-\varepsilon$ are such that $v_{q}-v_{1-q}\le 1/\ln(N)$ leading to an ultrasensitive behaviour.
 
 \begin{example}\label{BoltzmannMachine}
 A basic model which describes interactions between binding sites is the Boltzmann machine model or the {\bf Ising model}. Consider the free energy function
 \begin{equation}\label{BoltzmannMachine}
H(n) = -\sum_{i\ne j}J_{ij}n_i n_j -\sum_i h_i n_i,
\end{equation}
where the coefficients $J_{ij}=J_{ji}$ model pairwise interactions, and where the parameters $h_i$
are local field. One then defines the related Gibbs distribution
$$\pi_\beta(n)=\frac{1}{Z(\beta)}\exp(-\beta H(n)),$$
where $\beta > 0$ is the inverse temperature. Such models
appear in systems biology when modelling transcription rates,
see, e.g., \cite{hwa2003,MazzaBenaim} and the references therein.
Usually, it is defined using a graph of node set 
$S=\{1,\cdots,N\}$, and of edge set
${\cal E}=\{e=(i,j);\ J_{ij}\ne 0\}$. 
The model is said to be ferro-magnetic  when
$J_{ij}\ge 0$. In this situation,  assuming that  $h_i\equiv 0$ and $v>1$, the most probable
configuration is the fully occupied one with $n_i\equiv 1$.
The Gibbs distribution corresponds to $\mu_p$ when $\exp(\beta h_i)\equiv v=p/(1-p)$ and $\mu(n)=\exp(\beta \sum_{i\ne j}J_{ij}n_i n_j)$. Any event of the form $A=\{n\in\Lambda;\ \vert n\vert > \theta\}$ is increasing.
These events are often used to define promoter activities, and thus can be used to model transcriptional activity.
 The previous results show the existence of a sharp-threshold phenomenon for the probability
$\pi_\beta (\vert n\vert > \theta)$.
\end{example}

\section{Density dependent birth and death processes}

We will concentrate on probability distribution $\bar\pi_N$ on $\{0,1,\cdots,N\}$ which are steady state distributions of density dependent birth and death processes. The classical biochemical literature uses a slightly different language. We present first the basic framework in this setting following \cite{enciso2013}, and use next a more probabilistic approach. Let $c_k(t)$ denote the concentration of molecules with exactly $k$ modified sites for phosphorylation processes, or, with exactly $k$ bound ligand molecules for binding processes. The time evolution of these concentrations is often assumed to be of the form
\begin{equation}\label{Standard}
\frac{{\rm d}c_k}{{\rm d}t}= b_{k-1} c_{k-1} -d_k c_k -b_{k} c_k +d_{k+1}c_{k+1},
\end{equation}
where $b_k$ depends linearly on the inducer concentration $v$. In the above equation it is assumed that $c_k$ turns into $c_{k+1}$ at the linear rate $b_k c_k$ and that $c_{k+1}$ turns back into $c_k$ at rate $d_{k+1}c_{k+1}$. This equation can be seen as the Kolmogorov forward equation associated with a birth and death process $Y_N(t)$ of birth rate $q_N(k,k+1)= b_k$ and death rate $q_N(k,k-1)=d_k$ (see e.g.~\cite{MazzaBenaim}),. This means that, for $h$ small,
$$\Prob\left(Y_N(t+h)=k+1\mid Y_N(t)=k\right)\sim b_k h,$$
$$\Prob\left( Y_N(t+h)=k-1\mid Y_N(t)=k\right)\sim d_k h.$$
Such processes are said to be {\bf density dependent} when furthermore
$$q_N(k,k+1)=N b^{(N)}(\tfrac{k}{N})\quad\hbox{ and }\quad q_N(k,k-1)=N d^{(N)}(\tfrac{k}{N}),$$
for some functions $b^{(N)}$ and $d^{(N)}$. Here we concentrate on functions $b^{(N)}$ and $d^{(N)}$, the birth and death rate respectively, given by two Lipschitz-continuous functions on $[0,1]$, such that $b^{(N)}>0$ and $d^{(N)}>0$ on $]0,1[\,$, and $b^{(N)}(1)=d^{(N)}(0)=0$. The steady state distribution $\bar\pi_N$ is then given by
\begin{equation}\label{Steady}
\bar\pi_N(\frac{k}{N}) = \bar\pi_N(0)\prod_{j=0}^{k-1}\frac{b^{(N)}(\frac{j}{N})}{d^{(N)}(\frac{j+1}{N})}\,, \qquad k=1,\ldots,N,
\end{equation}
with
\begin{equation}\label{Steady:at:0}
\bar\pi_N(0)^{-1}=\sum_{k=0}^N \prod_{j=0}^{k-1}\frac{b^{(N)}(\frac{j}{N})}{d^{(N)}(\frac{j+1}{N})}\,.
\end{equation}
We assume furthermore that $b^{(N)}$ depends linearly on the inducer concentration $v$, so that the steady state distribution $\bar\pi_N$ has the form defined in~\eqref{Image0}. 
Assume for simplicity that
$$b^{(N)}-d^{(N)}\xrightarrow[N\to\infty]{} F \quad\hbox{ and } \quad\ln\left(\frac{d^{(N)}}{b^{(N)}}\right)\xrightarrow[N\to\infty]{} \ln(r),$$ for some well-behaved functions $F$ and $r$. One can check that the renormalized process $X_N(t)=Y_N(t)/N$ converges as $N\to\infty$ towards the orbits of the ordinary differential equation (o.d.e.)
\begin{equation}\label{ODE}
\frac{\,{\rm d}x(t)}{\,{\rm d}t}=F(x(t)),\ \ x(0)=x_0.
\end{equation}
when $X_N(0)\longrightarrow x_0$, as $N\to\infty$. The free energy function $J$ and the entropy function $I$ are defined by 
\begin{equation}\label{FreeEnergy}
J(x)=\int_0^x \ln(r(u)){\rm d}u \quad\hbox{ and }\quad I(x)=J(x)-J_0,
\end{equation}
where $J_0 =\min_{x\in [0,1]}J(x)$.
\cite{chan1998largedeviation} proved that the family of steady state distributions
$\bar\pi_N$ satisfies a large deviation principle of rate function $I$.
We will obtain precise large deviations by showing that, under some assumptions, one can find positive constants $0<\gamma_- <\gamma_+$ such that
\begin{equation}\label{Precise}
\frac{\gamma_- }{\sqrt{N}}\exp(-N I(\frac{j}{N}))
\le \bar\pi_N(\frac{j}{N})\le 
\frac{\gamma_+ }{\sqrt{N}}\exp(-N I(\frac{j}{N})),
\end{equation}
showing that the steady state distribution of the process concentrates
asymptotically in the neighbourhood of the global minima of the entropy function $I$ (see Lemma~\ref{lemma:pi(j/N)} in the Appendix).
 When $a(x)\equiv x$, the Hill coefficient associated with the steady state distribution $\bar\pi_N$ of the process is given by
 \begin{equation}\label{Hill}
\eta_H(v)= \frac{{\rm Var}_{\bar\pi_N}(X_N)}{\langle X_N \rangle_{\bar\pi_N}(1-\langle X_N \rangle_{\bar\pi_N})}\, N,
\end{equation}
The systems will be hence highly ultrasensitive with a Hill coefficient of order $N$ when the steady state variance ${\rm Var}(X_N)$ converges towards a constant. Similarly, for general activity functions $a(x)$, the Hill coefficient is given by the covariance
\begin{equation}\label{Formula1:BD}
\eta_H(v)=\frac{{\rm Cov}_{\bar\pi_N}(X_N,a(X_N)) a(\frac{k^*}{N})}{f(v)(a(\frac{k^*}{N})-f(v))}\, N.
\end{equation}

\subsection{Sharp threshold}
 When the sites are identical, $\bar\pi_N(\tfrac{\vert n\vert}{N}) = \bar \pi(\vert n \vert) = {N \choose {\vert n \vert}}\pi(n)$ (see (\ref{Image0}) and (\ref{Image})) and the related probability measure is monotonic when
\begin{equation}\label{eqn:bd:monotonic}
\frac{k+2}{N-k-1}d^{(N)}\lp k+1 \rp b^{(N)}\lp k+1 \rp \ > \ \frac{k+1}{N-k} d^{(N)}\lp k+2 \rp b^{(N)} \lp k \rp,
\end{equation}
for all $k=0,..,N-2$. 
Section \ref{s.examples} provides three biological examples where the underlying steady state distribution $\pi$ is monotonic. In these examples, the limiting 
(o.d.e.) (\ref{ODE}) possesses two stable equilibria $0<x_1<x_2<1$. Moreover, we will show that there exists a critical concentration $v_c$ such that
the entropy function $I$ attains its global minimum at $x_1$ when $v<v_c$, at
$x_2$ when $v>v_c$, and at $x_1$ and $x_2$ when $v=v_c$. Consider the increasing
event $A=\{n;\ \vert n\vert > \kappa N\}$, for some positive threshold $\kappa > 0$.
The conditions of Theorem \ref{ThSharp} are satisfied, and one gets that the derivative is larger than a constant times $\ln(N)$ when the pre-factor
$\min\{\pi(A),1-\pi(A)\}$ which appears in the inequality (\ref{inequality}) is asymptotically positive. As stated previously, the steady state satisfies a large deviation principle of rate function $I$, so that the pre-factor vanishes asymptotically exponentially fast when either $\kappa < x_1$ or $\kappa > x_2$.
In the intermediate situation where $x_1<\kappa<x_2$, the behaviour of the factor depends on the concentration $v$. If $v<v_c$, $I$ attains its minimum at $x_1$, so that $\pi(A)$ converges exponentially fast toward 0. When $v>v_c$, $I$ attains its minimum at $x_2$, and $\pi(A)$ converges exponentially fast towards 1. When however $v=v_c$, both probabilities $\pi(A)$ and $1-\pi(A)$ are asymptotically positive, and (\ref{inequality}) yields that the system exhibits a sharp-threshold phenomenon in the neighbourhood of $v=v_c$.

\subsection{Multi-stability and ultrasensitivity}

 When the (o.d.e) (\ref{ODE}) possesses a single stable equilibrium $x_1\in (0,1)$, the law of the stationary process $X_N$ is concentrated around $x_1$ when $N$ is large, and the variance converges toward 0. This is the monostable case, which does not lead to ultrasensitivity with a Hill coefficient of order $N$ since, from (\ref{Hill}), $\eta_H(v)/N\longrightarrow 0$. When the system is multi-stable, that is, when the (o.d.e.) has at least two different stable equilibria, the steady state distribution $\bar\pi_N$ concentrates on any neighborhood of a stable equilibrium $x_i$ when the entropy function $I$ attains a global minimum at $x_i$, with $I(x_i)=0$. When this occurs for at least two different stable equilibria $x_1 < x_2$, the variance ${\rm Var}_{\bar\pi_N}(X_N)$ is positive in the large $N$ limit, so that, from (\ref{Hill}), 
$$\liminf_{N\to\infty}\frac{\eta_H(v_c)}{N}\neq0,$$
 see Fig.~\ref{fig:1} and Fig.~\ref{fig:hillEnciso}. The above picture is a consequence of Theorem \ref{AsymptoticHill}, which is obtained under some assumptions on the birth and death process.
\begin{assump}\label{assump:r}
\begin{enumerate}
\item The (o.d.e.) (\ref{ODE}) possesses a finite set of $m>1$ equilibria $\lac x_{1},\ldots,x_{m}\rac$, such that $I(x_i)=0$, $I'(x_i)=0$ and $I''>0$ in a neighborhood of each $x_i$ for all $i$.
\item There exists a function $r$ on $[0,1]$ such that
\begin{equation}\label{Condition1}
\int_0^1 \vert \ln(r(u)) \vert \,{\rm d}u < \infty.
\end{equation}
\item The limiting rates $b(x)$ and $d(x)$ satisfy
\begin{eqnarray*}\label{eq:condition:bx:dx}
\lim_{x\to 0+}\frac{b(x)}{1-x} = \ell_{b,0}, & \quad & \lim_{x\to 0+}\frac{d(x)}{x} = \ell_{d,0},\\
\lim_{x\to 1-}\frac{b(x)}{1-x} = \ell_{b,1}, & \quad & \lim_{x\to 1-}\frac{d(x)}{x} = \ell_{d,1}	
\end{eqnarray*}
where $\ell_{b,0}, \ell_{b,1}, \ell_{d,0}, \ell_{d,1}$ are four positive constants.
\item
The function $r$ is left-continuous in $]0,1[$ and piecewise ${\cal C}^2$. Namely, there exists a finite number of discontinuities $\lac d_{1}, \ldots,d_{K}\rac$ of $r, r' $ and $r''$ in $]0,1[$.
\item There exists a finite number $M$ such that
\begin{equation}\label{NewCondition}
M=\limsup_{N\to\infty} \max_{1\leq j\leq N}\left\vert \sum_{k=1}^j \ln\left(\frac{d^{(N)}(\frac{k}{N})}{b^{(N)}(\frac{k}{N})}\right)-\sum_{k=1}^j \ln(r(\tfrac{k}{N}))\right\vert.
\end{equation}
\end{enumerate}
\end{assump}

\begin{theorem}\label{AsymptoticHill}
Suppose that Assumptions~\ref{assump:r} are satisfied and assume that the activity function $a$ is continuous, bounded and strictly increasing on $[0,1]$. Assume furthermore that $\bar\pi_N$ is of the form given in~\eqref{BasicProbability}. 
\begin{enumerate}
\item Case $m=1$. If the entropy function $I$ attains a minimum at a unique point $0<x_1<\lim \frac{k^*}{N}$, then
$$\frac{\eta_H(v)}{N}\xrightarrow[N\to\infty]{} 0.$$

\item Case $m>1$. If there is a critical concentration $v_c$ such that $I$ attains its minimum at, at least, two different points $x_1\ne x_2$ of the unit interval, then
$$\liminf_{N\to\infty}\frac{\eta_H(v_c)}{N} > 0,$$
and the system exhibits ultrasensitivity of order $N$.
\end{enumerate}
\end{theorem}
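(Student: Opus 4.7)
My starting point is the covariance formula~\eqref{Formula1:BD}, rewritten as
$$\frac{\eta_H(v)}{N}=\frac{{\rm Cov}_{\bar\pi_N}(X_N,a(X_N))\,a(k^*/N)}{f(v)\,(a(k^*/N)-f(v))}.$$
Setting $A:=\lim_{N\to\infty}a(k^*/N)$, which is bounded by assumption, the task reduces to extracting the large-$N$ behaviour of the numerator and denominator from the precise pointwise bounds~\eqref{Precise} on $\bar\pi_N$ (Lemma~\ref{lemma:pi(j/N)}). Since $a$ is strictly increasing, $f(v)=\E_{\bar\pi_N}[a(X_N)]$ lies in $(a(0),a(k^*/N))$, so I will also need to control $A-f(v)$ from below.

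\textbf{Case $m=1$.} Because $I$ is continuous on $[0,1]$ with a unique zero at $x_1$, for every $\varepsilon>0$ one has $\delta_\varepsilon:=\min_{|x-x_1|\ge\varepsilon}I(x)>0$, and summing the upper bound of~\eqref{Precise} over $\{j/N:|j/N-x_1|\ge\varepsilon\}$ gives
$$\bar\pi_N(|X_N-x_1|>\varepsilon)\,\le\,\gamma_+\sqrt N\,\exp(-N\delta_\varepsilon)\xrightarrow[N\to\infty]{}0.$$
Hence $X_N\to x_1$ in probability; since $X_N\in[0,1]$ and $a$ is continuous and bounded, dominated convergence gives ${\rm Var}_{\bar\pi_N}(X_N)\to 0$ and ${\rm Var}_{\bar\pi_N}(a(X_N))\to 0$, and then ${\rm Cov}_{\bar\pi_N}(X_N,a(X_N))\to 0$ by Cauchy--Schwarz. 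Also $f(v)\to a(x_1)$, and the hypothesis $x_1<\lim k^*/N$ together with strict monotonicity of $a$ makes the denominator converge to $a(x_1)(A-a(x_1))>0$. Therefore $\eta_H(v)/N\to 0$.

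\textbf{Case $m>1$ at $v=v_c$.} I again combine both directions of~\eqref{Precise}. The upper bound shows that the mass outside any neighbourhood $B_\varepsilon(x_1)\cup B_\varepsilon(x_2)$ vanishes exponentially in $N$. Near each global minimum $x_i$, Assumption~\ref{assump:r}(1) allows the quadratic expansion $I(x_i+y)=\tfrac12 I''(x_i)y^2+o(y^2)$ with $I''(x_i)>0$. Inserting this into the lower bound of~\eqref{Precise} and estimating the resulting lattice sum by a Gaussian integral yields
$$\bar\pi_N(|X_N-x_i|<\varepsilon)\,\ge\,\frac{\gamma_-}{\sqrt N}\sum_{j:\,|j/N-x_i|<\varepsilon} e^{-\tfrac{N}{2}\,I''(x_i)\,(j/N-x_i)^2\,(1+o(1))}\longrightarrow c_i>0,$$
with $c_i=\gamma_-\sqrt{2\pi/I''(x_i)}$. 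Consequently every subsequential weak limit of the law of $X_N$ has the form $p\,\delta_{x_1}+(1-p)\,\delta_{x_2}$ with $p\in[c_1,1-c_2]\subset(0,1)$. On any such subsequence a direct computation gives
$${\rm Cov}_{\bar\pi_N}(X_N,a(X_N))\longrightarrow p(1-p)(x_2-x_1)\bigl(a(x_2)-a(x_1)\bigr)>0,$$
the strict positivity following from $x_1\ne x_2$ and strict monotonicity of $a$. The denominator converges to $[p\,a(x_1)+(1-p)a(x_2)][A-p\,a(x_1)-(1-p)a(x_2)]$, which is bounded away from $0$ since $p\in(0,1)$ and $a(x_1)<a(x_2)\le A$. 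Taking infima over subsequential limits yields $\liminf_{N\to\infty}\eta_H(v_c)/N>0$.

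\textbf{Main difficulty.} The delicate step is Case~2: the two-sided bounds in~\eqref{Precise} differ by a multiplicative factor, so no clean limiting law for $X_N$ need exist. What is really needed is a \emph{uniform} lower bound on the mass placed by $\bar\pi_N$ near each global minimum $x_i$, simultaneously with exponential decay of the mass elsewhere. This is precisely what the precise LDP of Lemma~\ref{lemma:pi(j/N)}---built on the control~\eqref{NewCondition} of Assumption~\ref{assump:r}(5) and the piecewise $\mathcal C^2$ regularity of $r$---delivers, via a Laplace/Riemann-sum reduction around each $x_i$.
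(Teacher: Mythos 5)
Your proof is correct and follows essentially the same route as the paper: reduce via the covariance formula~\eqref{Formula1:BD}, use the precise large-deviation bounds of Lemma~\ref{lemma:pi(j/N)} to show that $\bar\pi_N$ concentrates on the global minima of $I$ with uniformly positive mass at each (the content of Lemma~\ref{lemma:pi:charges:equilibr} and Corollary~\ref{cor:WeakConvergence}), and then verify that the limiting covariance vanishes in the monostable case and is strictly positive for a nondegenerate Dirac mixture. The only differences are cosmetic: you work with subsequential limits (arguably more careful, since the two-sided bounds do not pin down the mixture weights) and compute the explicit two-point covariance $p(1-p)(x_2-x_1)(a(x_2)-a(x_1))$, whereas the paper asserts a weak limit $\sum_i c_i\delta_{x_i}$ and proves positivity for general $m$ via a rearrangement-type inequality, which you would need to invoke if more than two global minima coexist.
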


The general idea of the proof, whose details are postpone to the Appendix, is to compute precise large deviations of the steady state measure $\bar\pi_N$ in order to establish that it charges all global minima of the entropy function $I(x)$ (see Lemma~\ref{lemma:pi:charges:equilibr}), so that it converges weakly to a combination of Dirac measures (see Corollary~\ref{cor:WeakConvergence}).

We illustrate in the next sections using well chosen examples that ultrasensitivity usually occurs for a critical concentration $v_c$: when $v\ne v_c$, the multi-stable system has only one equilibrium minimizing $I$, so that the steady state distribution is asymptotically unimodal with a low Hill coefficient, while when $v=v_c$, the steady state distribution has asymptotically two modes which correspond to the two equilibria of the (o.d.e.), with a large Hill coefficient.

\begin{figure}
\centering
\includegraphics{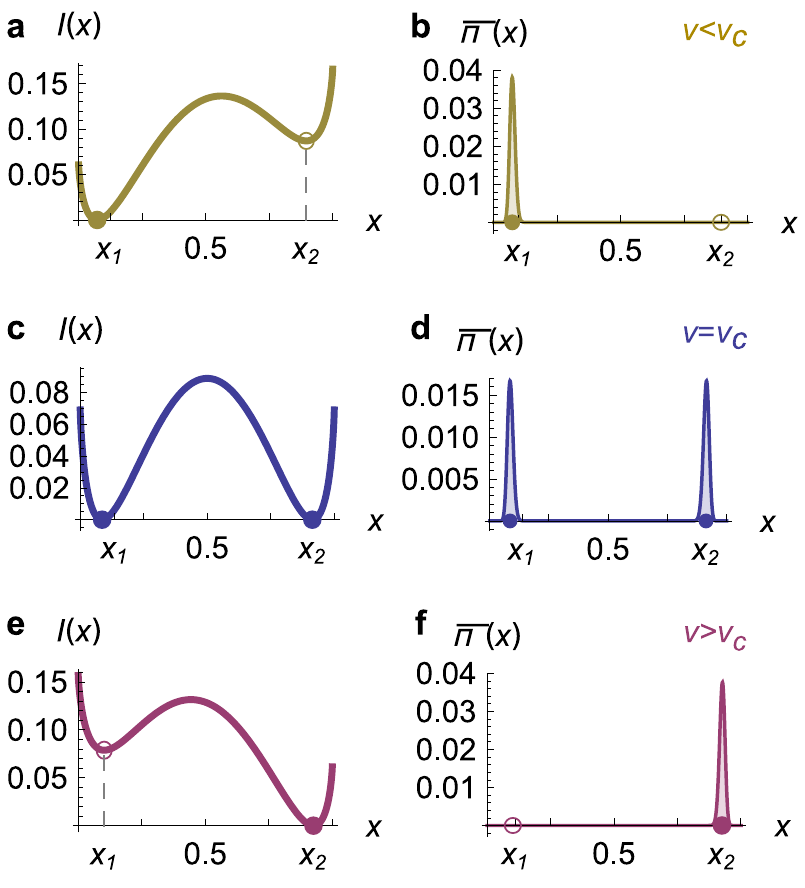}
\caption{Bistable systems lead to bimodal steady state distributions and ultrasensitivity when $I$ possesses two global minima. (\textbf{a}) The stable equilibria $x_1$ and $x_2$ of a bistable system are the local minima of the entropy function $I$. (\textbf{b}) When $v < v_c$, the unique equilibrium minimizing $I$ is the mode of the related unimodal steady state distribution. (\textbf{c}) When $v=v_c$, both equilibria minimize $I$. (\textbf{d}) The steady state distribution is bimodal of modes $x_1$ and $x_2$. In this last case, the systems exhibits ultrasensitivity, with a Hill coefficient $\eta_H(v_c)$ linear in the number $N$ of binding sites. (\textbf{e}) When $v>v_c$ the unique equilibrium minimizing $I$ is $x_2>x_1$. (\textbf{f}) The steady state distribution is unimodal. }\label{fig:1}
\end{figure}

\begin{figure}[h!]
\centering
\includegraphics{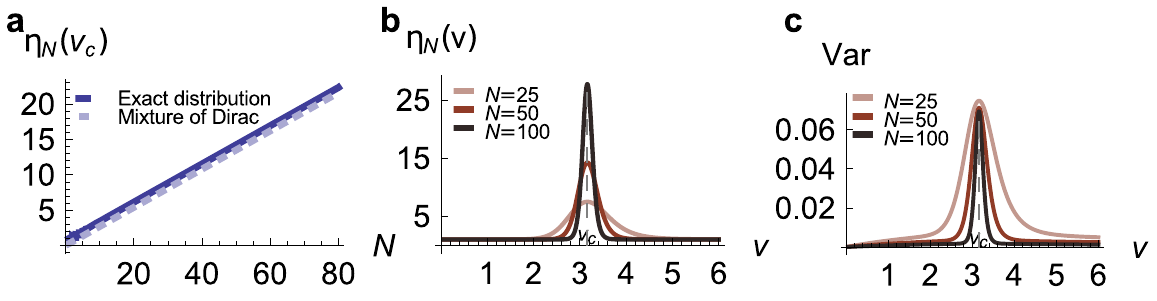}
\caption{Hill coefficient is maximal for $v=v_c$ and the system is ultrasensitiv. (\textbf{a}) Hill coefficient versus $N$ for $\varepsilon = 0.1$ and $a(x) = \frac{x}{1+x}$. The continuous blue curve denotes the true value of the Hill coefficient. The dotted line plots the Hill coefficient obtained by simulating a mixture of Dirac random variables $\mu = \frac{1}{2} \delta_{x_1} + \frac{1}{2} \delta_{x_2}$. (\textbf{b}) The Hill coefficient $\eta_h(v)$ is represented for different values of $v$. It attains a maximum for $v=v_c$. (\textbf{c}) The variance $\textrm{Var}_{\bar\pi_N}(X_N)$ is represented for different values of $N$. It is maximal for $v=v_c$ and concentrates on this value as $N$ grows.}\label{fig:hillEnciso}
\end{figure}

\section{Illustrations from systems biology\label{s.examples}}

In the following sections we illustrate how our results apply on different models from systems biology. Allosteric phosphorylation processes~\cite{EncisoKelloggVargas2014}, substrate-catalyst interactions~\cite{Hatakeyama:2014aa} as well as nucleosome-mediated cooperativity~\cite{mirny2010} are investigated.

\subsection{Allosteric phosphorylation processes\label{s.allosteric}}
Let us consider the model proposed in \cite{EncisoKelloggVargas2014} where the protein is either active (A) or inactive (I), and has $N$ sites that can be phosphorylated. The transition rates are given in Fig.~\ref{fig:strip}, see also \cite{tenWolde} where the reason for taking $\varepsilon^k$ in the switching rates is motivated using free energies.
This is an adaptation of the classical Monod-Wyman-Changeux (MWC) model \cite{monod1965} which is one of the first model where ultrasensitivity was considered. We follow \cite{enciso2013,ryerson2014ultrasensitivity,EncisoKelloggVargas2014} using a probabilistic framework. 
 
Let $W(t)$ be the Markov chain associated with the protein activity ($W(t)\in\lac I, A\rac$). The number of phosphorylated sites at time $t$ is described by a process $N(t)$, so that the full process is given by a bivariate time-continuous Markov chain $(N(t),W(t))$. The authors of \cite{enciso2013,ryerson2014ultrasensitivity,EncisoKelloggVargas2014} opt for Markov chains of transition rates
$$q_N((k,A),(k+1,A))=(N-k)v\quad \text{ and }\quad q_N((k,A),(k-1,A))=k,$$
in the active state, and similarly in the inactive state
$$q_N((k,I),(k+1,I))=\varepsilon(N-k)v\quad \text{ and }\quad q_N((k,I),(k-1,I))=k,$$
where the small parameter $\varepsilon < 1$ models the low affinity associated with the inactive state. The transition between the active and the inactive state are given by 
$$q_N((k,I),(k,A))=L_2\quad \text{ and }\quad q_N((k,A),(k,I))=L_1\varepsilon^k.$$
In some cases, the steady state distribution of $N(t)$ is explicit, as illustrated in following Proposition, which is proven in the Appendix.

\begin{proposition}\label{prop:mixture0}
Assume $L_2 =1$. The marginal distribution of $N(t)$ is then given by
\begin{equation}\label{Mixture0}
\bar\pi(k) =\frac{{N\choose k}(\frac{v}{1+v})^k (\frac{1}{1+v})^{N-k}}{1+(\frac{1+\varepsilon v}{1+v})^N L_1}
      +\frac{{N\choose k}(\frac{\varepsilon v}{1+\varepsilon v})^k (\frac{1}{1+\varepsilon v})^{N-k}}{1+(\frac{1+ v}{1+\varepsilon v})^N L_1^{-1}},
\end{equation}
which is a mixture of the Binomial distributions $\mathcal{B}(N,\frac{v}{1+v})$ and $\mathcal{B}(N,\frac{\varepsilon v}{1+\varepsilon v})$. 
\end{proposition}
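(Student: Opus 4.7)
The plan is to identify the full joint stationary distribution of the bivariate continuous-time Markov chain $(N(t),W(t))$ on $\{0,\ldots,N\}\times\{I,A\}$ and then sum over the activity variable $W$. Since the state space is finite and the chain is irreducible, its stationary law is unique, so I can guess a candidate and verify it by detailed balance. Motivated by the fact that, when restricted to a single activity layer, the process is an Ehrenfest-type birth-death chain whose stationary law is binomial, I would ansatz
$$\pi(k,A)=\alpha\binom{N}{k}v^{k},\qquad \pi(k,I)=\beta\binom{N}{k}(\varepsilon v)^{k},$$
with constants $\alpha,\beta>0$ to be determined, and check that this choice is reversible for the full chain.

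Next I verify the detailed-balance equations layer by layer. Within the active layer, $\pi(k,A)\,(N-k)v=\pi(k+1,A)\,(k+1)$ is immediate from $(N-k)\binom{N}{k}=(k+1)\binom{N}{k+1}$, and the same computation works in the inactive layer with rates $\varepsilon(N-k)v$ and $k$. The nontrivial step is the cross-layer balance
$$\pi(k,A)\,L_{1}\varepsilon^{k}=\pi(k,I)\,L_{2}.$$
Substituting the ansatz and using $L_{2}=1$, one gets $\alpha L_{1}(\varepsilon v)^{k}=\beta(\varepsilon v)^{k}$, which forces $\beta=\alpha L_{1}$ \emph{independently of} $k$. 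This is the entire point of the Monod-Wyman-Changeux factor $\varepsilon^{k}$: it exactly cancels the layer-dependent ratio $(v/(\varepsilon v))^{k}$ that would otherwise obstruct a product-form stationary measure, and it is what makes the two layer-wise balances simultaneously satisfiable.

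The remaining computation is routine algebra. Normalizing,
$$\alpha^{-1}=\sum_{k=0}^{N}\binom{N}{k}v^{k}+L_{1}\sum_{k=0}^{N}\binom{N}{k}(\varepsilon v)^{k}=(1+v)^{N}+L_{1}(1+\varepsilon v)^{N}.$$
The marginal is $\bar\pi(k)=\pi(k,A)+\pi(k,I)$. Dividing the first summand by $(1+v)^{N}$ (top and bottom) recognizes it as $\mathcal{B}(N,v/(1+v))$ weighted by $[1+L_{1}((1+\varepsilon v)/(1+v))^{N}]^{-1}$, and dividing the second summand by $L_{1}(1+\varepsilon v)^{N}$ gives $\mathcal{B}(N,\varepsilon v/(1+\varepsilon v))$ with complementary weight $[1+L_{1}^{-1}((1+v)/(1+\varepsilon v))^{N}]^{-1}$. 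These two pieces add up to exactly \eqref{Mixture0}.

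The only genuine obstacle is the compatibility check for the detailed-balance system: a priori, the layer-wise balances fix the form of $\pi(\cdot,A)$ and $\pi(\cdot,I)$ only up to independent constants, and then the $N+1$ cross-layer equations (one per $k$) over-determine those constants. As observed above, compatibility holds precisely because $L_{1}\varepsilon^{k}\cdot v^{k}=L_{1}\cdot(\varepsilon v)^{k}$, so the $k$-dependence cancels and one single constraint $\beta=\alpha L_{1}$ remains. Once detailed balance has been verified, uniqueness of the stationary distribution on the finite irreducible state space closes the argument without any further estimate.
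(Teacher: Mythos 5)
Your proof is correct and follows essentially the same route as the paper: both posit the weights $w(k,A)\propto {N\choose k}v^{k}$ and $w(k,I)\propto L_{1}{N\choose k}(\varepsilon v)^{k}$, verify a balance condition, and then normalize to obtain the stated binomial mixture. The only difference is that you check detailed balance edge by edge (establishing reversibility), whereas the paper verifies the global balance equation at each state; both checks succeed for the same reason you identify, namely that the factor $\varepsilon^{k}$ makes the cross-layer ratio independent of $k$.
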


\subsubsection{Ultrasensitivity in the Hill sense}

We suppose here that $L_1 = L_1(N)=\varepsilon^{-N/2}$ and $L_2\equiv 1$, as in \cite{ryerson2014ultrasensitivity,EncisoKelloggVargas2014}. Assume that the function $a$ is continuous, bounded and strictly increasing on the unit interval. We prove in the Appendix that when $v\neq v_c = 1/\sqrt{\varepsilon}$ and as $N\to\infty$, the Hill coefficient is asymptotically constant.
\begin{proposition}\label{asymptoticallostericHill}
Assume that the activity function $a$ is continuous, bounded and strictly increasing on the unit interval. When $v\neq v_c = 1/\sqrt{\varepsilon}$,
\begin{equation}\label{FormulaHillNonCrit}
\eta_H(v) \xrightarrow[N\to\infty]{} 
\begin{cases}
\frac{a(1)v a'\lp\frac{v}{1+v} \rp}{(1+v)^2 a\lp\frac{v}{1+v}\rp \lp a(1) - a\lp\frac{v}{1+v}\rp \rp} & \text{ if } v> v_c, \\
\frac{a(1)\varepsilon v a'\lp\frac{\varepsilon v}{1+\varepsilon v} \rp}{(1+\varepsilon v)^2 a\lp\frac{\varepsilon v}{1+\varepsilon v}\rp \lp a(1) - a\lp\frac{\varepsilon v}{1+ \varepsilon v}\rp \rp} & \text{ otherwise.}
\end{cases}
\end{equation}
When $v=v_c=1/\sqrt{\varepsilon}$, the asymptotic behavior of the Hill coefficient is given by
\begin{equation}\label{FormulaHill}
\eta_H(v_c)\sim C_{v_c}\, N,
\end{equation}
so that the system is ultrasensitive and where $C_{v_c}$ is a constant depending on $v_c$.
\end{proposition}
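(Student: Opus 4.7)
The plan is to combine the explicit mixture representation from Proposition~\ref{prop:mixture0} with the general covariance formula~(\ref{Formula1:BD}) for the Hill coefficient. Since every configuration has positive weight in this model, $k^{*}=N$ and $a(k^{*}/N)=a(1)$, so (\ref{Formula1:BD}) reads $\eta_H(v)=N\,a(1)\,\mathrm{Cov}_{\bar\pi_N}(a(X_N),X_N)/(f(v)(a(1)-f(v)))$. The strategy is to analyse each of the two binomial components of the mixture separately and see which one survives in the large $N$ limit depending on whether $v$ equals $v_c=1/\sqrt{\varepsilon}$ or not.

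Writing the weights in~(\ref{Mixture0}) as $w_1(N,v)=1/(1+\rho(v)^N)$ and $w_2(N,v)=1-w_1(N,v)$, where $\rho(v)=(1+\varepsilon v)/((1+v)\sqrt{\varepsilon})$, a direct check shows that $\rho(v_c)=1$ and that $v\mapsto\ln\rho(v)$ is strictly decreasing on $(0,\infty)$ (its derivative is $\varepsilon/(1+\varepsilon v)-1/(1+v)<0$ since $\varepsilon<1$). Hence $\rho(v)>1$ for $v<v_c$ and $\rho(v)<1$ for $v>v_c$. In the noncritical case, the surviving binomial has parameter $p=v/(1+v)$ if $v>v_c$, respectively $p=\varepsilon v/(1+\varepsilon v)$ if $v<v_c$, and the other binomial contributes an exponentially small $O(c(v)^N)$ correction ($c(v)<1$) to any moment of $X_N$ against a bounded function. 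Under the dominant $\mathcal{B}(N,p)$, one has $E[X_N]=p$ exactly, and a second-order Taylor expansion of $a$ around $p$ combined with the binomial moment bounds $\mathrm{Var}(X_N)=p(1-p)/N$ and $E[|X_N-p|^3]=O(N^{-3/2})$ yields
\begin{equation*}
f(v)=a(p)+O(N^{-1}), \qquad \mathrm{Cov}_{\bar\pi_N}(a(X_N),X_N)=\frac{a'(p)\,p(1-p)}{N}+O(N^{-3/2}).
\end{equation*}
Substituting into (\ref{Formula1:BD}) and using $p(1-p)=v/(1+v)^2$ (respectively $\varepsilon v/(1+\varepsilon v)^2$) produces the two branches of (\ref{FormulaHillNonCrit}) as $N\to\infty$.

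At the critical concentration, $\rho(v_c)=1$ gives $w_1=w_2=1/2$ exactly, so $\bar\pi_N=\tfrac12\mathcal{B}(N,p_1)+\tfrac12\mathcal{B}(N,p_2)$ with $p_1=1/(1+\sqrt{\varepsilon})$, $p_2=\sqrt{\varepsilon}/(1+\sqrt{\varepsilon})$, and $p_1+p_2=1$. Continuity and boundedness of $a$ together with the law of large numbers on each component give $E_{\bar\pi_N}[X_N]=1/2$, $E_{\bar\pi_N}[a(X_N)]\to(a(p_1)+a(p_2))/2$ and $E_{\bar\pi_N}[X_N a(X_N)]\to(p_1 a(p_1)+p_2 a(p_2))/2$. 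Using $p_1+p_2=1$, a short algebraic simplification yields
\begin{equation*}
\mathrm{Cov}_{\bar\pi_N}(a(X_N),X_N)\xrightarrow[N\to\infty]{}\frac{(p_1-p_2)(a(p_1)-a(p_2))}{4}>0,
\end{equation*}
where positivity uses strict monotonicity of $a$ and $\varepsilon<1$. Since this covariance tends to a positive \emph{finite} constant rather than vanishing like $1/N$, the prefactor $N$ in~(\ref{Formula1:BD}) produces $\eta_H(v_c)\sim C_{v_c} N$ with $C_{v_c}=a(1)(p_1-p_2)(a(p_1)-a(p_2))/(4\,f_{\infty}(a(1)-f_{\infty}))$ and $f_{\infty}=(a(p_1)+a(p_2))/2$. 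The main technical delicacy lies in the noncritical case: one must justify the second-order expansion of $a$ (so at least differentiability of $a$ at $p$ is required) and check that the third-moment remainder, once multiplied by $N$, is $o(1)$, so that only the leading coefficient $a'(p)p(1-p)$ survives; boundedness of $a$ and the exponentially small contribution of the subdominant component are precisely what make these two error terms negligible against the $1/N$ leading order of the covariance.
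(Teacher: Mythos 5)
Your proof is correct in substance, but it is organized quite differently from the paper's. The paper does not apply the single-measure covariance formula to the full mixture; instead it proves a dedicated mixture version (Lemma~\ref{Mixture}) in which the Hill coefficient splits into the two within-component covariances plus an extra drift term $v\,\alpha'(v)\,(\langle a\rangle_{\pi_1}-\langle a\rangle_{\pi_2})$, and then carries out a lengthy explicit bookkeeping of competing exponentials ($A,B,C,D$ in the appendix) to decide which terms survive. In the critical case the paper attributes the order-$N$ growth to $v_c\,\alpha'(v_c)\propto N$, whereas you attribute it to the non-vanishing limit of the full mixture covariance under the bimodal limit $\tfrac12\delta_{p_1}+\tfrac12\delta_{p_2}$. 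These are the same thing in disguise: one checks that $v\alpha'(v)=N\alpha(1-\alpha)(p_1-p_2)$, which is exactly $N$ times the between-component part of the mixture covariance. Your route is cleaner and connects directly to the philosophy of Theorem~\ref{AsymptoticHill}, and as a bonus your constant $C_{v_c}$ (with prefactor $\tfrac{1-\sqrt{\varepsilon}}{1+\sqrt{\varepsilon}}$) is the correct evaluation of the paper's own formula; the constant printed in the paper appears to have lost the factor $v_c=1/\sqrt{\varepsilon}$ when evaluating $v\alpha'(v)$ at $v_c$. This does not affect the statement, which only asserts linearity in $N$.

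Two points need to be made explicit for your argument to be airtight. First, formula (\ref{Formula1:BD}) is only valid for measures of the form~\eqref{BasicProbability} with $v$-independent weights; since the mixture weights $w_i(N,v)$ in~\eqref{Mixture0} depend on $v$, you must verify that the full measure can nevertheless be written as $\bar\pi_N(k)\propto \binom{N}{k}\bigl(1+\varepsilon^{-N/2}\varepsilon^{k}\bigr)v^{k}$, i.e.\ with a $v$-free weight function $V(k)=\binom{N}{k}(1+\varepsilon^{k-N/2})$. This is true here, and it is precisely what guarantees that the covariance of the \emph{whole} mixture (including the between-component term you compute at $v=v_c$) captures the contribution that the paper isolates as the $\alpha'(v)$ term; applying (\ref{Formula1:BD}) component-wise without this check would silently drop it. Second, as you note yourself, the Taylor expansion producing $\mathrm{Cov}_{\pi}(a(X_N),X_N)=a'(p)p(1-p)/N+O(N^{-3/2})$ requires $a$ to be (at least) continuously differentiable near $p$ with a controlled second-order remainder; the hypotheses stated in the proposition (continuous, bounded, strictly increasing) do not suffice for the limit~\eqref{FormulaHillNonCrit}, which involves $a'$. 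The paper's own proof has the same implicit regularity assumption, so this is a defect of the statement rather than of your argument.
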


\subsubsection{Birth and death process approximation}

Consider the processes obtained by assuming fast switching rates between the active and inactive states, see e.g.~\cite[p.46]{MazzaBenaim}. Given that $N(t)=k$, the fast process $W(t)$ evolves according to the quasi-equilibrium 
\begin{align*}
\Prob\lp W(\infty)=I \mid N(\infty)=k\rp&=\sigma_k(I)=\frac{\varepsilon^{-\frac{1}{2}(N-2k)}}{1+\varepsilon^{-\frac{1}{2}(N-2k)}}\\
\Prob\lp W(\infty)=A\mid N(\infty)=k\rp&=\sigma_k(A)=\frac{1}{1+\varepsilon^{-\frac{1}{2}(N-2k)}}.
\end{align*}
The pair process $(N(t),W(t))$ is then replaced by a birth and death process $Y_N(t)$ of birth and death rates
 
 \begin{align*}
 q_N(k,k+1)&= N b^{(N)}(\tfrac{k}{N})=(N-k)v\sigma_k(A)+(N-k)v\varepsilon\sigma_k(I)\\
     &= v(N-k)\frac{1+\varepsilon \varepsilon^{-\frac{1}{2}(N-2k)}}{1+ \varepsilon^{-\frac{1}{2}(N-2k)}},
 \end{align*}
and
$$q_N(k,k-1)= N d^{(N)}(\tfrac{k}{N})=k,$$
see Fig.~\ref{fig:strip} (c), which has a steady state distribution $\bar\pi_N$ given by the binomial mixture (\ref{Mixture0}).

\begin{figure}
\begin{center}
\includegraphics{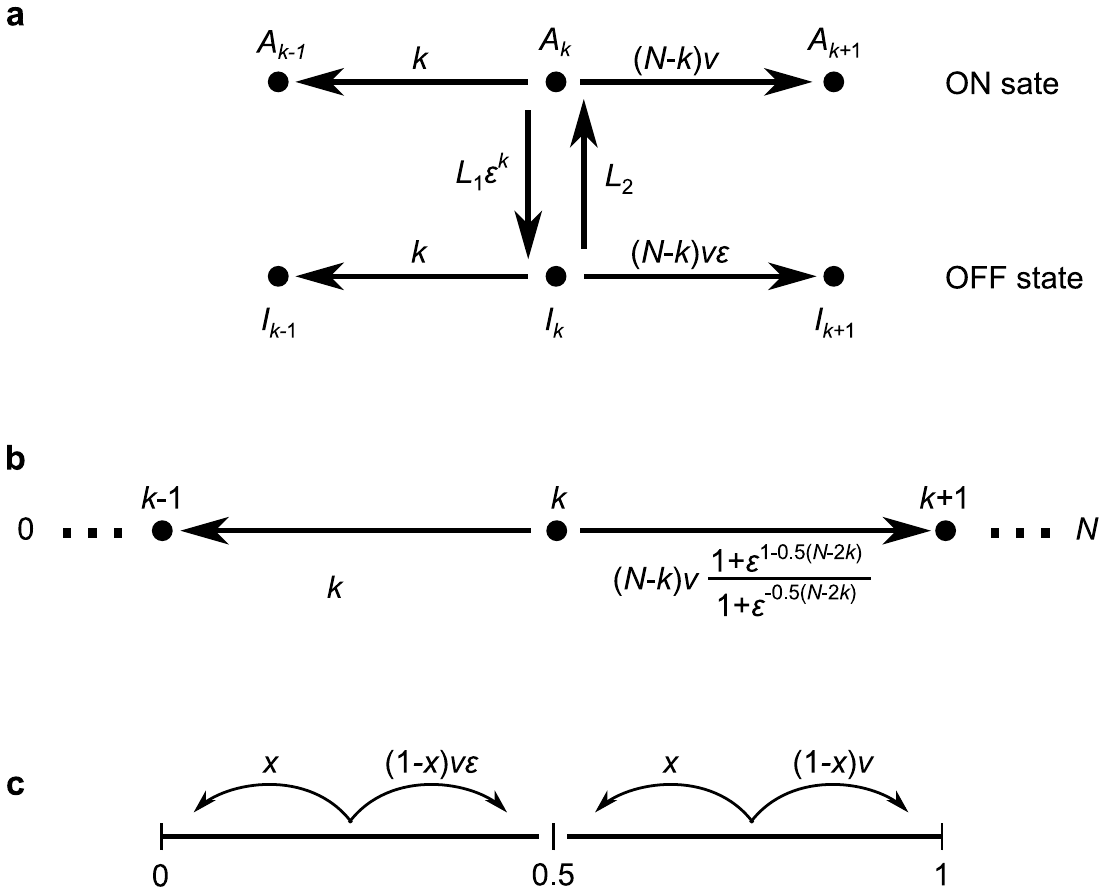}
\end{center}
\caption{Allosteric phosphorylation. (\textbf{a}) The process is a Markov Chain evolving along a strip. (\textbf{b}) Quasi-equilibrium approximation. (\textbf{c}) Density dependent birth and death process approximation when $N\to\infty$.}\label{fig:strip}
\end{figure}

The associated limiting (o.d.e.) (\ref{ODE}) is 
\begin{equation}\label{ODE1}
\frac{\,{\rm d}x}{\,{\rm d}t}= \begin{cases}
(1-x)\varepsilon v -x, &\mbox{ if } 0\le x<1/2,\\
(1-x) v -x, &\mbox{ if } 1/2\le x \le 1,\\
\end{cases}
\end{equation}
which possesses two stable equilibria $x_1=\varepsilon v/(1+\varepsilon v)$ and $x_2=v/(1+v)$, see Fig.~\ref{fig:strip}. 
 
\subsubsection{Entropy function and ultrasensitivity}

In this example, one can check that
$$\ln(r(x))=
\begin{cases}
 \ln\lp\frac{x}{1-x}\rp+\ln\lp\frac{1}{\varepsilon v}\rp, & \text{if } x <\frac{1}{2} \\
 \ln\lp\frac{1}{(1+\varepsilon) v}\rp, & \text{if } x=\frac{1}{2} \\
 \ln\lp\frac{x}{1-x}\rp+\ln\lp\frac{1}{v}\rp, & \text{if } x >\frac{1}{2}\,.
 \end{cases}
$$

\begin{figure}[h!]
\centering
\includegraphics{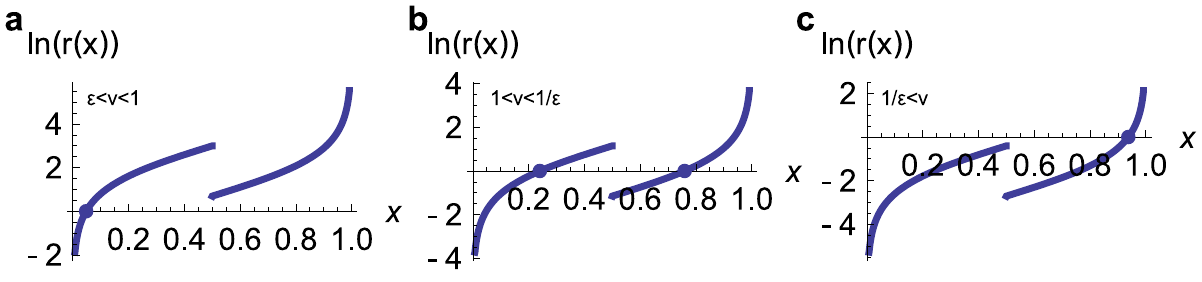}
\caption{Graph of $\ln(r(x))$ for different values of the parameters $(\varepsilon, v)$. (\textbf{a}) When $\varepsilon<v<1$ the function possesses one root in the interval $[0,1/2[$. (\textbf{b}) When $1<v<\frac{1}{\varepsilon}$ the function possesses one root in the interval $[0,1/2[$ and one root in the interval $]1/2,1]$. (\textbf{c}) When $\frac{1}{\varepsilon}<v$ the function possesses one root in the interval $]1/2,1]$.}\label{fig:logRx}
\end{figure}

The function $x\mapsto \ln(r(x))$ may have one or two zeros (see Fig.~\ref{fig:logRx}). The action functional
\begin{align}
J(x)&=\int_{0}^{x} \ln(r(y))\,{\rm d}y\nonumber\\
&=x\ln(x)+(1-x)\ln(1-x)-x\ln(v)-\ln(\varepsilon)\lp x\ind_{\lac x\leq\frac{1}{2}\rac}+\tfrac{1}{2}\,\ind_{\lac x>\frac{1}{2}\rac}\rp\label{eq:freeEnergy}
\end{align}
may thus possess one or two critical points in the interval $[0,1]$, which are located at $x_1=\frac{\varepsilon v}{1+\varepsilon v}$ and (or) $x_2=\frac{v}{1+v}$. The shape of the entropy function $I(x)=J(x)-J_{0}$ strongly depends on the value of the parameters (see Fig.~\ref{fig:EntropyFunction}).

\begin{figure} \centering
\includegraphics{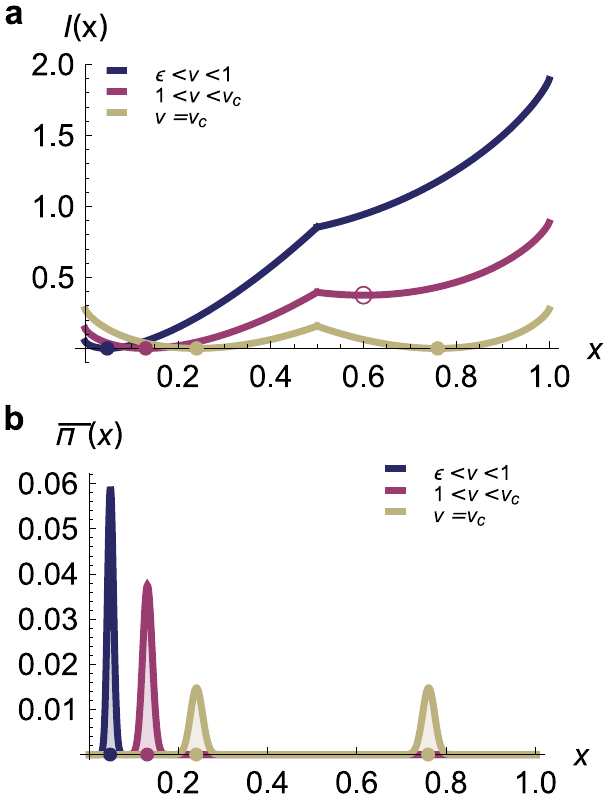}
\caption{The steady state distribution is unimodal when $v\ne v_c$ and bimodal when $v=v_c$. (\textbf{a}) The entropy function $I(x)=J(x)-J_0$ for different values of $v$ and for $\varepsilon=0.1$. The minima are highlighted with dots. When $v=v_c=1/\sqrt{\varepsilon}$, the entropy function possesses two global minima.
(\textbf{b}) The measure concentrates at the global minima (plain dots) of the entropy function. This is illustrated for $N=1000$.
}\label{fig:EntropyFunction}
\end{figure}
The limiting behavior of the steady state distribution $\bar\pi_N$ is
\begin{equation}\label{eq:limit:pi:enciso}
\lim_{N\to\infty}\bar\pi_{N}(x)= \begin{cases}
\delta_{\frac{\varepsilon v}{1+\varepsilon v}}(x), & \text{if } v <v_c\,, \\
\frac{1}{2}\,\delta_{\frac{\varepsilon v}{1+\varepsilon v}}(x)+\frac{1}{2} \, \delta_{\frac{ v}{1+ v}}(x), & \text{if } v =v_c, \\
\delta_{\frac{ v}{1+ v}}(x), & \text{if } v>v_c.
\end{cases}
\end{equation}

\subsubsection{Hill coefficients}

As seen in (\ref{Mixture0}), the steady state $\bar\pi_N$ is a mixture of the binomial distributions $\pi_1=\mathcal{B}(N,\frac{\varepsilon v}{1+\varepsilon v})$ and $\pi_2=\mathcal{B}(N,\frac{v}{1+v})$, that is,
$$\bar\pi_N=\alpha_N(v) \pi_1 + (1-\alpha_N(v))\pi_2$$
for the coefficient
$$\alpha_N(v)=\frac{1}{1+\lp\frac{\sqrt{\varepsilon}(1+ v)}{1+\varepsilon v}\rp^{N}}\,.$$
When $a(x)\equiv x$, one obtains that
$$f(v)=\alpha(v)\frac{\varepsilon v}{1+\varepsilon v}+(1-\alpha(v))\frac{v}{1+v} .$$

Consider the quantile $v_q^{(N)}$ given by the equation $q=f(v_q^{(N)})$. We show in the Appendix that the following result for $I_q$ holds.
\begin{lemma}\label{allosteric:Ip}
Assume that $a(x) \equiv x$ and let $q$ be such that $1/2\leq q \leq 1/(1+\sqrt{\varepsilon})$. Then
\begin{eqnarray}\label{eq:limvpN}
\lim_{N\to\infty}v_q^{(N)}= \lim_{N\to\infty}v_{1-q}^{(N)} = v_c
\end{eqnarray}
and therefore
$$\lim_{N\to\infty}I_q = +\infty.$$
\end{lemma}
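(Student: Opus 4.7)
The plan is to exploit the explicit binomial mixture structure of $\bar\pi_N$ from Proposition~\ref{prop:mixture0} to show that, with $a(x)\equiv x$, the activity $f(v)=\langle X_N\rangle_{\bar\pi_N}$ converges pointwise to a step function that jumps at $v=v_c=1/\sqrt{\varepsilon}$, and then to sandwich the quantiles into an arbitrarily small neighbourhood of $v_c$.

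First, I would analyse the mixture weight $\alpha_N(v)=(1+(\sqrt{\varepsilon}(1+v)/(1+\varepsilon v))^N)^{-1}$. A direct factorisation $\sqrt{\varepsilon}(1+v)-(1+\varepsilon v)=(\sqrt{\varepsilon}-1)(1-\sqrt{\varepsilon}v)$ shows that the base ratio $\sqrt{\varepsilon}(1+v)/(1+\varepsilon v)$ equals $1$ iff $v=v_c$, is $<1$ iff $v<v_c$, and is $>1$ iff $v>v_c$. Hence $\alpha_N(v)\to 1$ for $v<v_c$, $\alpha_N(v)\to 0$ for $v>v_c$, and $\alpha_N(v_c)=1/2$ identically. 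Plugging into $f(v)=\alpha_N(v)\tfrac{\varepsilon v}{1+\varepsilon v}+(1-\alpha_N(v))\tfrac{v}{1+v}$ gives pointwise convergence of $f$ to
\[
f_\infty(v)=\begin{cases}\tfrac{\varepsilon v}{1+\varepsilon v} & v<v_c,\\ 1/2 & v=v_c,\\ \tfrac{v}{1+v} & v>v_c,\end{cases}
\]
which takes values in $(0,\tfrac{\sqrt{\varepsilon}}{1+\sqrt{\varepsilon}})\cup\{1/2\}\cup(\tfrac{1}{1+\sqrt{\varepsilon}},1)$.

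Next, observing that $f(\infty)=1$ so that $v_q^{(N)}$ and $v_{1-q}^{(N)}$ are determined by $f(v_q^{(N)})=q$ and $f(v_{1-q}^{(N)})=1-q$, I would establish the key sandwich. Fix $\delta\in(0,v_c)$. For $q\in[1/2,1/(1+\sqrt{\varepsilon})]$ one has $1-q\in[\sqrt{\varepsilon}/(1+\sqrt{\varepsilon}),1/2]$, so that monotonicity of $x\mapsto x/(1+x)$ gives
\[
f_\infty(v_c-\delta)=\tfrac{\varepsilon(v_c-\delta)}{1+\varepsilon(v_c-\delta)}<\tfrac{\sqrt{\varepsilon}}{1+\sqrt{\varepsilon}}\le 1-q\le q\le \tfrac{1}{1+\sqrt{\varepsilon}}<\tfrac{v_c+\delta}{1+v_c+\delta}=f_\infty(v_c+\delta).
\]
Pointwise convergence $f\to f_\infty$ at the two points $v_c\pm\delta$ transfers these strict inequalities to $f$ for all $N$ sufficiently large. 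Since $f$ is continuous and strictly increasing in $v$, the intermediate value theorem forces both $v_{1-q}^{(N)}$ and $v_q^{(N)}$ to lie in $(v_c-\delta,v_c+\delta)$. As $\delta$ is arbitrary, this yields \eqref{eq:limvpN}.

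The conclusion $I_q\to+\infty$ is then immediate for $q>1/2$: the ratio $v_q^{(N)}/v_{1-q}^{(N)}\to 1$ so $\ln(v_q^{(N)}/v_{1-q}^{(N)})\to 0^+$, while the numerator $2\ln(q/(1-q))$ is a fixed positive constant. The trickiest point to get right is not any hard estimate but rather the handling of the boundary $q=1/(1+\sqrt{\varepsilon})$, where one must check that the strict inequality $f_\infty(v_c+\delta)>q$ really does hold (it does, by strict monotonicity of $v/(1+v)$), so that the sandwich works up to and including the endpoint stated in the lemma.
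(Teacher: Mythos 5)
Your proof is correct, and it rests on the same underlying mechanism as the paper's: away from $v_c=1/\sqrt{\varepsilon}$ the mixture weight $\alpha_N(v)$ collapses to $0$ or $1$, so $f$ converges to the mean of a single binomial component, and for $q\in[1/2,1/(1+\sqrt{\varepsilon})]$ the values $1-q$ and $q$ are trapped strictly inside the gap $[\sqrt{\varepsilon}/(1+\sqrt{\varepsilon}),1/(1+\sqrt{\varepsilon})]$ that the limiting means cannot enter. The organization differs, though: the paper argues by contradiction on a subsequence with $v_q^{(m)}<v_c-\delta$, extracts the asymptotics of the quantile equation to get $v_q^{(m)}\sim q/(\varepsilon(1-q))>v_c$, and then asserts that ``similar arguments'' handle the $\limsup$ and the other quantile; you instead prove pointwise convergence of $f$ at the two fixed points $v_c\pm\delta$ and use continuity and strict monotonicity of $f$ to sandwich both $v_{1-q}^{(N)}$ and $v_q^{(N)}$ into $(v_c-\delta,v_c+\delta)$ in one stroke. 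Your direct version is a little more complete in that it treats all four bounds uniformly, makes explicit why the endpoint $q=1/(1+\sqrt{\varepsilon})$ is admissible, and correctly flags that the divergence $I_q\to+\infty$ requires $q>1/2$ (at $q=1/2$ the index degenerates to $0/0$, a point the paper leaves implicit). One small thing worth stating explicitly in your write-up is why $f$ is strictly increasing in $v$ (it follows from formula \eqref{Formula2}, since the covariance of $|n|/N$ with itself is a positive variance), as the sandwich step leans on this.
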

Fig. \ref{fig:koshland:enciso} (b) and (c) shows that  $I_q$ is asymptotically linear in $N$.
On the other hand, for $q$ close to $1$, then $I_q$ is asymptotically constant, so that the related
index does not detect ultrasensitivity of order $N$, see Fig.~\ref{fig:koshland:enciso}. This shows that experimentalists should use a broad range of 
coefficient $I_q$ instead of focusing only on $I_{0.9}$. 

The Hill coefficient $\eta_H(v)$ possesses nice thermodynamical interpretations, see, e.g., formulas
(\ref{Formula1}) and (\ref{Formula2}), and is such that $\eta_H(v)\sim C_v$ for constants $C_v$ when $v\ne v_c$, while
$\eta_H(v_c)\sim C_{v_c} N$, when $v=v_c$. On the other hand, the effective Hill coefficient $I_q$, which is nothing but an average of Hill coefficients (see (\ref{Relation})) diverges as $N\to\infty$ for a broad range of values of $q$. Detection of ultrasensitivity should be therefore easier with $I_q$ than with $\eta_H(v)$.

\begin{figure}
\centering
\includegraphics{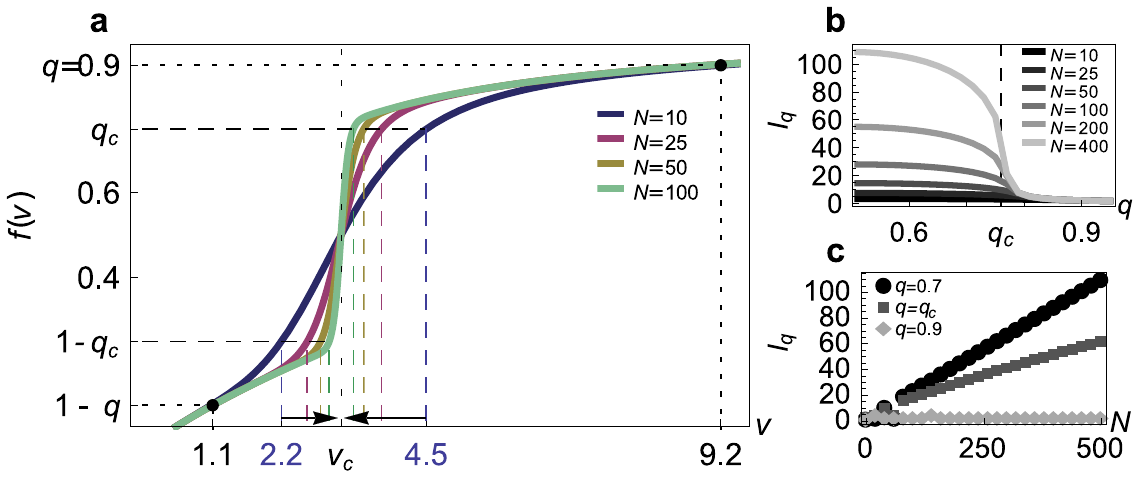}
\caption{Numerical approximation of the effective Hill coefficient. The activity function is $a(x)=x$ and the parameter $\varepsilon=0.1$. In this case, $v_c=\sqrt{\varepsilon}^{-1}\approx 3.2$ and $x_2\approx 0.76$.
(\textbf{a}) The activity of the macromolecule $f(v)$ for different values of $N$.  When $q=0.9$, $v_{0.9}$ and $v_{0.1}$ converge to different limits in the large $N$ limit. The Koshland index is of order $N$ when $q$ is smaller than the threshold $q_c = x_2(v_c)$, and $v_q-v_{1-q}$ converges towards zero so that $I_q$ goes to infinity.
(\textbf{b}) The effective Hill coefficient $I_q$ as a function of $q$ for several values of $N$.
(\textbf{c}) The index $I_q$ as a function of $N$, for three values of $q$. We see that the index $I_q$, in broad outline, takes two values. The first one goes to infinity with $N$ and the second one goes to 1, for larger $q$.}\label{fig:koshland:enciso}
\end{figure}

\subsection{Substrate-Catalyst interactions \label{s.substrate}}

\begin{figure}
\begin{center}
\includegraphics{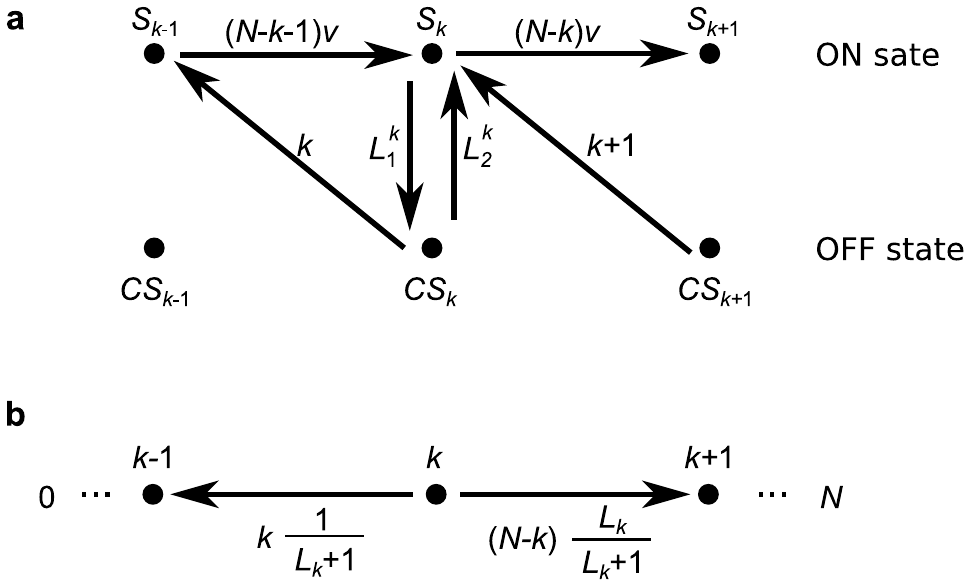}
\end{center}
\caption{Substrate-Catalyst reaction. (\textbf{a}) Transition scheme associated with substrate-catalyst reactions. (\textbf{b}) Associated birth and death process.}\label{fig:stripKaneko}
\end{figure}

Consider a substrate molecule containing $N$ sites where ligand molecules can bind at rate $v$. The transition rates are provided in Fig.~\ref{fig:stripKaneko}.
This model has been used for example in~\cite{Hatakeyama:2014aa} for proposing a mechanism depicting kinetic memory. The substrate-catalyst reactions
\begin{equation*}
 {\rm S}_k + {\rm C}\underset{L_2^k }{\overset{ L_1^k}{\longleftrightarrow}}{\rm S_k C} {\overset{ k}{\longrightarrow}} {\rm S}_{k-1},
\end{equation*}
and the reactions associated with ligand binding
\begin{equation*}
 {\rm S}_k {\overset{ v(N-k)}{\longleftrightarrow}}{\rm S}_{k+1},
\end{equation*}
define a Markov chain evolving on a strip, see Fig.~\ref{fig:stripKaneko} (a).
The catalytic reactions are simplified by assuming fast formation of the complex $CS_k$, by supposing that $L_1^k, \, L_2^k \gg1$ for every $0\leq k \leq N$. This last hypothesis leads to an approximating birth and death process, see below and Fig.~\ref{fig:stripKaneko} (b).
We follow \cite{Hatakeyama:2014aa} by setting
$$L_k = \frac{L_1^k}{L_2^k} = \frac{L_1^0}{L_2^0}\gamma^{\frac{k}{N}} = L_0 \gamma^{\frac{k}{N}}$$
for some positive constant $\gamma$, see Fig.~\ref{fig:stripKaneko}. These two sets of reactions define a birth and death process $Y_N(t)$ of transition rates
$$
q_N\lp k,\,k+1 \rp = v(N-k)\frac{L_k}{L_k+1} \quad\hbox{ and }\quad
q_N\lp k,\,k-1 \rp = k\frac{1}{L_k + 1}.$$
and of steady state
\begin{equation}\label{eq:stationary:kaneko}
\bar\pi_{N}(k) =\lp v L_0 \rp^k {N \choose k} \gamma^{\frac{(k-1)k}{2N}} \frac{L_0 \gamma^{\frac{k}{N}}+1}{L_0 +1} \bar\pi_{N}(0),
\end{equation}
where $\bar\pi_{N}(0) = 1/\sum_{k=0}^N\lp v L_0 \rp^k {N \choose k} \gamma^{\frac{(k-1)k}{2N}}\frac{L_0 \gamma^{\frac{k}{N}}+1}{L_0 +1}$ is the inverse of the normalizing constant.

The limiting (o.d.e.) (\ref{ODE}) is given by
\begin{equation}\label{ODE2}
\frac{\textrm{d}x}{\textrm{d}t} = \frac{v(1-x)L_0\gamma^x - x}{L_0\gamma^x + 1}
\end{equation}
which can possess multiple stable equilibria. 

\subsubsection{Entropy function and concentration of measure}

\begin{figure}\centering
\includegraphics{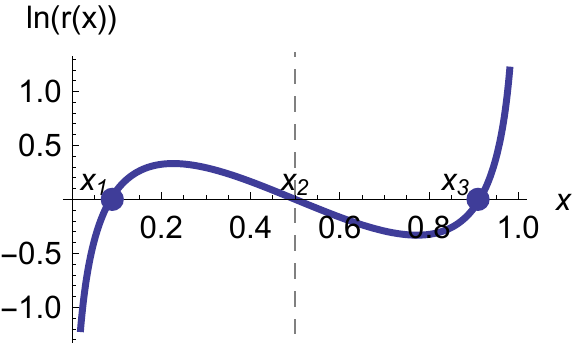}
\caption{Plot of the function $\ln(r(x))$. This function can possess three roots $x_1$, $x_2$ and $x_3$ and $\ln(r(x+x_2))$ is odd. The roots $x_1$ and $x_3$ (dots) are the stable equilibria of $I(x)$.}
\label{fig:largeDev:kaneko}
\end{figure}

In this setting, 
$$\ln\lp r(x)\rp = \ln\lp\frac{x}{1-x}\rp + \ln\lp\frac{1}{vL_0}\rp - \ln\lp\gamma^x\rp,$$
see Fig.~\ref{fig:largeDev:kaneko}.
The main difference with the previous examples comes from this third term. Notice that when $\gamma=1$, we recover the binomial distribution of the introducing example. We assume that $\gamma > 1$ in what follows and give conditions describing the possible critical points of the related action functional $J$. This result is proven in the Appendix and the overall picture is summarized in Fig.~\ref{fig:kaneko:hill}. Define,
\begin{align}\label{eq:kaneko:cond}
(C1) \quad & \ln\lp\gamma\rp > 4 \\
(C2) \quad & \frac{1}{2}-\frac{1}{2} \sqrt{\frac{\ln\lp\gamma\rp-4}{\ln\lp\gamma\rp}}<x<\frac{1}{2}+\frac{1}{2} \sqrt{\frac{\ln\lp\gamma\rp-4}{\ln\lp\gamma\rp}}\\
(C3) \quad & \gamma^{-\frac{1}{2} \left(1+\sqrt{1-\frac{4}{\ln(\gamma)}}\right)}c_{\gamma}> vL_0 > \gamma^{-\frac{1}{2} \left(1-\sqrt{1-\frac{4}{\ln(\gamma)}}\right)} \frac{1}{c_{\gamma}} \\
(C4) \quad & v = v_c = \frac{1}{\sqrt{\gamma}L_0},
\end{align}
where 
$$c_{\gamma} = \frac{\left(\sqrt{\ln(\gamma)-4}+\sqrt{\ln(\gamma)}\right)}{\sqrt{\ln(\gamma)}-\sqrt{\ln(\gamma)-4}}.$$

\begin{lemma}\label{lemma:kaneko:largeDev}
Under conditions $(C1)$-$(C3)$, the entropy function $I(x)$ possesses two local minima. Furthermore, these two minima are global when $(C4)$ holds.
\end{lemma}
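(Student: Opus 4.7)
The strategy is to analyse the entropy function $I$ through its first two derivatives and exploit an antisymmetry that emerges precisely at $v=v_c$, in line with the picture suggested by Fig.~\ref{fig:largeDev:kaneko}.

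First, I would compute $I'(x)=\ln(r(x))$ and $I''(x)=\tfrac{1}{x(1-x)}-\ln\gamma$. Since $x(1-x)\le 1/4$ on $[0,1]$, the second derivative takes negative values somewhere in $]0,1[$ if and only if $\ln\gamma>4$, which is exactly (C1); in that case $I''<0$ on the open interval described by (C2), whose endpoints are $x_\pm=\tfrac{1}{2}\lp 1\pm\sqrt{(\ln\gamma-4)/\ln\gamma}\rp$. Thus $I'$ is increasing on $[0,x_-]$, decreasing on $[x_-,x_+]$ and increasing on $[x_+,1]$, with a local maximum at $x_-$ and a local minimum at $x_+$.

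Second, using the boundary behaviour $I'(0^+)=-\infty$ and $I'(1^-)=+\infty$, the function $I'$ possesses exactly three zeros in $]0,1[$ if and only if $I'(x_-)>0>I'(x_+)$. Using $1-x_-=x_+$ one obtains
$$I'(x_\pm) = \pm\ln(x_+/x_-)-\ln(vL_0)-x_\pm\ln\gamma,$$
and a rationalisation shows that $x_+/x_-$ reduces exactly to the constant $c_\gamma$ of the statement. Translating the inequalities $I'(x_-)>0$ and $I'(x_+)<0$ into bounds on $vL_0$ is then a direct computation and yields (C3). When three roots $x_1<x_2<x_3$ of $I'$ exist (with $x_2\in(x_-,x_+)$), the sign pattern of $I'$ is $-,+,-,+$, so $I$ has two local minima at $x_1$ and $x_3$ and a local maximum at $x_2$; since $I'(0^+)=-\infty$ and $I'(1^-)=+\infty$ make $0$ and $1$ local maxima of $I$, these are indeed the only candidates for the global minimum. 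This establishes the first half of the lemma.

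Third, under (C4), a direct computation yields the identity
$$I'(x)+I'(1-x)=-\ln\lp v^2 L_0^2\gamma\rp = 0,$$
i.e.\ $I'$ is antisymmetric around $x=1/2$. This forces $x_2=1/2$ and $x_3=1-x_1$. To conclude that both local minima are global, I would write $I(x_3)-I(x_1)=\int_{x_1}^{x_3}I'(u)\,{\rm d}u$, split the integral at $1/2$, and apply the change of variable $u\mapsto 1-u$ on the right half together with $I'(1-u)=-I'(u)$; the two halves cancel, so $I(x_1)=I(x_3)$ and both realise the global minimum. The main obstacle is the algebraic bookkeeping in the second step --- recognising that $x_+/x_-$ rationalises to $c_\gamma$ and transcribing the two sign conditions at $x_\pm$ into bounds of the exact form given in (C3); the remainder of the argument is a routine qualitative study of a real-valued function on $[0,1]$.
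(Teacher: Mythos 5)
Your argument is correct and follows essentially the same route as the paper: conditions $(C1)$--$(C2)$ come from the sign of $(\ln r)'(x)=\frac{1}{x(1-x)}-\ln\gamma$, condition $(C3)$ from requiring the local maximum (resp.\ minimum) of $\ln r$ at $x_-$ (resp.\ $x_+$) to be positive (resp.\ negative), and the equality $I(x_1)=I(x_3)$ under $(C4)$ from the antisymmetry of $\ln r$ about $x=1/2$, exactly as in the paper's proof. One caveat: carrying out your second step literally gives $\gamma^{-x_+}c_\gamma < vL_0 < \gamma^{-x_-}/c_\gamma$, which is the \emph{reverse} of $(C3)$ as printed (whose two bounds, as written, define an empty interval and exclude $v_c L_0=\gamma^{-1/2}$), so the printed $(C3)$ appears to have its inequalities swapped rather than your derivation matching it verbatim.
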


The effective Hill coefficient is represented in Fig.~\ref{fig:kaneko:hill}, where the same conclusions as in the previous example hold.


\begin{figure}
\centering
\includegraphics{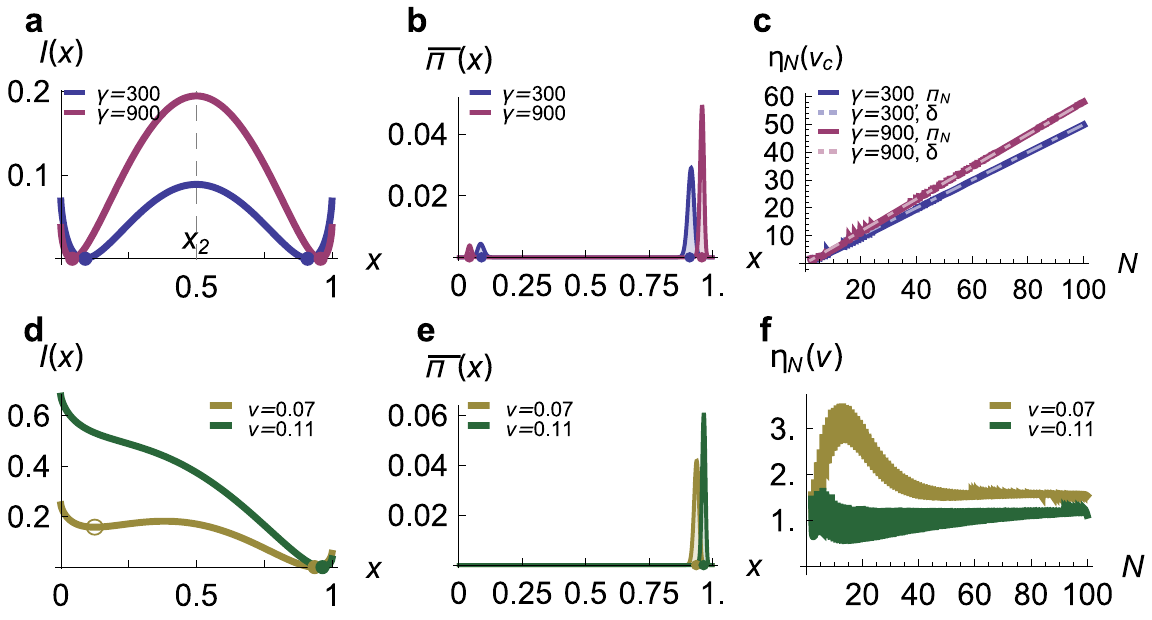}
\caption{System's bistability and ultrasensitivity. $a(x) = x$. In the first line $v=v_c$ and $\gamma$ varies. In the second line $\gamma=300$ is fixed, but $v\neq v_c$. (\textbf{a}) The function $I(x)$ possesses two global minima $x_1$ and $x_3$. $x_2$ is the unstable equilibrium of the (o.d.e.) (\ref{ODE2}). (\textbf{b}) The steady state converges towards a mixture of Dirac masses at $x_1$ and $x_3$. Here, $N=1000$. (\textbf{c}) The Hill coefficient is of order $N$ (continuous lines) and follows the same trend as the Hill coefficient associated with the limiting stationary distribution (dotted lines). (\textbf{d}) When $v\neq v_c$, $I(x)$ possesses potentially two minima, but only one of them (plain disks) is global. (\textbf{e}) The stationary distribution is unimodal ($N=1000$). (\textbf{f}) The system does not exhibit ultrasensitivity.}\label{fig:kaneko:hill}
\end{figure}

\begin{figure}
\centering
\includegraphics{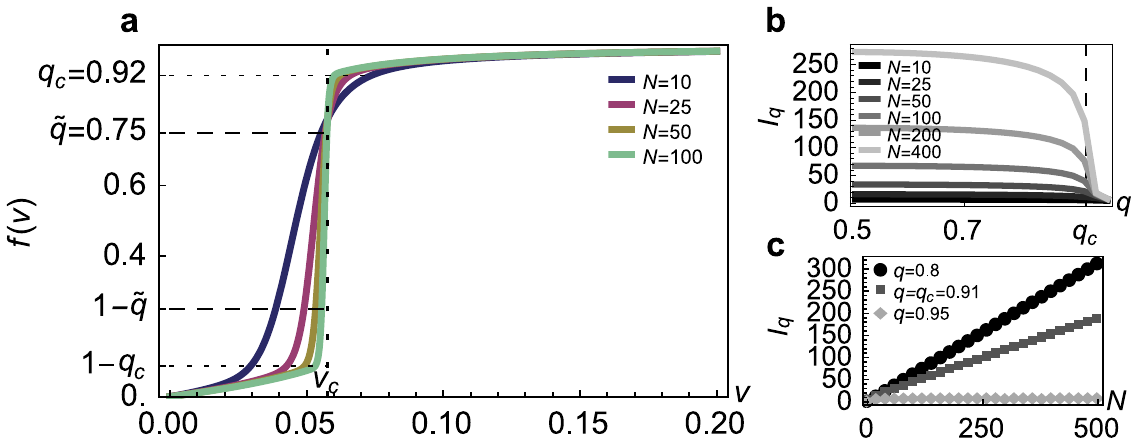}
\caption{Numerical approximation of the effective Hill coefficient. The activity function is $a(x)=x$ and the parameter $\gamma=300$, $L_0=1$. In this case $v_c \approx 0.057$ and $x_2\approx 0.92$.
(\textbf{a}) The activity of the macromolecule $f(v)$ for different values of $N$. Ultrasensitivity is detected for the standard $q=0.9$ since $0.9<x_2(v_c) = q_c$ and $I_q$ goes to infinity.
(\textbf{b}) The effective Hill coefficient $I_q$ as a function of $q$ for several values of $N$. (\textbf{c}) The index $I_q$ as a function of $N$, for three values of $q$.}\label{fig:koshland:kaneko}\label{fig:kaneko:koshland}
\end{figure}

\subsection{Nucleosome mediated cooperativity\label{s.nucleosome}}

This section presents the model of~\cite{mirny2010} which shows how indirect cooperativity can result from the competition between nucleosome positioning on the DNA and hampering transcription factors (TF) trying to occupy free sites on the DNA, see Fig.~\ref{fig:mirny:band}. This fact has been demonstrated in~\cite{mirny2010} and~\cite{MazzaBenaim}. We will show here how one can understand this phenomenon using our general framework.

Let us define $W(t)$ such that $W(t)=A$ if the nucleosome is not bound to the DNA at time $t$ (active state) and $W(t)=I$ otherwise (inactive state). The DNA possesses $N$ binding sites and the nucleosome can access or leave the DNA only when all binding sites are free of (TF). The transitions between active and inactive state occur at rate $g$ (inactive to active) and $\kappa$ (active to inactive), such that $L = g/\kappa = K^N$, see Fig.~\ref{fig:mirny:band}.

Let $N(t)$ denotes the number of occupied sites at time $t$. The birth rate is given by $\mu_A v$ in the active state and by $\mu_I v <\mu_A v$ in the inactive state. Here $v$ denotes the protein concentration. Fig.~\ref{fig:mirny:band} sums up the whole process, which is equivalent to the Monod-Wyman-Changeux model of allosteric regulation~\cite{monod1965}, see, e.g., \cite{Edelstein}.

\begin{figure}[h!]
\begin{center}
\includegraphics{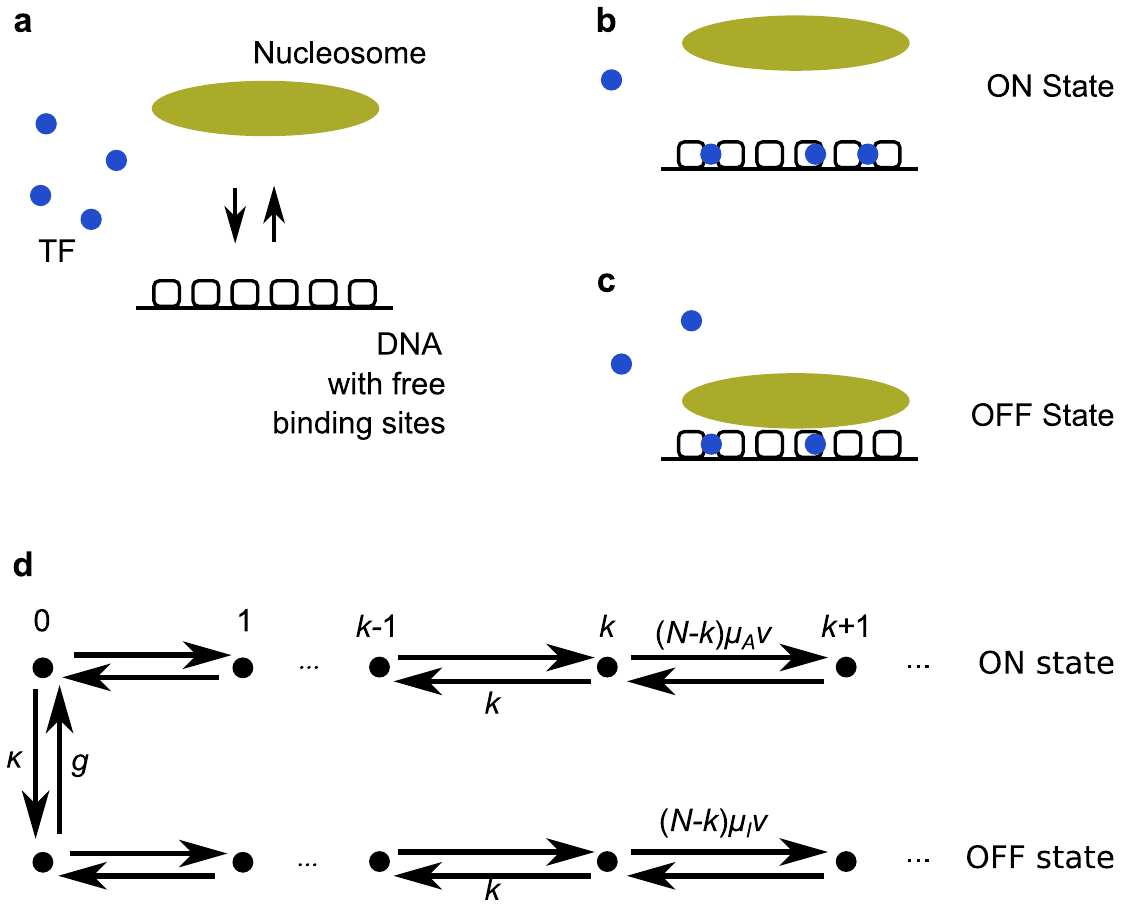}
\end{center}
\caption{Nucleosome mediated cooperativity. (\textbf{a}) Transcription factors (TF) try to access free binding sites on the DNA. The nucleosome can bind/unbind only when all binding sites are free from (TF). (\textbf{b}) In the active (on) state, the nucleosome is unbound and (TF) can easily access the DNA. (\textbf{c}) In the inactive (off) state, the nucleosome is bound and hampers the access of (TF). (\textbf{d}) The process can switch from the active (on) to the inactive (off) state only when $N(t)=0$. In each state, the process is a birth and death process.}\label{fig:mirny:band}
\end{figure}

Following~\cite{mirny2010} and~\cite{MazzaBenaim} the full Markov-chain process $\lp N(t), W(t)\rp$ at equilibrium is such that
\begin{eqnarray*}
\Prob\lp N(\infty) = k \, ,\, W(\infty) = I \rp &=& \frac{{N\choose{k}} \lp \mu_I v \rp^k }{\nu_I + L \nu_A},\\
\Prob\lp N(\infty) = k \, ,\, W(\infty) = A \rp &=& \frac{L{N\choose{k}} \lp \mu_A v \rp^k }{\nu_I + L \nu_A},
\end{eqnarray*}
where $\nu_I$ and $\nu_A$ are
$$\nu_I = \lp 1 + \mu_Iv \rp^N \quad \text{and} \quad \nu_A = \lp 1 + \mu_Av \rp^N.$$
This measure is the invariant measure of a birth and death process evolving in a segment composed of three pieces (see~\cite{MazzaBenaim} and Fig.~\ref{fig:mirny:band}). A computation shows that
\begin{eqnarray*}
\Prob\lp W(\infty) = I \mid N(\infty) = k \rp &=& \frac{L\lp \mu_I v \rp^k}{L\lp \mu_I v \rp^k + \lp \mu_A v \rp^k},\\
\Prob\lp W(\infty) = A \mid N(\infty) = k \rp &=& \frac{\lp \mu_A v \rp^k}{L\lp \mu_I v \rp^k + \lp \mu_A v \rp^k}.
\end{eqnarray*}
This leads to a birth and death process $Y_N(t)$ counting the number of bound sites in either the active or the inactive state, of birth and death rates
\begin{equation*}
q_N\lp k, k+1 \rp = \lp N-k \rp v \, \frac{\mu_I^{k+1} + L \mu_A^{k+1}}{\mu_I^{k} + L \mu_A^{k}} \quad\text{ and }\quad 
q_N\lp k, k-1 \rp = k.
\end{equation*}
The stationary law of this chain is
$$\bar \pi_N(k) =\frac{{N\choose k} \lp \lp \mu_I v\rp^k + L \lp \mu_1 v \rp^k \rp}{\nu_I + L \nu_A}.$$
When $N\to \infty$, recalling that $L=K^N$ and defining $x=k/N$, the above chain rescaled on $\left[0,1\right]$ is such that
\begin{eqnarray*}
b(x) &=& \lim_{N\to\infty} (1-x) \mu_I v \frac{1+\lp K \lp \frac{\mu_A}{\mu_I} \rp^x \rp^N \, \frac{\mu_A}{\mu_I}}{1+\lp K \lp \frac{\mu_A}{\mu_I} \rp^x \rp^N}\\
&=& \begin{cases}
(1-x) \mu_A v 					& \text{ if } x > x_c,\\
(1-x) \frac{\mu_A + \mu_I}{2} v	& \text{ if } x = x_c,\\	
(1-x) \mu_I v 					& \text{ if } x < x_c,
\end{cases}
\end{eqnarray*}
and $d(x) = x$. Notice that the critical value $x_c=\frac{\ln(1/K)}{\ln(\mu_A/\mu_I)}$ belongs to the interval $\left[0,1\right]$ if and only if 
\begin{eqnarray}\label{eq:mirny:cond:deuxZeros}
1 > K > \frac{\mu_I}{\mu_A}.
\end{eqnarray}

This leads to the limiting (o.d.e.)(\ref{ODE})
\begin{equation}\label{ODE3}
\frac{\textrm{d}x}{\textrm{d}t} = 
\begin{cases}
(1-x)\mu_A v -x, & \text{ for } x >x_c,\\
(1-x)\mu_I v - x, & \text{ for } x <x_c,
\end{cases}
\end{equation}
which possesses two stable equilibria $x_1 = \frac{v\mu_I}{1+v\mu_I}$ and $x_2 = \frac{v\mu_A}{1+v\mu_A}$ in $\left[0,1\right]$ when condition (\ref{eq:mirny:cond:deuxZeros}) is satisfied.

\subsubsection{Entropy and concentration of measure}

In the present framework,
$$\ln(r(x)) = \begin{cases} 
\ln\lp \frac{x}{1-x}\rp + \ln\lp \frac{1}{v \mu_A}\rp, 			& \text{ if } x>x_c,\\
\ln\left(\frac{2}{(\mu_I+\mu_A) v}\right)+ \ln\lp \frac{-\ln\lp\frac{1}{K} \rp}{\ln\lp\frac{1}{K}\rp - \ln\lp\frac{\mu_A}{\mu_I}\rp }\rp, & \text{ if } x= x_c,\\
\ln\lp \frac{x}{1-x}\rp + \ln\lp \frac{1}{v \mu_I}\rp ,				& \text{ if } x< x_c,
\end{cases}$$
which admits two zeros $x_1$ and $x_2$ in $\left[0,1 \right]$, which are the two stable equilibria of the limiting (o.d.e.) (\ref{ODE3}) and also the critical points of the action functional
\begin{align}\label{eq:mirny:j}
J(x) & = \int_0^x \ln\lp r(y) \rp dy \nonumber \\
& = x\ln(x) + (1-x)\ln(1-x) - \lp x\ln\lp v\mu_A\rp + \log\lp K \rp\rp  \ind_{\lac x > x_c\rac} -  x\ln\lp v\mu_I \rp \ind_{\lac x \leq x_c\rac}.
\end{align}
Direct computations show that when (\ref{eq:mirny:cond:deuxZeros}) holds and for 
$$v=v_c=\frac{1-K}{K\mu_A - \mu_I},$$ 
the limiting stationary distribution of the birth and death process converges to a probability measure $\bar \pi$ which is a mixture of Dirac measures concentrated at the equilibria $x_1$ and $x_2$. Fig.~\ref{fig:mirny:hill} provides two illustrations when $v\ne v_c$ and $v=v_c$, the latter situation leading to a bimodal steady state with $\eta_H(v_c)$ linear in the number of binding sites $N$. In Fig.~\ref{fig:mirny:koshland}, one sees that for an appropriate choice of $q$, the effective Hill coefficient $I_q$ detects the switch-like response, as it is proportional to the number of binding sites.

\begin{figure}
\centering
\includegraphics{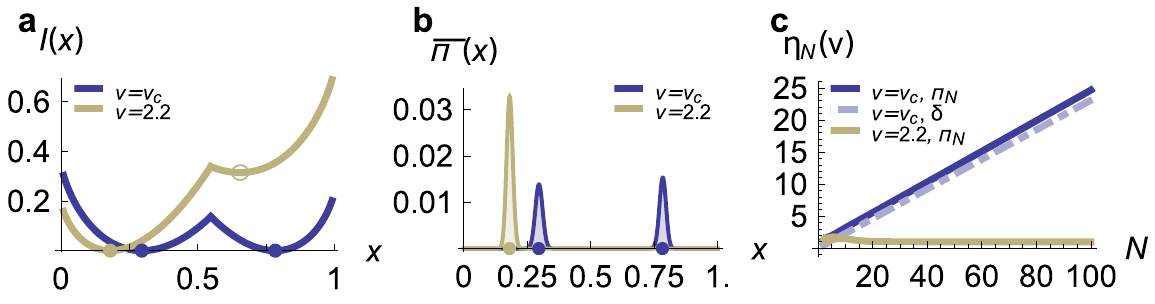}
\caption{System's bistability and ultrasensitivity. Condition (\ref{eq:mirny:cond:deuxZeros}) is satisfied with $\mu_I = 0.1$, $\mu_A = 0.86$ and $K=0.3$. (\textbf{a}) The entropy function $I(x)$ possesses two roots when $v=v_c$. (\textbf{b}) The stationary distribution is bimodal when $v=v_c$ (illustrated here for $N=1000$). (\textbf{c}) When $v=v_c$, the limiting stationary distribution possesses thus two modes situated at the rescaled points $x_1$ and $x_2$ corresponding to the zeros of $I(x)$. The Hill coefficient is hence proportional to $N$ and the system is ultrasensitive.}\label{fig:mirny:hill}
\end{figure}

\begin{figure}
\centering
\includegraphics{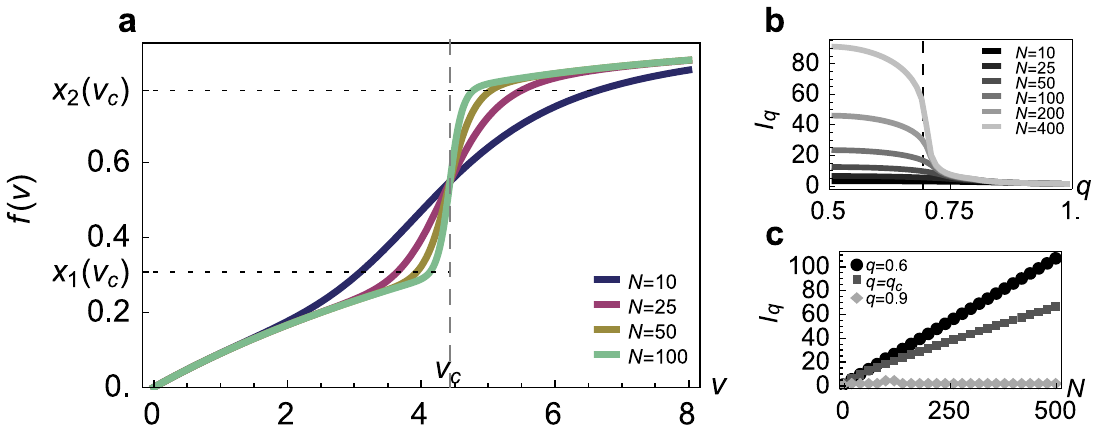}
\caption{Numerical approximation of the effective Hill coefficient. (\textbf{a}) The activity of the macromolecule $f(v)$ for different values of $N$ and $a(x)\equiv x$. The parameters are given by $\mu_I = 0.1$, $\mu_A = 0.86$ and $K=0.3$. When $q< \min\left\{x_2(v_c),\, 1-x_1(v_c) \right\}$, the effective Hill coefficient $I_q$ goes to infinity when $N\to \infty$. Otherwise $I_q$ converges towards a constant. (\textbf{b}) The effective Hill coefficient index as a function of $q$ for several values of $N$. (\textbf{c}) The index $I_q$ as a function of $N$ for three values of $q$.}\label{fig:mirny:koshland}
\end{figure}



\appendix

\section{Proof of Theorem~\ref{SecondFormula}}
In this section, the proofs of new formulas for Hill coefficients are detailed. 
These formulas follow in fact from a more general result allowing to compute the Hill coefficient for steady-state distribution that are mixture of measures of the form~\eqref{BasicProbability}. 


\begin{lemma}\label{Mixture}
Let $\pi_1$ and $\pi_2$ be two probability measures of the form given in~\eqref{BasicProbability}. For
any differentiable function $0\le\alpha(v)\le 1$, consider the probability mixture $\pi_\alpha$ given by
$\pi_\alpha = \alpha \pi_1 +(1-\alpha)\pi_2$, and consider the dose response curve
$f(v)= \langle a(\frac{\vert n\vert}{N})\rangle_{\pi_\alpha}$. Then, the Hill coefficient is given by
\begin{align*}
v\frac{\,{\rm d}}{\,{\rm d}v}\ln\Big(\frac{f(v)}{f(\infty)-f(v)}\Big)=\Big( &N\alpha(v){\rm Cov}_{\pi_1}\lp a(\tfrac{\vert n\vert}{N}),\tfrac{\vert n\vert}{N}\rp 
+N(1-\alpha(v)){\rm Cov}_{\pi_2}\lp a(\tfrac{\vert n\vert}{N}),\tfrac{\vert n\vert}{N}\rp \\
 & + v\frac{\,{\rm d}\alpha(v)}{\,{\rm d}v}(\langle a(\tfrac{\vert n\vert}{N})\rangle_{\pi_1}-\langle a(\tfrac{\vert n\vert}{N})\rangle_{\pi_2})\Big)
 \frac{f(\infty)}{f(v)(f(\infty)-f(v))},
 \end{align*}
 where $f(\infty)=a(\frac{k^*}{N})$.
\end{lemma}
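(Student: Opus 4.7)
The statement is a direct derivative computation once one recognises that the measures $\pi_1, \pi_2$ of the exponential-family form \eqref{BasicProbability} satisfy a covariance identity under $v$-differentiation.

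\medskip

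\noindent\emph{Step 1 (logarithmic derivative).} Denote $f_\infty = f(\infty) = a(k^\ast/N)$ (which is the same value for both $\pi_1$ and $\pi_2$ provided we take $k^\ast$ to be the largest index with positive weight in either measure; the contribution from that level dominates as $v\to\infty$). A direct computation gives
\begin{equation*}
\frac{\,{\rm d}}{\,{\rm d}v}\ln\!\Big(\frac{f(v)}{f_\infty - f(v)}\Big) = \frac{f'(v)}{f(v)} + \frac{f'(v)}{f_\infty - f(v)} = f'(v)\,\frac{f_\infty}{f(v)(f_\infty - f(v))},
\end{equation*}
so it suffices to compute $v f'(v)$ and multiply.

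\medskip

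\noindent\emph{Step 2 (covariance identity for a single $\pi_i$).} For $\pi_i(n) = \mu_i(n) v^{|n|}/Z_i(v)$ and $f_i(v) = \langle a(|n|/N)\rangle_{\pi_i}$, differentiate in $v$ using $v\, Z_i'(v)/Z_i(v) = \langle |n|\rangle_{\pi_i}$:
\begin{equation*}
v f_i'(v) = \langle a(\tfrac{|n|}{N})\,|n|\rangle_{\pi_i} - f_i(v)\langle |n|\rangle_{\pi_i} = N\,{\rm Cov}_{\pi_i}\!\big(a(\tfrac{|n|}{N}),\tfrac{|n|}{N}\big).
\end{equation*}
This is the same identity (for a single measure) that underlies Theorem~\ref{SecondFormula}, and it is the technical heart of the argument.

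\medskip

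\noindent\emph{Step 3 (product rule on the mixture).} Writing $f(v) = \alpha(v) f_1(v) + (1-\alpha(v)) f_2(v)$, the product rule gives
\begin{equation*}
v f'(v) = v\alpha'(v)\big(f_1(v) - f_2(v)\big) + \alpha(v)\, v f_1'(v) + (1-\alpha(v))\, v f_2'(v).
\end{equation*}
Substituting the covariance expression from Step~2 for each $v f_i'(v)$ and inserting this into the logarithmic derivative formula from Step~1 yields exactly the claimed identity, with the three summands inside the brackets coming respectively from the two covariance contributions and from the $\alpha'(v)$ term.

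\medskip

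\noindent\emph{What is routine vs.\ subtle.} Steps~1 and~3 are bookkeeping. The only substantive point is Step~2, where the specific form \eqref{BasicProbability} of $\pi_i$ is used: differentiation in $v$ of an expectation under such a measure returns a covariance between the observable and $|n|$, because $v$ plays the role of the natural exponential-family parameter associated with the sufficient statistic $|n|$. No further subtleties arise; the formula in Theorem~\ref{SecondFormula} is then recovered as the special case $\alpha\equiv 1$ (so that the third term vanishes) upon dividing both sides by $f_\infty$.
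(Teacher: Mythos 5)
Your proof is correct and follows essentially the same route as the paper: both first reduce the logarithmic derivative to $v f'(v)\,f(\infty)/\bigl(f(v)(f(\infty)-f(v))\bigr)$ and then obtain $vf'(v)$ by differentiating the mixture, which produces the two covariance terms and the $\alpha'$ term. Your Step~2 simply spells out the single-measure covariance identity that the paper leaves as ``an explicit computation,'' so your write-up is, if anything, more complete than the original.
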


\proof
Recall that by~\eqref{BasicProbability},
$$\pi_{\alpha}(n) = \alpha(v) \cdot \frac{\mu_1(n) v^{|n|}}{Z_1(v)} + \lp1-\alpha(v)\rp \cdot \frac{\mu_2(n) v^{|n|}}{Z_2(v)},$$
where $\mu_i$ and $Z_i$ are the non-negative weights, respectively the normalization constants of the measures $\pi_i$ with $i \in\left\{1,2 \right\}$.
We compute thus
$$v\frac{\,{\rm d}}{\,{\rm d}v}\ln\Big(\frac{f(v)}{f(\infty)-f(v)}\Big) = v\cdot \frac{f'(v) f(\infty)}{f(v)\lp f(\infty) - f(v) \rp}.$$
An explicit computation of $f'(v) = \sum_{n \in \Lambda} a(\tfrac{|n|}{N})\cdot\frac{\,{\rm d}}{\,{\rm d}v}\pi_{\alpha}(n)$ leads to the result.
\endproof


%

\begin{proof}[Proof of Theorem~\ref{SecondFormula}]
The first part of the theorem follows directly from Lemma~\ref{Mixture} by considering a mixture coefficient $\alpha(v)\equiv1$. For the asymptotic behavior of $\eta_H(v)$, notice that $\lim_{v\to\infty} \langle a( \tfrac{|n|}{N}) \rangle_{\pi} = a(\tfrac{k^*}{N})$. We compute
\begin{eqnarray*}
\lim_{v\to \infty} \frac{{\rm Cov}_{\pi}\lp a(\tfrac{|n|}{N}),\tfrac{|n|}{N} \rp}{a(\tfrac{k^*}{N}) - \langle a(\tfrac{|n|}{N}) \rangle_{\pi}} 
& = & \frac{1}{N} \lim_{v\to\infty} \frac{\frac{\sum_k k a(\tfrac{k}{N}) V(k) v^k}{Z(v)} - \frac{\sum_k k V(k) v^k\sum_k a(\tfrac{k}{N})V(k) v^k}{(Z(v))^2} }{a(\tfrac{k^*}{N}) - \frac{\sum_k a(\tfrac{k}{N}) V(k) v^k}{Z(v)}}.
\end{eqnarray*} 
We will compute each division and keep track of the leading coefficients in the resulting polynomials,
\begin{align*}
&\frac{\sum_k k a(\tfrac{k}{N})V(k) v^k}{Z(v)} = k^*a(\tfrac{k^*}{N}) + \frac{V(k^*-1)\lp (k^*-1)a(\tfrac{k^*-1}{N}) - k^* a(\tfrac{k^*}{N}) \rp v^{k^*-1} + ...}{V(k^*)v^{k^*} + V(k^*-1)v^{k^*-1} + ...},\\
& \frac{\sum k V(k) v^k\sum a(\tfrac{k}{N})V(k) v^k}{(Z(v))^2}  =  k^*a(\tfrac{k^*}{N}) \\
 &\qquad +  \frac{V(k^*)V(k^*-1)\lp k^*a(\tfrac{k^*-1}{N}) - a(\tfrac{k^*}{N}) (1 + k^*)\rp v^{2k^*-1} + ...}{(V(k^*))^2v^{2k^*} + 2V(k^*)V(k^*-1)v^{2k^*-1} + ...},\\
&\frac{\sum_k a(\tfrac{k}{N}) V(k) v^k}{Z(v)} = a(\tfrac{k^*}{N}) + \frac{\lp a(\tfrac{k^*-1}{N}) - a(\tfrac{k^*}{N}) \rp V(k^*-1)v^{k^*-1} + ...}{V(k^*)v^{k^*} + ...}.
\end{align*}
All together, this gives
$$\lim_{v\to \infty} \frac{{\rm Cov}_{\pi}\lp a(\tfrac{|n|}{N}),\tfrac{|n|}{N} \rp}{a(\tfrac{k^*}{N}) - \langle a(\tfrac{|n|}{N}) \rangle_{\pi}} = \frac{1}{N},$$
so that $\lim_{v\to\infty}\eta_H(v) = 1$. The limit when $v\to 0$ is computed in the same way.
\end{proof}

\section{Proof of Theorem~\ref{AsymptoticHill}}

\begin{proof}[Proof of Theorem~\ref{AsymptoticHill}]
By Theorem~\ref{SecondFormula}, we have
$$\lim_{N \to \infty}\frac{\eta_H(v)}{N} =\lim_{N\to \infty} \frac{\textrm{Cov}_{\bar\pi_N}(X_N,a(X_N))a(\tfrac{k^*}{N})}{f(v) (a(\tfrac{k^*}{N}) - f(v))}.$$
Assume first $m=1$, that is the entropy $I(x)$ has a unique equilibrium $0<x_1<1$. In this case, it was already established in~\cite{chan1998largedeviation} that $\bar\pi_N$ converges weakly to a Dirac mass $\delta_{x_1}$. Therefore, $f(v) = \langle a(X_N) \rangle_{\bar\pi_N} \xrightarrow[N\to\infty]{} a(x_1)$. Since by assumption $a$ is strictly increasing and $0<x_1<\lim \frac{k^*}{N}$, the denominator of the above expression is well defined. Using in addition the assumption of continuity and boundedness of $a$, we obtain $$\textrm{Cov}_{\bar\pi_N}(a(X_N),\,X_N) \xrightarrow[N\to\infty]{} \textrm{Cov}_{\delta_{x_1}}(a(X_{\infty}),\,X_{\infty}) = 0,$$ 
where $X_{\infty}$ denotes the limiting process of $X_N$ whose degenerate steady state is $\delta_{x_1}$. This establishes the first part of the statement.\\

Assume now the existence of $m>1$ different equilibria with $x_1<x_2<\ldots<x_m$ and suppose for convenience that $k^*=N$. By Corollary~\ref{cor:WeakConvergence}, we know that $\bar\pi_N$ converges weakly to $\sum_{i=1}^m c_i \delta_{x_i}$ for some positive constants $c_1,\ldots, c_m$ satisfying $\sum_{i=1}^m c_i = 1$. Continuity and boundedness of the function $a$ lead to 
\begin{eqnarray*}
\lim_{N\to \infty}\textrm{Cov}_{\bar\pi_N}\lp a(X_N), X_N\rp & = & \textrm{Cov}_{\sum_{i=1}^m c_i \delta_{x_i}}\lp a(X_{\infty}),\, X_{\infty}\rp\\
& = & \sum_{i=1}^m c_i x_i a(x_i) - \sum_{i=1}^m c_i x_i \sum_{j=1}^m c_j a(x_j)
\end{eqnarray*}
and the right hand term is positive. Indeed, 
\begin{eqnarray*}
\sum_{i=1}^m c_i x_i a(x_i) & = & \sum_{j=1}^m c_j \sum_{i=1}^m c_i x_i a(x_i) \\
& = & \sum_{i=1}^m c_i^2 x_i a(x_i ) + \sum_{j=1}^m \sum_{i>j}c_j c_i x_i a(x_i) + \sum_{j=1}^m \sum_{i<j}c_j c_i x_i a(x_i)\\
& = & \sum_{i=1}^m c_i^2 x_i a(x_i ) + \sum_{k>l}c_k c_l \lp x_l a(x_l) + x_k a(x_k) \rp
\end{eqnarray*}
and $x_l a(x_l) + x_k a(x_k) > x_k a(x_l) + x_l a(x_k)$ for $k>l$, since $x_l<x_k$ and $a(x_l)<a(x_k)$. Therefore,
\begin{eqnarray*}
\sum_{i=1}^m c_i x_i a(x_i) 
& > & \sum_{i=1}^m c_i^2 x_i a(x_i ) + \sum_{k>l}c_k c_l \lp x_k a(x_l) + x_l a(x_k)\rp \\
& = & \sum_{i=1}^m c_i x_i \sum_{j=1}^m c_j a(x_j),
\end{eqnarray*}
which concludes the proof.
\end{proof}

\subsection{Technical lemmas} \label{sec:technical:lemmas} 
In this part, we compute precise large deviations of the steady state measure $\bar\pi_N$ in order to establish that it charges all global minima of the entropy function $I(x)$ (see Lemma~\ref{lemma:pi:charges:equilibr}) so that it converges weakly to a combination of Dirac measures (see Corollary~\ref{cor:WeakConvergence}). Lemmas~\ref{lemma:estimates:e1}, \ref{lemma:pi(j/N)} and \ref{lemma:pi(0):J(0)} are intermediate results needed for the proof.

\begin{lemma}\label{lemma:pi:charges:equilibr}
Suppose that \eqref{Condition1} and Assumptions~\ref{assump:r} are satisfied. Then, the probability measure $\bar\pi_N$ charges all equilibria $x_i$, $i=1,\ldots, m$. That is, for all $i=1,\ldots,m$, 
\begin{equation}\label{liminf}
\liminf_{N\to\infty} \bar\pi_N((x_i-\varepsilon,x_i+\varepsilon))>0,\qquad\text{ for any }\varepsilon > 0.
\end{equation}
\end{lemma}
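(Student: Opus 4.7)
The plan is to use the precise large deviation lower bound~\eqref{Precise}, which is the conclusion of Lemma~\ref{lemma:pi(j/N)}, together with a Gaussian (Laplace) approximation around each global minimum $x_i$ of the entropy function $I$.

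First, fix $i\in\{1,\ldots,m\}$ and $\varepsilon>0$ small enough that the neighborhood $(x_i-\varepsilon,x_i+\varepsilon)$ contains no other equilibrium of~\eqref{ODE} and that $I\in\mathcal{C}^2$ with $I''>0$ on the closure of this interval (using item~(1) of Assumptions~\ref{assump:r}). Since $x_i$ is a \emph{global} minimum of $I$ on $[0,1]$, we have $I(x_i)=0$ and, by item~(1), also $I'(x_i)=0$. A second-order Taylor expansion therefore gives
\begin{equation*}
I(x)=\tfrac12 I''(x_i)(x-x_i)^2+o\bigl((x-x_i)^2\bigr),\qquad x\to x_i,
\end{equation*}
with $I''(x_i)>0$. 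Next, invoke~\eqref{Precise}: there exists $\gamma_->0$ such that for all $j$ with $|j/N-x_i|<\varepsilon$,
\begin{equation*}
\bar\pi_N(j/N)\;\ge\;\frac{\gamma_-}{\sqrt N}\,\exp\!\bigl(-N I(j/N)\bigr).
\end{equation*}

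Summing over such indices $j$ and interpreting the result as a Riemann sum with mesh $1/N$,
\begin{equation*}
\bar\pi_N\bigl((x_i-\varepsilon,x_i+\varepsilon)\bigr)\;\ge\;\frac{\gamma_-}{\sqrt N}\sum_{j:\,|j/N-x_i|<\varepsilon}\exp\!\bigl(-N I(j/N)\bigr)\;\ge\;\gamma_-\sqrt{N}\!\int_{x_i-\varepsilon/2}^{x_i+\varepsilon/2}\!\exp\!\bigl(-N I(x)\bigr)\,{\rm d}x
\end{equation*}
for all $N$ sufficiently large (a standard Riemann-sum comparison, using continuity of $I$, allows us to shrink the interval slightly to absorb endpoint errors). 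The remaining integral is then estimated by Laplace's method using the Taylor expansion above: the change of variables $u=\sqrt{NI''(x_i)}\,(x-x_i)$ yields
\begin{equation*}
\sqrt{N}\int_{x_i-\varepsilon/2}^{x_i+\varepsilon/2}\exp\!\bigl(-N I(x)\bigr)\,{\rm d}x\;\xrightarrow[N\to\infty]{}\;\sqrt{\frac{2\pi}{I''(x_i)}}\;>\;0,
\end{equation*}
so that $\liminf_{N\to\infty}\bar\pi_N((x_i-\varepsilon,x_i+\varepsilon))\ge\gamma_-\sqrt{2\pi/I''(x_i)}>0$, which is~\eqref{liminf}.

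The main conceptual content is already in Lemma~\ref{lemma:pi(j/N)}, which provides the lower bound of~\eqref{Precise} with the correct $1/\sqrt{N}$ normalization; the remainder is a standard Gaussian/Laplace argument. The only delicate point is to ensure that the prefactor $\gamma_-$ in~\eqref{Precise} is \emph{uniform} in $j$ over the whole neighborhood $(x_i-\varepsilon,x_i+\varepsilon)$ (and not just for $j/N$ very close to $x_i$); this is where items~(2)--(5) of Assumptions~\ref{assump:r}, in particular the piecewise-$\mathcal{C}^2$ regularity of $r$ and the control~\eqref{NewCondition} on the discrete/continuous approximation error, will be needed in the proof of Lemma~\ref{lemma:pi(j/N)}.
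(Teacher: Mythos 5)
Your argument is correct and takes essentially the same route as the paper: both rest on the lower bound of Lemma~\ref{lemma:pi(j/N)} (i.e.\ the left inequality in \eqref{Precise}) combined with the second-order expansion of $I$ around $x_i$ using $I''(x_i)>0$, extracting the mass from a window of width of order $1/\sqrt{N}$ around $x_i$ where $NI(j/N)$ stays bounded. The only immaterial difference is that the paper directly counts the roughly $2\sqrt{N}$ discrete terms in $[x_i-1/\sqrt{N},\,x_i+1/\sqrt{N}]$, each bounded below by a constant times $1/\sqrt{N}$, whereas you convert the sum to a Riemann integral and evaluate it by Laplace's method; both give the same positive liminf.
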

\begin{proof}
Let us denote by $B^{i}_{\frac{1}{\sqrt{N}}}=\left[ x_{i}-\frac{1}{\sqrt{N}}, x_{i}+\frac{1}{\sqrt{N}}\right]$, $i=1,\ldots,m$, some neighborhoods of the equilibria. A second order expansion of the entropy function around its minima $x_{i}$ yields
$$I(\tfrac{j}{N})=I(x_i) + I'(x_i)(\tfrac{j}{N}-x_i) + \frac{I''(\psi_j)}{2}(\tfrac{j}{N}-x_i)^2 = \frac{I''(\psi_j)}{2}(\tfrac{j}{N}-x_i)^2, $$
for all $\frac{j}{N}\in B^{i}_{\frac{1}{\sqrt{N}}}$ and with $\psi_j$ such that $|x_i-\psi_j | \leq |\frac{j}{N}-\psi_j |$. Notice that for all $\frac{j}{N}\in B^{i}_{h}$, $(\tfrac{j}{N}-x_i)^2\leq \frac{1}{N}$. This consideration, as well as Lemma~\ref{lemma:pi(j/N)} show that
\begin{align*}
\bar\pi_N\lp B^{i}_{\frac{1}{\sqrt{N}}}\rp &= \sum_{j:\frac{j}{N}\in B^{i}_{\frac{1}{\sqrt{N}}}} \bar\pi_N\lp \tfrac{j}{N}\rp \geq \sum_{j:\frac{j}{N}\in B^{i}_{\frac{1}{\sqrt{N}}}}\underline\gamma \,\frac{\bar\pi_N(0)}{\sqrt{N}e^{N J_0}}\,\frac{b(0)}{b(\frac{j}{N})}\,\frac{1}{\sqrt{r(\frac{j}{N})}}\,e^{-N I(\frac{j}{N})}\\
&\geq \underline\gamma \,b(0) \,\frac{\bar\pi_N(0)}{e^{N J_0}}\, \frac{1}{\sqrt{N}}\sum_{j:\frac{j}{N}\in B^{i}_{\frac{1}{\sqrt{N}}}}\frac{1}{\sqrt{b(\frac{j}{N})d(\frac{j}{N})}}\,e^{- \frac{I''(\psi_j)}{2}}.
\end{align*}
All terms $I''(\psi_j)$ can be upper bounded by a constant, as $\psi_j$ belongs to $B^{i}_{\frac{1}{\sqrt{N}}}$. Furthermore, there are approximatively $2\sqrt{N}$ terms in the sum $\sum_{j:\frac{j}{N}\in B^{i}_{\frac{1}{\sqrt{N}}}}$. Thus, applying Lemma~\ref{lemma:pi(0):J(0)}, we find that 
\begin{align*}
\bar\pi_N\lp B^{i}_{\frac{1}{\sqrt{N}}}\rp &\geq \underline\gamma \,b(0) \,\frac{\bar\pi_N(0)}{e^{N J_0}}\, \inf_{j:\frac{j}{N}\in B^{i}_{\frac{1}{\sqrt{N}}}}\frac{1}{\sqrt{b(\frac{j}{N})d(\frac{j}{N})}}\\
&=\tilde\gamma_1 \,\frac{\bar\pi_N(0)}{e^{N J_0}}\xrightarrow[N\to\infty]{} \tilde\gamma_2 >0
\end{align*}
for two positive constants $\gamma_1$ and $\gamma_2$, where we used that $\frac{1}{\sqrt{b(x)d(x)}}$ can be lower bounded from 0 in the neighborhood of all $x_i$, $i=1,\ldots,m$. This shows that the steady measure $\bar\pi_N$ concentrates around the equilibria. 
\end{proof}

\begin{corollary}\label{cor:WeakConvergence}
Assume the same hypotheses as in Lemma \ref{lemma:pi:charges:equilibr}, then
$$\bar\pi_N \Rightarrow \sum_{i=1}^m c_i \delta_{x_i},$$
where $m$ is the number of roots $x_i$ of $I(x)$, $c_i>0$ are constants for all $i=1,...,m$ and where $\Rightarrow$ denotes the weak convergence of measures.
\end{corollary}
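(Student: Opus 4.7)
The plan is to combine the precise large deviation bounds from Lemma~\ref{lemma:pi(j/N)} with the fact that the entropy function $I$ is nonnegative and vanishes exactly on $\lac x_1,\ldots,x_m\rac$. This will show that the mass of $\bar\pi_N$ concentrates asymptotically on the equilibria, with each equilibrium receiving a positive share. Compactness of $[0,1]$ yields automatic tightness of $\lac \bar\pi_N\rac_N$, so weak convergence to $\sum_i c_i\delta_{x_i}$ will follow once the limit $c_i>0$ of the mass in each small neighborhood of $x_i$ is identified.

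\textbf{Escape of mass.} First I would fix $\varepsilon>0$ small enough so that the $2\varepsilon$-neighborhoods of the $x_i$ are pairwise disjoint, and set $E_\varepsilon=[0,1]\setminus\bigcup_{i=1}^m(x_i-\varepsilon,x_i+\varepsilon)$. Since $I$ is continuous, $E_\varepsilon$ is compact and $I>0$ on $E_\varepsilon$, so $\delta(\varepsilon):=\inf_{E_\varepsilon}I>0$. The upper bound in (\ref{Precise}) then gives
$$\bar\pi_N(E_\varepsilon)\;\le\;\sum_{j/N\in E_\varepsilon}\frac{\gamma_+}{\sqrt{N}}\,e^{-NI(j/N)}\;\le\;\gamma_+\sqrt{N}\,e^{-N\delta(\varepsilon)}\;\xrightarrow[N\to\infty]{}\;0,$$
so all mass escapes the complement of any fixed neighborhood of the equilibria set.

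\textbf{Identifying the coefficients.} Next, for each $i$, I would refine the argument of Lemma~\ref{lemma:pi:charges:equilibr}: using the sharp pointwise form of $\bar\pi_N(j/N)$ behind the bound (\ref{Precise}), which is essentially $\bar\pi_N(j/N)\sim\bar\pi_N(0)\,e^{NJ_0}\,\frac{b(0)}{b(j/N)\sqrt{r(j/N)}}\,\frac{1}{\sqrt{N}}\,e^{-NI(j/N)}$, and the second-order expansion $I(j/N)=\tfrac12 I''(x_i)(j/N-x_i)^2+o\bigl((j/N-x_i)^2\bigr)$, a Riemann-sum / Laplace computation yields
$$\bar\pi_N\bigl((x_i-\varepsilon,\,x_i+\varepsilon)\bigr)\;=\;\bar\pi_N(0)\,e^{NJ_0}\cdot\frac{b(0)}{b(x_i)\sqrt{r(x_i)}}\cdot\sqrt{\frac{2\pi}{I''(x_i)}}\,\bigl(1+o(1)\bigr)$$
as $N\to\infty$, for all sufficiently small $\varepsilon$. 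Applying Lemma~\ref{lemma:pi(0):J(0)} to control the prefactor $\bar\pi_N(0)\,e^{NJ_0}$ then shows that each mass converges to an explicit positive limit $\tilde c_i$; dividing by the total (which tends to $1$ by the previous step) yields $c_i=\tilde c_i/\sum_j\tilde c_j>0$, and hence $\bar\pi_N\Rightarrow\sum_i c_i\delta_{x_i}$.

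\textbf{Main obstacle.} The delicate point is passing from the two-sided bound (\ref{Precise}) to a genuine convergence of the mass on each neighborhood, rather than the mere positive liminf provided by Lemma~\ref{lemma:pi:charges:equilibr}. Making this rigorous requires either exploiting the single continuous prefactor behind the constants $\gamma_-,\gamma_+$, or splitting $(x_i-\varepsilon,x_i+\varepsilon)$ into a Gaussian core where a uniform Taylor expansion of $I$ is valid and tails that are dominated via the exponential upper bound. A secondary subtlety is to verify that the discontinuities $d_1,\ldots,d_K$ of $r$ from Assumption~\ref{assump:r}(4) do not coincide with any equilibrium $x_i$, so that $I$ is genuinely $\mathcal{C}^2$ in a neighborhood of each $x_i$ and the Laplace expansion applies without correction terms.
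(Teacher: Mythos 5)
Your proposal is correct in spirit but follows a genuinely different route from the paper, and it is worth separating what works from what overreaches. The paper's own proof is much shorter: it only uses the lower bound of Lemma~\ref{lemma:pi:charges:equilibr} to get $\liminf_N\bar\pi_N(B)\ge\sum_i c_i\delta_{x_i}(B)$ for open $B$ and then invokes the Portmanteau criterion; it never proves your ``escape of mass'' step, which is in fact needed to know that the limit charges \emph{only} the equilibria, so that part of your argument is a real improvement (and your estimate $\bar\pi_N(E_\varepsilon)\le\gamma_+\sqrt{N}e^{-N\delta(\varepsilon)}\to0$ is sound, up to the polynomially bounded prefactors $\tfrac{b(0)}{b(j/N)\sqrt{r(j/N)}}$ hidden in~(\ref{Precise}), which the exponential still dominates). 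Where you overreach is the ``identifying the coefficients'' step: Lemma~\ref{lemma:pi(j/N)} only gives a two-sided bound with \emph{distinct} constants $\underline\gamma<\bar\gamma$ (coming from the non-vanishing error terms in~(\ref{NewCondition}) and Lemma~\ref{lemma:estimates:e1}), so no $(1+o(1))$ Laplace asymptotics with the constant $\sqrt{2\pi/I''(x_i)}$ can be extracted from the stated lemmas --- you correctly flag this as the main obstacle, but it is not resolvable with the tools the paper provides; there is also a sign slip, since the prefactor should be $\bar\pi_N(0)e^{-NJ_0}=\bar\pi_N(0)/e^{NJ_0}$, which is what Lemma~\ref{lemma:pi(0):J(0)} controls. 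Fortunately the corollary does not require the explicit values of the $c_i$, only their positivity, so the correct repair is to drop the Laplace computation and combine your escape-of-mass step with the $\liminf$ of Lemma~\ref{lemma:pi:charges:equilibr}: every subsequential weak limit is then a probability measure supported on $\lac x_1,\ldots,x_m\rac$ giving each $x_i$ positive mass. (Pinning down that all subsequential limits agree is exactly the point that neither your argument nor the paper's fully settles, since $\sum_i\liminf_N\bar\pi_N((x_i-\varepsilon,x_i+\varepsilon))$ is not shown to equal $1$; the paper glosses over this in the same way.) Your secondary remark that the discontinuities $d_k$ of $r$ must avoid the $x_i$ is already guaranteed by Assumption~\ref{assump:r}(1), which requires $I''>0$ in a neighborhood of each equilibrium.
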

\proof
Let $B$ an open set in $\left[0,1 \right]$. Following Lemma~\ref{lemma:pi:charges:equilibr} it exists $\epsilon_i$ sufficiently small such that
$$\liminf_{N\to\infty}\bar\pi_N(B) \geq \liminf_{N\to\infty}\bar\pi_N\lp \bigcup_{i=1}^m \left[x_i - \epsilon_i \, x_i + \epsilon_i \right] \cap B \rp,$$
and where the union is upon disjoint subsets of $\left[0,\; 1 \right]$. It follows that
\begin{eqnarray*}
\liminf_{N\to\infty}\bar\pi_N(B) & \geq & \sum_{i=1}^m \liminf_{N\to\infty} \bar\pi_N \lp \left[x_i - \epsilon_i \, x_i + \epsilon_i \right] \cap B \rp \\
& = & \sum_{i=1}^m \bar\pi_N \lp \left[x_i - \epsilon_i \, x_i + \epsilon_i \right] \cap B \, | \, x_i \in B \rp \Prob(x_i \in B) + \\ 
& \quad & \bar\pi_N \lp \left[x_i - \epsilon_i \, x_i + \epsilon_i \right] \cap B \, | \, x_i \notin B \rp \Prob(x_i \notin B) \\
& \geq & \sum_{i=1}^m \bar\pi_N \lp \left[x_i - \epsilon_i \, x_i + \epsilon_i \right] \cap B \, | \, x_i \in B \rp \Prob(x_i \in B)
\end{eqnarray*}
We can choose $\epsilon_i$ sufficiently small such that, by Lemma~\ref{lemma:pi:charges:equilibr},
$$\bar\pi_N \lp \left[x_i - \epsilon_i \, x_i + \epsilon_i \right] \cap B \, | \, x_i \in B \rp \geq \bar\pi_N \lp \left[x_i - \epsilon_i \, x_i + \epsilon_i \right]\rp \geq c_i. $$ 
Thus,
$$\liminf_{N\to\infty}\bar\pi_N(B) \geq \sum_{i=1}^M c_i \delta_{x_i}(B),$$
which implies the weak convergence of the probability measures (see e.g.~\cite[Thm.~13.6]{Klenke2008Book}).
\endproof

\begin{lemma}\label{lemma:estimates:e1}
Suppose Assumptions~\ref{assump:r} are satisfied and let 
$$\mathcal{E}^1(\tfrac{j}{N})=\sum_{k=1}^{j}\ln(r(\tfrac{k}{N}))-N\int_{\frac{1}{N}}^{\frac{j}{N}}\ln(r(u))\,{\rm d}u\\$$
denote a error term. There exist constants $\gamma_{1}, \gamma_{2}, \gamma_{3}, \gamma_{4}$ such that for $N$ large enough, the following inequalities
\begin{align*}
\gamma_{1}+\ln\lp\tfrac{1}{N}\rp \leq \,&N\int_{0_+}^{\frac{1}{N}} \ln\lp r(u)\rp\,{\rm d}u \,\leq \gamma_{2}+\ln\lp\tfrac{1}{N}\rp\\
\gamma_{1}+\ln\lp\tfrac{1}{N}\rp \leq \,&N\int_{1-\frac{1}{N}}^{1_-} \ln\lp r(u)\rp\,{\rm d}u \,\leq \gamma_{2}+\ln\lp\tfrac{1}{N}\rp
\end{align*}
and
$$\gamma_{3}+\frac{1}{2}\,\ln\lp\tfrac{1}{N}\rp+\frac{1}{2}\,\ln( r(\tfrac{j}{N})) \leq \mathcal{E}^1(\tfrac{j}{N}) \leq \gamma_{4}+\frac{1}{2}\,\ln\lp\tfrac{1}{N}\rp+\frac{1}{2}\,\ln( r(\tfrac{j}{N}))$$
are satisfied for all $1\leq j< 1$.
\end{lemma}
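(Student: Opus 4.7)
The plan is to treat the two integral bounds and the estimate on $\mathcal{E}^1$ separately. In both parts the essential input is the boundary behavior of $b$ and $d$, hence of $\ln(r) = \ln(d/b)$, given by Assumption~\ref{assump:r}.3, together with the piecewise $\mathcal{C}^2$ regularity on the interior from Assumption~\ref{assump:r}.4.

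For the first integral, I would apply Assumption~\ref{assump:r}.3 at the left endpoint to write $\ln(r(u)) = \ln u + \ln(\ell_{d,0}/\ell_{b,0}) + \varepsilon_0(u)$ with $\varepsilon_0(u)\to 0$ as $u\to 0+$. A direct computation gives $\int_0^{1/N}\ln u\,du = (1/N)\ln(1/N) - 1/N$, and the remainder $\int_0^{1/N}\varepsilon_0(u)du$ is $o(1/N)$; multiplying by $N$ therefore yields the bound, with $\gamma_1$ and $\gamma_2$ chosen to bracket the constant $-1 + \ln(\ell_{d,0}/\ell_{b,0})$ for $N$ sufficiently large. The analogous step at the right endpoint uses the substitution $v=1-u$ together with the boundary asymptotics $b(u)\sim \ell_{b,1}(1-u)$, $d(u)\sim \ell_{d,1}$, which after change of variables reduces to the same type of computation.

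For the bound on $\mathcal{E}^1(j/N)$, the strategy is a trapezoidal-rule comparison between the Riemann sum and the integral. Applying the trapezoidal rule on each subinterval $[k/N,(k+1)/N]$ for $k=1,\ldots,j-1$ and summing produces the identity
\[
\mathcal{E}^1(\tfrac{j}{N}) = \tfrac{1}{2}\ln r(\tfrac{1}{N}) + \tfrac{1}{2}\ln r(\tfrac{j}{N}) + R_N(j),
\]
where $R_N(j) = -N\sum_{k=1}^{j-1}e_k$ and each $e_k$ is the trapezoidal error on the corresponding subinterval. The first part of the lemma already gives $\ln r(1/N) = \ln(1/N) + O(1)$, so everything reduces to showing $R_N(j) = O(1)$ uniformly in $j$. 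Using $|e_k|\leq (12 N^3)^{-1}\sup_{[k/N,(k+1)/N]}|(\ln r)''|$ together with the singular behavior $|(\ln r)''(u)| = O(1/u^2)$ near $0$ and $O(1/(1-u)^2)$ near $1$, summability of $\sum 1/k^2$ yields $N\sum_k|e_k| = O(1)$. The finitely many discontinuities $d_1,\ldots,d_K$ of $r$, $r'$, $r''$ from Assumption~\ref{assump:r}.4 are absorbed by splitting the sum at those points, each jump contributing only an additive constant.

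The main obstacle will be this uniform-in-$j$ control of $R_N(j)$ in the presence of the endpoint singularities. The key observation that resolves it is that the quadratic blow-up of $(\ln r)''$ at the endpoints is exactly matched by the $1/N^3$ decay of the per-interval trapezoidal error, so that $\sum_{k\geq 1}1/k^2 < \infty$ suffices to close the estimate. Once this is in hand, the constants $\gamma_3, \gamma_4$ are read off by collecting the $O(1)$ terms from the expansion of $\ln r(1/N)$ and from $R_N(j)$.
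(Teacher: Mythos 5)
Your proposal is correct and follows essentially the same route as the paper's proof: the endpoint integrals are computed from the boundary asymptotics of Assumption~\ref{assump:r}.3 via $\int_0^{1/N}\ln u\,{\rm d}u$, and $\mathcal{E}^1$ is controlled by the per-interval trapezoidal rule, with the $O(N^2/k^2)$ blow-up of $(\ln r)''$ near the endpoints cancelled against the $1/(12N^3)$ error factor through the convergence of $\sum 1/k^2$, and the finitely many discontinuities absorbed into an additive constant. The identification of the $\tfrac{1}{2}\ln(1/N)$ term via $\ln r(1/N)=\ln(1/N)+O(1)$ is exactly the mechanism used in the paper.
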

\begin{proof}
The first two pairs of inequalities follows from the definition of $r$ and from~\eqref{eq:condition:bx:dx}, since, for $N$ large, we have
$$N\int_{0_+}^{\frac{1}{N}} \ln\lp r(u)\rp\,{\rm d}u\approx N\int_{0_+}^{\frac{1}{N}} \ln\lp u\rp\,{\rm d}u = \ln\lp\tfrac{1}{N}\rp -1$$
and similarly for the integral from $1-\frac{1}{N}$ to $1_-$. In order to prove the last pair of inequalities, we use the trapezoidal rule in numerical analysis which states that, for any $\mathcal{C}^{2}$ function $g$ in an interval $[a,b]$, there exists $\xi\in[a,b]$ such that 
$$\frac{1}{2}\,g(a)+\frac{1}{2}\,g(b)-\frac{1}{b-a}\,\int_{a}^{b}g(u)\,{\rm d}u=\frac{(b-a)^{2}}{12}\,g^{\prime\prime}(\xi).$$
This implies that for any $N\geq1$ and any $1\leq j\leq N$, there exist $(\xi_{k})_{k=1,\ldots,j-1}$ with $\xi_{k}\in[\frac{k}{N},\frac{k+1}{N}]$ such that 
$$\sum_{k=1}^{j} g(\tfrac{k}{N})-N\int_{\frac{1}{N}}^{\frac{j}{N}}g(u)\,{\rm d}u = \frac{1}{12 N^{2}}\sum_{k=1}^{j-1} g^{\prime\prime}(\xi_{k}) +\frac{1}{2}\,g(\tfrac{1}{N}) +\frac{1}{2}\,g(\tfrac{j}{N}).$$
By assumption, $\ln (r)$ is only piecewise ${\cal C}^2$ on $[\frac{1}{N}\,,1-\frac{1}{N}]$. Without generality, we may assume that for all $i$, $d_i\in\Q$ so that for $N$ large enough, $d_i N\in\N$. The case $d_i\in\R$ is obtained similarly, using a density argument. Let $K(j)$ be the number of $d_i$ smaller than $\frac{j}{N}$. Then, for all $1\leq j< N$, 
\begin{align*}
\mathcal{E}^1(\tfrac{j}{N}) =& \sum_{i=1}^{j-1}\lac \frac{1}{2}\ln(r(\tfrac{k}{N}+))+ \frac{1}{2}\ln(r(\tfrac{k+1}{N}-))-N\int_{\frac{k}{N}}^{\frac{k+1}{N}}\ln(r(u))\,{\rm d}u \rac \\
& +\ln(r(\tfrac{1}{N}))-\frac{1}{2}\ln(r(\tfrac{1}{N}+)) +\ln(r(\tfrac{j}{N}))-\frac{1}{2}\ln(r(\tfrac{j}{N}-)) \\
& + \sum_{k=2}^{j-1} \lac\ln(r(\tfrac{k}{N}))-\frac{1}{2}\ln(r(\tfrac{k}{N}-))-\frac{1}{2}\ln(r(\tfrac{k}{N}+))\rac\\
\approx& \frac{1}{12 N^{2}} \sum_{k=1}^{j-1} (\ln(r))^{\prime\prime}(\xi_{k})+ \,\frac{1}{2}\ln(r(\tfrac{1}{N}))+\frac{1}{2}\ln(r(\tfrac{j}{N})) \\
& + \sum_{i=1}^{K(j)}\lp \frac{1}{2}\ln(r(d_{i}-))- \frac{1}{2}\ln(r(d_{i}+)) \rp ,
\end{align*}
with $\xi_{k}\in[\frac{k}{N},\frac{k+1}{N}]$, where we used that $r$ is left-continuous in $]0,1[$, piecewise ${\cal C}^2$ and that for $N$ large enough, $r$ is continuous at $\frac{1}{N}$. The last sum of the above expression is bounded. Let us now consider the first term. Near $0$, $(\ln(r))^{\prime\prime} (\frac{k}{N})$ is approximatively equal to $-\frac{N^2}{k^2}$ (similarly near $1$, it is approximatively equal to $\frac{N^2}{N^2-k^2}$). Recalling that $\sum_{k\geq1}\frac{1}{k^2}$ converges and that away from $0$ and $1$, $(\ln(r))^{\prime\prime} (\frac{k}{N})$ is bounded, we deduce that there exist constants which lower (resp.~upper) bound $\frac{1}{12 N^{2}} \sum_{k=1}^{j-1} (\ln(r))^{\prime\prime}(\xi_{k})$. This concludes the proof.
\end{proof}

\begin{lemma}\label{lemma:pi(j/N)}
Suppose condition \eqref{Condition1} and Assumptions~\ref{assump:r} are satisfied. Then, there exist two positive constants $\underline\gamma, \bar\gamma$, such that 
$$\underline\gamma \,\frac{\bar\pi_N(0)}{\sqrt{N}e^{N J_0}}\,\frac{b(0)}{b(\frac{j}{N})}\,\frac{1}{\sqrt{r(\frac{j}{N})}}\,e^{-N I(\frac{j}{N})} \leq
\bar\pi_N\lp \tfrac{j}{N}\rp \leq
\bar\gamma \,\frac{\bar\pi_N(0)}{\sqrt{N}e^{N J_0}}\,\frac{b(0)}{b(\frac{j}{N})}\,\frac{1}{\sqrt{r(\frac{j}{N})}}\,e^{-N I(\frac{j}{N})} $$
for all $1\leq j\leq N-1$.
\end{lemma}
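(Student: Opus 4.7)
The plan is to work at the logarithmic level throughout, absorb every discrepancy into an additive $O(1)$ remainder, and then exponentiate to obtain the two multiplicative constants $\underline\gamma$ and $\bar\gamma$. Starting from the explicit product formula (\ref{Steady}), after reindexing,
$$\ln\bar\pi_N(\tfrac{j}{N})-\ln\bar\pi_N(0)
=\ln b^{(N)}(0)-\ln d^{(N)}(\tfrac{j}{N})-\sum_{k=1}^{j-1}\ln\!\left(\frac{d^{(N)}(\frac{k}{N})}{b^{(N)}(\frac{k}{N})}\right).$$
By Assumption~\ref{assump:r}.5, the bulk sum equals $\sum_{k=1}^{j-1}\ln r(\tfrac{k}{N})$ up to an additive error bounded uniformly in $j$ by $2M+o(1)$ (a telescoping between $j-1$ and $j$ of the cumulative quantity in~(\ref{NewCondition})).

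The second step is to turn the discrete sum of $\ln r$ into the integral defining $J$. Writing $\sum_{k=1}^{j-1}\ln r(\tfrac{k}{N})=\sum_{k=1}^{j}\ln r(\tfrac{k}{N})-\ln r(\tfrac{j}{N})$ and applying the last pair of inequalities in Lemma~\ref{lemma:estimates:e1} to $\mathcal{E}^1(j/N)$ yields
$$\sum_{k=1}^{j-1}\ln r(\tfrac{k}{N})
=N\!\int_{1/N}^{j/N}\!\ln r(u)\,{\rm d}u-\tfrac{1}{2}\ln N-\tfrac{1}{2}\ln r(\tfrac{j}{N})+O(1).$$
The first pair of inequalities in Lemma~\ref{lemma:estimates:e1} controls the missing piece $N\!\int_0^{1/N}\!\ln r(u)\,{\rm d}u=-\ln N+O(1)$, so the integral can be extended to $0$ and rewritten as $NJ(\tfrac{j}{N})+\ln N+O(1)$. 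Summing up,
$$\sum_{k=1}^{j-1}\ln r(\tfrac{k}{N})=NJ(\tfrac{j}{N})+\tfrac{1}{2}\ln N-\tfrac{1}{2}\ln r(\tfrac{j}{N})+O(1).$$

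Substituting into the starting identity and using $J(x)=I(x)+J_0$ gives
$$\ln\bar\pi_N(\tfrac{j}{N})
=\ln\bar\pi_N(0)+\ln b^{(N)}(0)-\ln d^{(N)}(\tfrac{j}{N})+\tfrac{1}{2}\ln r(\tfrac{j}{N})-\tfrac{1}{2}\ln N-NJ_0-NI(\tfrac{j}{N})+O(1).$$
It remains to identify the prefactor $b(0)/(b(\tfrac{j}{N})\sqrt{r(\tfrac{j}{N})})$. Since $\ln r(\tfrac{j}{N})=\ln d(\tfrac{j}{N})-\ln b(\tfrac{j}{N})$, one has
$$-\ln d^{(N)}(\tfrac{j}{N})+\tfrac{1}{2}\ln r(\tfrac{j}{N})
=-\ln b(\tfrac{j}{N})-\tfrac{1}{2}\ln r(\tfrac{j}{N})+\ln\!\left(\frac{d(\tfrac{j}{N})}{d^{(N)}(\tfrac{j}{N})}\right),$$
and the last logarithm is uniformly bounded in $j$: away from the boundary by pointwise convergence $d^{(N)}\to d$, and near $0$ and $1$ by Assumption~\ref{assump:r}.3, which forces $d^{(N)}(\tfrac{j}{N})$ and $d(\tfrac{j}{N})$ to have the same asymptotic order there (and similarly for $b$). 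Combined with $\ln b^{(N)}(0)-\ln b(0)=O(1)$, all remaining $O(1)$ errors can be collected into a single bounded additive term which exponentiates into the constants $\underline\gamma$ and $\bar\gamma$.

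The main obstacle is the uniform control of the boundary ratios $d(\tfrac{j}{N})/d^{(N)}(\tfrac{j}{N})$ and $b^{(N)}(0)/b(0)$, for $j\in\{1,N-1\}$: both limits $b$ and $d$ vanish at the respective endpoints, so naive pointwise convergence is not enough, and one must invoke Assumption~\ref{assump:r}.3 on the common asymptotic behavior of these rates near the boundary. Everything else is a bookkeeping exercise accumulating the $O(1)$ errors produced by Assumption~\ref{assump:r}.5 and the two sets of inequalities of Lemma~\ref{lemma:estimates:e1}.
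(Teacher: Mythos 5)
Your proposal follows essentially the same route as the paper: rewrite the product formula \eqref{Steady} to isolate a boundary prefactor and the sum $\sum\ln(d^{(N)}/b^{(N)})$, swap in $\sum\ln r$ via Assumption~\ref{assump:r}.5, convert the sum to $NJ(\tfrac{j}{N})$ plus the $\tfrac12\ln N+\tfrac12\ln r(\tfrac{j}{N})$ correction using Lemma~\ref{lemma:estimates:e1}, and exponentiate the accumulated $O(1)$ errors into $\underline\gamma,\bar\gamma$. The only differences are cosmetic (you factor out $b^{(N)}(0)/d^{(N)}(\tfrac{j}{N})$ with the sum running to $j-1$ rather than $b^{(N)}(0)/b^{(N)}(\tfrac{j}{N})$ with the sum to $j$, which forces the extra $d/d^{(N)}$ bookkeeping step), and you are in fact more explicit than the paper about the uniform control of the rate ratios near the boundary, which the paper absorbs into a single constant $e^{\gamma}$ without comment.
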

\begin{proof}
For all $j\geq1$, by~\eqref{Steady}, by the convergence of $b^{(N)}$ and $d^{(N)}$, and by~\eqref{NewCondition}, there exists a constant $\gamma$ such that for $N$ large enough, we have that
\begin{align*}
\bar\pi_N\lp \frac{j}{N}\rp
&=\bar\pi_N(0)\,\frac{b^{(N)}(0)}{b^{(N)}(\frac{j}{N})}\,\prod_{k=1}^{j}\frac{b^{(N)}(\frac{k}{N})}{d^{(N)}(\frac{k}{N})} =\bar\pi_N(0)\,\frac{b^{(N)}(0)}{b^{(N)}(\frac{j}{N})}\,\exp\lac-\sum_{k=1}^{j} \ln\lp \frac{d^{(N)}(\frac{k}{N})}{b^{(N)}(\frac{k}{N})}\rp\rac\\
&\geq \bar\pi_N(0)\,\frac{b(0)}{b(\frac{j}{N})}\,\exp\lac-\sum_{k=1}^{j} \ln r(\tfrac{k}{N})\rac e^{-\gamma}.
\end{align*}
By definition of $J(x)$ and of $I(x)$ in~\eqref{FreeEnergy}, we find, applying Lemma~\ref{lemma:estimates:e1}, that 
\begin{eqnarray*}
\bar\pi_N( \tfrac{j}{N}) 
 & \geq& \bar\pi_N(0)\,\frac{b(0)}{b(\frac{j}{N})}\,\exp\lac- N J(\tfrac{j}{N})+N\int_{0_+}^{\frac{1}{N}} \ln\lp r(u)\rp\,{\rm d}u- \mathcal{E}^1(\tfrac{j}{N})\rac e^{-\gamma}\\
 &\geq& \frac{\bar\pi_N(0)}{e^{N J_0}} \,\frac{b(0)}{b(\frac{j}{N})}\,\exp\lac- N I(\tfrac{j}{N})-\gamma_1 +\ln\lp\tfrac{1}{N}\rp-\gamma_4 -\frac{1}{2}\ln\lp\tfrac{1}{N}\rp-\frac{1}{2}\ln( r(\tfrac{j}{N}))\rac e^{-\gamma}\\
 &= &\frac{\bar\pi_N(0)}{e^{N J_0}} \,\frac{b(0)}{b(\frac{j}{N})}\,\exp\lac- N I(\tfrac{j}{N})\rac e^{-\gamma-\gamma_1-\gamma_4} \frac{1}{\sqrt{N}\sqrt{r(\frac{j}{N})}}.
\end{eqnarray*}
Setting $\underline\gamma=e^{-\gamma-\gamma_1-\gamma_4}$, we get the desired lower bound. We proceed analogously for the upper bound by setting $\bar\gamma=e^{\gamma-\gamma_2-\gamma_3}$.
\end{proof}

From~\cite{chan1998largedeviation}, we already know that $\lim_{N\to\infty}\frac{1}{N}\ln(\bar\pi_N(0))=J_0$. Here, we need a slightly stronger result.

\begin{lemma}\label{lemma:pi(0):J(0)}
Suppose condition \eqref{Condition1} and Assumptions~\ref{assump:r} are satisfied. Then,
$$\lim_{N\to\infty} \frac{\bar\pi_N(0)}{e^{N J_0}}=\gamma_5 < +\infty.$$
\end{lemma}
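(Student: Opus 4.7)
The plan is to extract the asymptotic behavior of $\bar\pi_N(0)$ from the normalization identity $\sum_{k=0}^N \bar\pi_N(k/N) = 1$, combined with the sharp pointwise estimate of Lemma~\ref{lemma:pi(j/N)} and a Laplace-method expansion of the resulting sum around the global minima of the entropy $I$.

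First, I would substitute Lemma~\ref{lemma:pi(j/N)} into the normalization to obtain, up to multiplicative errors,
$$1 \;=\; \bar\pi_N(0) + \bar\pi_N(1) + \bar\pi_N(0)\, e^{-NJ_0}\, \frac{1}{\sqrt{N}}\,S_N\,(1+o(1)),$$
where $S_N := \sum_{k=1}^{N-1} \frac{b(0)}{b(k/N)\sqrt{r(k/N)}}\,e^{-NI(k/N)}$. The two boundary contributions $\bar\pi_N(0),\bar\pi_N(1)$ vanish exponentially fast because, by Assumption~\ref{assump:r}, the global minima $x_i$ of $I$ are interior so that $I(0),I(1)>0$.

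Next, I would apply Laplace's method to $S_N$. Since $I(x_i)=0$, $I'(x_i)=0$ and $I''(x_i)>0$ for each $i=1,\ldots,m$, one Taylor-expands $NI$ to second order around each $x_i$, changes variables to $y=\sqrt{N}(k/N-x_i)$, and approximates the resulting Riemann sum of a Gaussian over the lattice of spacing $1/\sqrt{N}$. This yields
$$\frac{S_N}{\sqrt{N}}\;\xrightarrow[N\to\infty]{}\;\sum_{i=1}^m \frac{b(0)\sqrt{2\pi}}{b(x_i)\sqrt{r(x_i)\,I''(x_i)}}\;=:\;\gamma_5^{-1}\;\in\;(0,+\infty),$$
while the contribution from the complement of the $1/\sqrt{N}$-windows around the $x_i$'s is exponentially negligible by the large deviation principle. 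Combining the two displays gives $\bar\pi_N(0)\,e^{-NJ_0}\to\gamma_5$.

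The main obstacle is to upgrade the bracketing bounds of Lemma~\ref{lemma:pi(j/N)}, which involve possibly distinct multiplicative constants $\underline\gamma$ and $\bar\gamma$, into a genuine asymptotic equivalent of the form $\bar\pi_N(k/N) = \bar\pi_N(0)\,e^{-NJ_0}\cdot \frac{b(0)}{b(k/N)\sqrt{Nr(k/N)}}\,e^{-NI(k/N)}(1+o(1))$ with an $o(1)$ uniform over the Laplace windows. The obstruction traces back to the trapezoidal residual $\frac{1}{12N^2}\sum_k(\ln r)''(\xi_k)$ in Lemma~\ref{lemma:estimates:e1}, which was only shown to be bounded. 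Refining this via the Euler-Maclaurin expansion, exploiting the piecewise $\mathcal{C}^2$ regularity of $r$ and the integrability of $(\ln r)''$ away from the endpoints, is the key step that closes the gap and pins down the value of $\gamma_5$.
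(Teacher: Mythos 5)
Your overall strategy --- expand the normalization identity \eqref{Steady:at:0} via the pointwise estimate of Lemma~\ref{lemma:pi(j/N)} and then run a Laplace-type analysis of the resulting sum in $1/\sqrt{N}$-windows around the global minima of $I$ --- is exactly the route the paper takes, and your reduction of the problem to controlling $S_N/\sqrt N$ plus negligible boundary terms is sound. The gap is in the step you defer. Lemma~\ref{lemma:pi(j/N)} only brackets $\bar\pi_N(\frac{j}{N})$ between two expressions that differ by the fixed multiplicative constants $\underline\gamma\ne\bar\gamma$; you correctly observe that this blocks a genuine asymptotic equivalent, but you attribute the slack solely to the trapezoidal residual of Lemma~\ref{lemma:estimates:e1} and propose to remove it by Euler--Maclaurin. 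That cannot work: a second, irreducible source of slack is the factor $e^{\pm\gamma}$ coming from Assumption~\ref{assump:r}(5), since \eqref{NewCondition} only guarantees that the discrepancy between $\sum_k\ln(d^{(N)}(\frac{k}{N})/b^{(N)}(\frac{k}{N}))$ and $\sum_k\ln r(\frac{k}{N})$ has a finite $\limsup$ $M$, not that it vanishes. No refinement of the quadrature error removes that, so the explicit Gaussian value $\gamma_5^{-1}=\sum_i b(0)\sqrt{2\pi}\,/\bigl(b(x_i)\sqrt{r(x_i)I''(x_i)}\bigr)$ you announce is not attainable under the stated hypotheses (incidentally, $r(x_i)=1$ at each equilibrium, so that factor is redundant), and a true limit with an identified constant is more than the argument can deliver.

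What the paper actually establishes --- and all that is used downstream, e.g.\ in Lemma~\ref{lemma:pi:charges:equilibr} --- is that $\bar\pi_N(0)e^{-NJ_0}$ is bounded above and below by positive constants; your skeleton does yield that if you stop at bracketing rather than insisting on an equivalent. Two places where your sketch is lighter than the actual work required: first, the comparison between $\sum_{k}h(\frac{k}{N})$ and $\sqrt N\int_0^1 h$ is not settled by ``integrability of $(\ln r)''$ away from the endpoints,'' because the dominant contribution to the error $\mathcal{E}^2_N$ comes precisely from the endpoints, where the paper must confront $h''(\frac1N)\sim N^{7/2}(\ln N)^2$ and then check that the resulting polynomial error is still dominated by $e^{-NJ_0}$ (this uses $J_0<0$, which follows from the equilibria being interior); second, the term $\bar\pi_N(1)/\bar\pi_N(0)$ requires its own estimate of the type carried out in Lemma~\ref{lemma:pi(j/N)} before it can be declared negligible. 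Neither point is fatal, but both must be written out for the argument to close.
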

\begin{proof}
By~\eqref{Steady:at:0} and by Lemma~\ref{lemma:pi(j/N)}, we have that
\begin{align*}
(\bar\pi_N(0))^{-1}&= 1+ \frac{\bar\pi_N(1)}{\bar\pi_N(0)}+\sum_{k=1}^{N-1}\frac{\bar\pi_N(\frac{k}{N})}{\bar\pi_N(0)}     \leq       1+ \frac{\bar\pi_N(1)}{\bar\pi_N(0)} + \bar\gamma \sum_{k=1}^{N-1}\frac{1}{\sqrt{N}}\,\frac{b(0)}{b(\frac{k}{N})}\,\frac{1}{\sqrt{r(\frac{k}{N})}}\,e^{-N J(\frac{k}{N})} \\
&= 1+ \frac{\bar\pi_N(1)}{\bar\pi_N(0)} + \bar\gamma \sqrt{N}\int_0^1 h(u)\,{\rm d}u + \bar\gamma \,\mathcal{E}^2_N
\end{align*}
where $h(x)=\frac{b(0)}{\sqrt{b(x)d(x)}}\,e^{-N J(x)}$ and $\mathcal{E}^2_N = \frac{1}{\sqrt{N}}\sum_{k=1}^{N-1} h(\tfrac{k}{N}) - \sqrt{N}\int_0^1 h(u)\,{\rm d}u$.

Firstly, proceeding similarly as in the proof of Lemma~\ref{lemma:pi(j/N)}, we have 
\begin{align*}
\frac{\bar\pi_N(1)}{\bar\pi_N(0)}&= \prod_{k=1}^{N-1}\frac{b^{(N)}(\frac{k}{N})}{d^{(N)}(\frac{k+1}{N})} =\frac{b^{(N)}(0)}{d^{(N)}(1)}\prod_{k=1}^{N-1}\frac{b^{(N)}(\frac{k}{N})}{d^{(N)}(\frac{k}{N})} \leq \frac{b(0)}{d(1)}\,\exp\lac-\sum_{k=1}^{N-1} \ln r(\tfrac{k}{N})\rac e^{\gamma}\\
& \leq \bar\gamma e^{-N J_0} \,\frac{b(0)}{d(1)}\,\exp\lac- N I(1-\tfrac{1}{N}) +\frac{1}{2}\,\ln\lp\tfrac{1}{N}\rp-\frac{1}{2}\,\ln( r(1-\tfrac{1}{N}))\rac\\
& \leq \bar\gamma e^{-N J_0} \,\frac{b(0)}{d(1)}\,\exp\lac- N I(1-\tfrac{1}{N}) +\frac{1}{2}\,\ln\lp\tfrac{1}{N}\rp-\frac{1}{2}\,\ln( N-1)\rac\\
& \leq \bar\gamma e^{-N J_0} \,\frac{b(0)}{d(1)}\,\frac{1}{N},
\end{align*}
since $I(x)\geq0$ and by assumptions on the behavior of $r(x)$ for $x$ near $1$. Secondly, consider the integral 
$$\sqrt{N}\int_0^1 h(u)\,{\rm d}u =e^{-N J_0} \sqrt{N}\int_0^1 \frac{b(0)}{\sqrt{b(u)d(u)}}\,e^{-N I(u)}\,{\rm d}u.$$ 
By Assumptions~\ref{assump:r}, for $x\approx0$, we have that $\frac{1}{\sqrt{b(x)d(x)}}\approx \frac{1}{\sqrt{(1-x)x}} \frac{1}{\sqrt{\ell_{b,0}\ell_{d,0}}}$ (and similarly for $x\approx1$). Since $r(x)=\frac{d(x)}{b(x)}$ is supposed to be piecewise $\mathcal{C}^2$ on $]0,1[$, we deduce that 
$$\int_0^1 \frac{b(0)}{\sqrt{b(u)d(u)}}{\rm d}u<+\infty.$$
Set $B_{\frac{1}{\sqrt{N}}}=\cup_{i=1}^{m}[ x_{i}-\tfrac{1}{\sqrt{N}}\,,\,x_{i}+\tfrac{1}{\sqrt{N}}]$. The volume of this subset of $[0,1]$ equals $\frac{2m}{\sqrt{N}}$ and by Assumptions~\ref{assump:r}, there exists $\delta>0$ such that 
$$I(x)>\delta, \quad \text{ for all } x\in[0,1]\setminus B_{\frac{1}{\sqrt{N}}}.$$
Therefore, 
\begin{align*}
 \sqrt{N}\int_0^1 \frac{b(0)}{\sqrt{b(u)d(u)}}\,e^{-N I(u)}\,{\rm d}u &= \sqrt{N}\int_0^1 \frac{b(0)}{\sqrt{b(u)d(u)}}\,e^{-N I(u)} \ind_{B_{\frac{1}{\sqrt{N}}}}(u)\,{\rm d}u \\
 &\quad + \sqrt{N}\int_0^1 \frac{b(0)}{\sqrt{b(u)d(u)}}\,e^{-N I(u)} \ind_{[0,1]\setminus B_{\frac{1}{\sqrt{N}}}}(u) \,{\rm d}u \\
 &\leq \sqrt{N} \frac{2m}{\sqrt{N}}\,\sup_{x\in B_{\frac{1}{\sqrt{N}}}} \frac{b(0)}{\sqrt{b(x)d(x)}} \\
 &\quad + \sqrt{N} e^{-N\delta}\int_0^1 \frac{b(0)}{\sqrt{b(u)d(u)}} \,{\rm d}u, 
\end{align*}
which goes towards a constant $c<+\infty$ when $N\to\infty$. Lastly, the error term $\mathcal{E}^2_N$ can be handled with the trapezoidal rule as follows (see proof of Lemma~\ref{lemma:estimates:e1})
\begin{align*}
\mathcal{E}^2_N &= \frac{1}{\sqrt{N}}\lp \sum_{k=1}^{N-1} h(\tfrac{k}{N}) - \sqrt{N}\int_{\frac{1}{N}}^{1-\frac{1}{N}} h(u)\,{\rm d}u\rp - \sqrt{N}\int_{0}^{\frac{1}{N}} h(u)\,{\rm d}u -N\int_{1-\frac{1}{N}}^{1} h(u)\,{\rm d}u\\
&= \frac{1}{\sqrt{N}}\lp \frac{1}{2}h(\tfrac{1}{N}) + \frac{1}{2}h(1-\tfrac{1}{N}) + \frac{1}{2}\sum_{i=1}^{K}\lp \ln(h(d_{i}-))-\ln(h(d_{i}+)) \rp+\frac{1}{12N^{2}}\sum_{k=1}^{N-2} h^{\prime\prime}(\xi_{k}) \rp\\
&\quad - \sqrt{N}\int_{0}^{\frac{1}{N}} h(u)\,{\rm d}u -\sqrt{N}\int_{1-\frac{1}{N}}^{1} h(u)\,{\rm d}u,
\end{align*}
with $\frac{k}{N}\leq\xi_{k}\leq\frac{k+1}{N}$. By assumption, on one hand we have that
$$ \tfrac{1}{\sqrt{N}} \tfrac{1}{2}h(\tfrac{1}{N}) \approx \tfrac{1}{\sqrt{N}} \tfrac{1}{2}h(1-\tfrac{1}{N})\approx \tfrac{1}{2} e^{\ln(N)+1}=\tfrac{e}{2} N$$
and the sum over $d_i$, the discontinuities of $r(x)$, is bounded by a constant $c_2$. On the other hand, the terms $h^{\prime\prime}(\xi_{k})$ can be bounded in any closed subinterval of $]0,1[$. Thus, the second derivative has to be controlled only near $0$ and $1$. Since at these boundaries $h(x)\propto (x(1-x))^{-\frac{1}{2}}e^{-NJ(x)}$, since $J(x)\approx\int_{0_+}^{x}\ln(u)\,{\rm d}u$ for $x\approx0$ and $J(x)\approx\int_{0_+}^{1_-}\ln(r(u))\,{\rm d}u-\int_{x}^{1_-}\ln(r(1-u))\,{\rm d}u$ for $x\approx1$, we can show by computing the first and second derivatives of these approximations that 
$$h^{\prime\prime}(\tfrac{1}{N})=N^{\frac{7}{2}}\lp\ln(N)\rp^2+o(N^4)=h^{\prime\prime}(1-\tfrac{1}{N}).$$
Therefore, 
$$ \frac{1}{\sqrt{N}}\,\frac{1}{12N^{2}}\sum_{k=1}^{N-2} h^{\prime\prime}(\xi_{k}) \leq \frac{1}{12N^{\frac{3}{2}}}\max_{1\leq k\leq N-2} h^{\prime\prime}(\xi_{k}) =N^2\lp\ln(N)\rp^2+o(N^{\frac{5}{2}}).$$
Finally, 
\begin{equation}\label{eq:two:integrals:near:0:1}
 - \sqrt{N}\int_{0}^{\frac{1}{N}} h(u)\,{\rm d}u \leq 0\quad \text{ and } -\sqrt{N}\int_{1-\frac{1}{N}}^{1} h(u)\,{\rm d}u \leq 0.
\end{equation}
Bringing every estimates together, we obtain a (loose) upper-bound for $(\bar\pi_N(0))^{-1}$,
\begin{align*}
(\bar\pi_N(0))^{-1}&= 1+ \frac{\bar\pi_N(1)}{\bar\pi_N(0)} + \bar\gamma \sqrt{N}\int_0^1 h(u)\,{\rm d}u + \bar\gamma \,\mathcal{E}^2_N\\
&\leq 1+\bar\gamma\lp e^{-N J_0} \,\frac{b(0)}{d(1)} \,\frac{1}{N}+ c_1+ \frac{e}{2} N+\frac{e}{2} N+\frac{1}{N}c_2+ N^2\lp\ln(N)\rp^2+o(N^{\frac{5}{2}})\rp\\
&\leq\tilde\gamma e^{-N J_0} +o(N^{3}),
\end{align*}
with $\tilde\gamma>0$. Thus, 
$$\bar\pi_N(0)\geq \tilde\gamma e^{N J_0} \frac{1}{1+o(N^{3})e^{N J_0}}=\tilde\gamma e^{N J_0}o(1)$$
and we can proceed analogously to find an upper bound of the same order. The only difference arises for the integrals in~\eqref{eq:two:integrals:near:0:1} which can be lower bounded by a term of order at most $-N^2$. This concludes the proof.
\end{proof}

\section{Proof of the results from Section~\ref{s.allosteric}}
\subsection{Proof of Proposition~\ref{prop:mixture0}}
In proposition~\ref{prop:mixture0}, $\bar\pi(k)$ is written as a mixture of two binomial distribution, 
\begin{equation*}
\bar\pi(k) =\frac{{N\choose k}(\frac{v}{1+v})^k (\frac{1}{1+v})^{N-k}}{1+(\frac{1+\varepsilon v}{1+v})^N L_1}
      +\frac{{N\choose k}(\frac{\varepsilon v}{1+\varepsilon v})^k (\frac{1}{1+\varepsilon v})^{N-k}}{1+(\frac{1+ v}{1+\varepsilon v})^N L_1^{-1}}.
\end{equation*}

\begin{proof}[Proof of Proposition~\ref{prop:mixture0}]
The stationary probability measure $\mu_N$ of the Markov chain $(N(t),W(t))$ can be defined through positive weights $w_N(k,A)$ and $w_N(k,I)$ by setting
$$\mu_N(k,A)=\frac{w_N(k,A)}{Z_N} \qquad\text{ and }\qquad\mu_N(k,I)=\frac{w_N(k,I)}{Z_N},$$
where $Z_N = \sum_{0\le k\le N}(w_N(k,A)+w_N(k,I))$ is the normalization constant.
The transitions of the upper layer of the strip correspond to a classical Ehrenfest urn process, and it is then natural to set 
$$w_N(k,A)={N\choose k}v^k,$$
with a similar expression for the chain restricted to the lower layer,
$$w_N(k,I)= L_1{N\choose k}(v\varepsilon)^k.$$
For the active state, notice that the balance equation is satisfied
\begin{align*}
&w_N(k-1,A)q_N((k-1,A),(k,A))+w_N(k+1,A)q_N((k+1,A),(k,A)) + w_N(k,I)q_N((k,I),(k,A))\\
&= w_N(k,A)\big[q_N((k,A),(k+1,A))+q_N((k,A),(k-1,A))+q_N((k,A),(k,I))\big],
\end{align*}
and similarly for the inactive state, showing that $\mu_N$ is the steady state distribution of the Markov chain for any $L_1 >0$. In conclusion, 
$$\bar\pi(k)=\mu_N(k,A) +\mu_N(k,I)=\frac{{N\choose k}v^k}{(1+v)^N +L_1 (1+\varepsilon v)^N} + \frac{L_1{N\choose k}(v\varepsilon)^k}{(1+v)^N +L_1 (1+\varepsilon v)^N}.$$
\end{proof}

\subsection{Proof of Proposition~\ref{asymptoticallostericHill}}

Proposition~\ref{prop:mixture0} shows that the steady state distribution is a mixture of two binomial distributions $\pi_1=\mathcal{B}(N,\frac{\varepsilon v}{1+\varepsilon v})$ and $\pi_1=\mathcal{B}(N,\frac{v}{1+v})$. Suppose for the sequel that $L_1 = L_1(N)=\varepsilon^{-N/2}$, as in \cite{ryerson2014ultrasensitivity,EncisoKelloggVargas2014}, so that the coefficient of the mixture becomes
\begin{equation}\label{eq:encis:mixt:coeff}
\alpha(v)=\frac{1}{1+\lp\frac{\sqrt{\varepsilon}(1+ v)}{1+\varepsilon v}\rp^{N}}\,.
\end{equation}
Using Lemma~\ref{Mixture} we can find the asymptotic related Hill coefficient and show that, for the critical concentration $v_c$, the system is ultrasensitive.

\begin{proof}[Proof of Proposition~\ref{asymptoticallostericHill}]
Let $v\neq v_c$ and, as in Example~\ref{Adair2}, use the formula in Lemma~\ref{Mixture} with a second order expansion of $a(x)$ in oder to asymptotically estimate the covariances.  The expansion is taken at $p=\varepsilon v/(1+\varepsilon v)$ when considering distribution $\pi_1$ and respectively at $p=v/(1+v)$ when considering $\pi_2$. This gives
\begin{eqnarray*}
N {\rm Cov}_{\pi_1}\lp a( \tfrac{|n|}{N}), \tfrac{|n|}{N} \rp & \xrightarrow[N\to\infty]{} & a'\lp \tfrac{\varepsilon v}{1+\varepsilon v} \rp \frac{\varepsilon v}{1+\varepsilon v} \frac{1}{1+\varepsilon v}\\
N {\rm Cov}_{\pi_2}\lp a( \tfrac{|n|}{N}), \tfrac{|n|}{N} \rp & \xrightarrow[N\to\infty]{} & a'\lp \tfrac{v}{1+v} \rp \frac{v}{1+v} \frac{1}{1+v} 
\end{eqnarray*}
Furthermore in our settings, $f(\infty) = a(1)$ and
$$f(v) = \left< a (\tfrac{|n|}{N}) \right>_{\alpha(v) \pi_1 + (1-\alpha(v))\pi_2} = \alpha(v)a( \tfrac{\varepsilon v}{1+\varepsilon v})  + (1-\alpha(v))a( \tfrac{v}{1+v}).$$
The derivative of $\alpha$ is given by
$$\alpha'(v) = - \frac{(1-\varepsilon)\varepsilon^{\frac{N}{2}} N \lp \tfrac{1+\varepsilon v}{1 + v} \rp^{N-1}}{ (1+v)^2 \lp \varepsilon^{\frac{N}{2}} + \lp \tfrac{1+\varepsilon v}{1+v} \rp^N \rp^2}.$$ 
Putting all together in the formula in Lemma~\ref{Mixture} and taking the limit gives
\begin{eqnarray}\label{lemma:formulaHillEnciso:proof}
\lim_{N\to\infty} \eta_H(v) = \lim_{N\to\infty} a(1) \lp \varepsilon^{\frac{N}{2}} + \lp \tfrac{1+\varepsilon v}{1 + v} \rp^N \rp^2  \frac{A(N,\varepsilon,v) + B(N,\varepsilon,v) + C(N,\varepsilon,v)}{D(N,\varepsilon,v)},
\end{eqnarray}
where we define for convenience
\begin{eqnarray*}
A(N,\varepsilon,v) & = & - v\, \frac{(1-\varepsilon)\varepsilon^{\frac{N}{2}}\,N \lp \tfrac{1+\varepsilon v}{1 + v} \rp^{N} \lp a(\tfrac{\varepsilon v}{1+\varepsilon v}) -a(\tfrac{v}{1+v}) \rp}{(1+v)(1+\varepsilon v)\lp \varepsilon^{\frac{N}{2}} + \lp \tfrac{1+\varepsilon v}{1 + v} \rp^N \rp^2 } \\
B(N,\varepsilon,v) & = & \frac{v\, a'(\tfrac{v}{1+v})}{(1+v)^2 \lp 1 + \varepsilon^{-\frac{N}{2}} \, \lp \tfrac{1+\varepsilon v}{1 + v}  \rp^N \rp}\\
C(N,\varepsilon,v) & = & \frac{\varepsilon \, v\,  \varepsilon^{-\frac{N}{2}} \lp \tfrac{1+\varepsilon v}{1 + v} \rp^N \, a' (\tfrac{\varepsilon v}{1 + \varepsilon v})}{(1 + \varepsilon v)^2 \lp 1 + \varepsilon^{-\frac{N}{2}} \lp \tfrac{1+\varepsilon v}{1 + v} \rp^N \rp} \\
D(N,\varepsilon,v) & = & a(1)\varepsilon^N a(\tfrac{v}{1+v}) + a(1)\lp \tfrac{1+\varepsilon v}{1+v} \rp^N a(\tfrac{v}{1+v}) \varepsilon^{\frac{N}{2}} - \varepsilon^N \lp a(\tfrac{v}{1+v}) \rp^2 \\
& \quad & - \varepsilon^{\frac{N}{2}} \lp \tfrac{1+\varepsilon v}{1+v} \rp^N a(\tfrac{v}{1+v}) a(\tfrac{\varepsilon v}{1+\varepsilon v}) + a(1) \varepsilon^{\frac{N}{2}} \lp \tfrac{1+\varepsilon v}{1+v} \rp^N a(\tfrac{\varepsilon v}{1 + \varepsilon v}) \\
& \quad & + a(1) \lp \tfrac{1+\varepsilon v}{1 + v} \rp^{2N} a(\tfrac{\varepsilon v}{1+\varepsilon v}) - \varepsilon^{\frac{N}{2}} \lp \tfrac{1+\varepsilon v}{1 + v} \rp^N a(\tfrac{\varepsilon v}{1+\varepsilon v})a(\tfrac{ v}{1+ v}) \\
& \quad & - \lp \tfrac{1+\varepsilon v}{1+v} \rp^{2N} \lp a(\tfrac{\varepsilon v}{1 + \varepsilon v}) \rp^2.
\end{eqnarray*}
The main idea of the calculation will be to determine the asymptotic contribution of every term to the limit~\eqref{lemma:formulaHillEnciso:proof}. To perform this, an important step consists in proving that
\begin{eqnarray}\label{lemma:formulaHillEnciso:proof:step}
\lim_{N\to\infty} \varepsilon^{-\frac{N}{2}}\lp \frac{1+\varepsilon v}{1+v} \rp^{-N} D(N,\varepsilon,v) = \infty.
\end{eqnarray}
Indeed, after having multiplied every term of $D$ with the preceding factor, we obtain
\begin{eqnarray*}
\lim_{N\to\infty} \varepsilon^{-\frac{N}{2}}\lp \tfrac{1+\varepsilon v}{1+v} \rp^{-N} D(N,\varepsilon,v) & = & \quad \lim_{N\to\infty} \varepsilon^{\frac{N}{2}}\lp \tfrac{1+\varepsilon v}{1+v} \rp^{-N}a(\tfrac{v}{1+v})\lp a(1) - a(\tfrac{v}{1+v}) \rp \\
& \quad & + \lim_{N\to\infty} \varepsilon^{-\frac{N}{2}}\lp \tfrac{1+\varepsilon v}{1+v} \rp^{N}a(\tfrac{\varepsilon v}{1+\varepsilon v})\lp a(1) - a(\tfrac{\varepsilon v}{1+\varepsilon v}) \rp \\
& \quad & + a(1)a(\tfrac{v}{1+v}) - 2a(\tfrac{v}{1+v})a(\tfrac{\varepsilon v}{1+\varepsilon v}) + a(1)a(\tfrac{\varepsilon v}{1+\varepsilon v}).
\end{eqnarray*}
Assuming $\sqrt{\varepsilon}(1+v)>1+\varepsilon v$ and $\varepsilon < 1$ implies $\varepsilon^{\frac{N}{2}}\lp \tfrac{1+\varepsilon v}{1+v} \rp^{-N} \xrightarrow[N\to\infty]{} \infty$ and thus $\varepsilon^{-\frac{N}{2}}\lp \tfrac{1+\varepsilon v}{1+v} \rp^{N} \xrightarrow[N\to\infty]{} 0$. The assumption $\sqrt{\varepsilon}(1+v)<1+\varepsilon v$ reverses the previous conclusion, so that in every case and using the assumption of $a$ positive and strictly increasing on $\left[0,\, 1 \right]$ the limit~\eqref{lemma:formulaHillEnciso:proof:step} holds. 

Now the computation of~\eqref{lemma:formulaHillEnciso:proof} turns to three steps. Firstly, because~\eqref{lemma:formulaHillEnciso:proof:step} occurs at exponential speed, we compute
\begin{align*}
& \lim_{N\to\infty} a(1) \lp \varepsilon^{\frac{N}{2}} + \lp \tfrac{1+\varepsilon v}{1 + v} \rp^N \rp^2  \frac{A(N,\varepsilon,v)}{D(N,\varepsilon,v)}\\
& \quad =  \lim_{N\to\infty}-v\, \frac{a(1)(1-\varepsilon)N\lp a(\tfrac{\varepsilon v}{1+\varepsilon v})-a(\tfrac{v}{1+v}) \rp}{(1+v)(1+\varepsilon v)\varepsilon^{-\frac{N}{2}}\lp \frac{1+\varepsilon v}{1+v} \rp^{-N} D(N,\varepsilon,v)} = 0
\end{align*}

Secondly, noting that $\lp \varepsilon^{\frac{N}{2}} + \lp \tfrac{1+\varepsilon v}{1 + v} \rp^N \rp^2 = \varepsilon^N\lp 1 + \varepsilon^{-\frac{N}{2}}\lp \tfrac{1+\varepsilon v}{1 + v} \rp^N \rp^2$, we can write
\begin{align*}
& \lim_{N\to\infty} a(1) \lp \varepsilon^{\frac{N}{2}} + \lp \tfrac{1+\varepsilon v}{1 + v} \rp^N \rp^2  \frac{B(N,\varepsilon,v)}{D(N,\varepsilon,v)} \\
& \quad = \lim_{N\to\infty} \frac{a(1)\varepsilon^N v a'(\tfrac{v}{1+v})}{(1+v)^2 D(N,\varepsilon,v)} + \lim_{N\to\infty}\frac{a(1)\varepsilon^{\frac{N}{2}}\lp \tfrac{1+\varepsilon v}{1+v} \rp^Nva'(\tfrac{v}{1+v})}{(1+v)^2 D(N,\varepsilon,v)},
\end{align*}
where the second term is zero by~\eqref{lemma:formulaHillEnciso:proof:step} and the part containing $N$ in the first term can be computed as
\begin{eqnarray*}
\lim_{N\to\infty} \varepsilon^{-N}D(N,\varepsilon,v) & = & \quad \lim_{N\to\infty} \varepsilon^{-\frac{N}{2}} \lp \tfrac{1+\varepsilon v}{1+v} \rp^N \lp a(\tfrac{v}{1+v}) + a(\tfrac{\varepsilon v}{1+\varepsilon v}) \rp  \lp a(1) - a(\tfrac{\varepsilon v}{1+\varepsilon v}) \rp \\
& \quad & + \lim_{N\to\infty} \varepsilon^{-N} \lp \tfrac{1+\varepsilon v}{1+v} \rp^{2N} a(\tfrac{\varepsilon v}{1+\varepsilon v})\lp a(1) - a(\tfrac{\varepsilon v}{1+\varepsilon v}) \rp \\
& \quad & + \ a(\tfrac{v}{1+v})\lp a(1) - a(\tfrac{v}{1+v}) \rp.
\end{eqnarray*}
The asymptotic behavior of $\varepsilon^{-\frac{N}{2}} \lp \tfrac{1+\varepsilon v}{1+v} \rp^N$ depends on the values of $\varepsilon$ et $v$ such that
\begin{eqnarray*}
\lim_{N\to\infty} \frac{a(1)\varepsilon^N v a'(\tfrac{v}{1+v})}{(1+v)^2 D(N,\varepsilon,v)} = 
\begin{cases}
\frac{a(1)v a'\lp\frac{v}{1+v} \rp}{(1+v)^2 a\lp\frac{v}{1+v}\rp \lp a(1) - a\lp\frac{v}{1+v}\rp \rp}\,, & \text{ if } \sqrt{\varepsilon}(1+v)>1+\varepsilon v, \\
0, & \text{ otherwise}.
\end{cases}
\end{eqnarray*}
Finally, 
\begin{align*}
& \lim_{N\to\infty} a(1) \lp \varepsilon^{\frac{N}{2}} + \lp \tfrac{1+\varepsilon v}{1 + v} \rp^N \rp^2  \frac{C(N,\varepsilon,v)}{D(N,\varepsilon,v)} \\
& = \lim_{N\to\infty} \frac{a(1)\varepsilon v \varepsilon^{\frac{N}{2}} \lp \tfrac{1+\varepsilon v}{1+v} \rp^N a'(\tfrac{\varepsilon v}{1+\varepsilon v})}{(1+\varepsilon v)^2 D(N,\varepsilon,v)} + \lim_{N\to\infty} \frac{a(1)\varepsilon v \lp \tfrac{1+\varepsilon v}{1+v} \rp^{2N} a'(\tfrac{\varepsilon v}{1+\varepsilon v})}{(1+\varepsilon v)^2 D(N,\varepsilon,v)},
\end{align*}
where by~\eqref{lemma:formulaHillEnciso:proof:step} the first term goes to zero when $N$ is large and the second one can be handled analogously as above, noting that
\begin{eqnarray*}
\lim_{N\to\infty} \lp \tfrac{1+\varepsilon v}{1+v} \rp^{-2N} D(N,\varepsilon,v) & = & \quad \lim_{N\to\infty} \varepsilon^{\frac{N}{2}} \lp\tfrac{1+\varepsilon v}{1+\varepsilon v} \rp^{-N} a(\tfrac{v}{1+v}) \lp a(1) - a(\tfrac{v}{1+v})\rp \\
& \quad & + \lim_{N\to\infty} \varepsilon^{\frac{N}{2}} \lp\tfrac{1+ v}{1+ v} \rp^{-N} a(\tfrac{v}{1+v}) \lp a(1) - a(\tfrac{\varepsilon v}{1+\varepsilon v})\rp \\
& \quad & + \lim_{N\to\infty} \varepsilon^{N} \lp\tfrac{1+ v}{1+ v} \rp^{-2N} a(\tfrac{v}{1+v}) \lp a(1) - a(\tfrac{ v}{1+ v})\rp \\
& \quad & + a(\tfrac{\varepsilon v}{1+\varepsilon v})\lp a(1) - a(\tfrac{\varepsilon v}{1 + \varepsilon v})\rp
\end{eqnarray*}
so that
\begin{eqnarray*}
\frac{a(1)\varepsilon v \lp \tfrac{1+\varepsilon v}{1+v} \rp^{2N} a'(\tfrac{\varepsilon v}{1+\varepsilon v})}{(1+\varepsilon v)^2 D(N,\varepsilon,v)} \to
\begin{cases}
0, & \text{ if } \sqrt{\varepsilon}(1+v)>1+\varepsilon v, \\
\frac{a(1)\varepsilon v a'\lp\frac{\varepsilon v}{1+\varepsilon v} \rp}{(1+\varepsilon v)^2 a\lp\frac{\varepsilon v}{1+\varepsilon v}\rp \lp a(1) - a\lp\frac{\varepsilon v}{1+ \varepsilon v}\rp \rp}\,, & \text{ otherwise},
\end{cases}
\end{eqnarray*}
for $N\to\infty$. Putting all together leads to~\eqref{FormulaHillNonCrit}. Setting $v=v_c=1/\sqrt{\varepsilon}$ implies $\alpha(v_c) = \tfrac{1}{2}$ and $\alpha'(v_c) = \tfrac{N(\sqrt{\varepsilon}-\varepsilon)}{4(\sqrt{\varepsilon} + 1)}$ so that the part of $\eta_H(v)$ involving the covariances does not contribute to the limit. This leads then directly to formula~\eqref{FormulaHill} with 
$$C_{v_c} = \frac{(1-\varepsilon)}{(1+\sqrt{\varepsilon})(1+\frac{1}{\sqrt{\varepsilon}})} \,\frac{a(1)(a(\frac{1}{1+\sqrt{\varepsilon}})-a(\frac{\sqrt{\varepsilon}}{1+\sqrt{\varepsilon}}))}{(a(\frac{\sqrt{\varepsilon}}{1+\sqrt{\varepsilon}})+a(\frac{1}{1+ \sqrt{\varepsilon}}))
(2 a(1)-a(\frac{\sqrt{\varepsilon}}{1+\sqrt{\varepsilon}})-a(\frac{1}{1+ \sqrt{\varepsilon}}))}. $$
\end{proof}

\subsection{Proof of Lemma~\ref{allosteric:Ip}}
In this section we underline the link between Hill coefficient and the effective Hill coefficient for the special case of an allosteric phosphorylation process. As seen in (\ref{Mixture0}), the steady state $\bar\pi_N$ is a mixture of the binomial distributions $\pi_1=\mathcal{B}(N,\frac{\varepsilon v}{1+\varepsilon v})$ and $\pi_2=\mathcal{B}(N,\frac{v}{1+v})$, of coefficient
$$\alpha(v)=\frac{1}{1+\lp\frac{\sqrt{\varepsilon}(1+ v)}{1+\varepsilon v}\rp^{N}}.$$
Consider the quantile $v_q^{(N)}$ given by the equation $q=f(v_q^{(N)})$.

\begin{proof}[Proof of Lemma~\ref{allosteric:Ip}]
When $a(x)\equiv x$, one obtains that
$$f(v)=\alpha(v)\frac{\varepsilon v}{1+\varepsilon v}+(1-\alpha(v))\frac{v}{1+v} .$$
We first show that $v_q^{(N)}\longrightarrow v_c$ as $N\to\infty$. Suppose that $\liminf_{N\to\infty}v_q^{(N)} < v_c-\delta$, for some $\delta > 0$. The equation defining the quantile leads to
$$q(1+\Big(\frac{1+\varepsilon v_q^{(N)}}{\sqrt{\varepsilon}(1+v_q^{(N)})}\Big)^N)=\frac{v_q^{(N)}}{1+v_q^{(N)}}
+\Big(\frac{1+\varepsilon v_q^{(N)}}{\sqrt{\varepsilon}(1+v_q^{(N)})}\Big)^N \frac{\varepsilon v_q^{(N)}}{1+\varepsilon v_q^{(N)}}.$$
From assumption, $\forall N\ge 1$, $\exists m\ge N$ such that $v_q^{(m)}< v_c-\delta < v_c$, so that $(1+\varepsilon v_q^{(m)})/(\sqrt{\varepsilon}(1+v_q^{(m)})>1$. The above equations leads then to
$$q\sim \frac{\varepsilon v_q^{(m)}}{1+\varepsilon v_q^{(m)}}\hbox{ or } v_q^{(m)}\sim \frac{q}{\varepsilon(1-q)}.$$
But $1/2\leq q \leq 1/(1+\sqrt{\varepsilon})$, so that $q/(\varepsilon(1-q))> v_c$, which contradicts the inequality $v_q^{(m)}< v_c-\delta < v_c$. Hence $\liminf_{N\to\infty}v_q^{(N)}\geq v_c$. Similar arguments show that $\limsup_{N\to\infty}v_q^{(N)}\leq v_c$, proving that $\lim_{N\to\infty}v_q^{(N)}=v_c$, as required. Similar arguments show that $\lim_{N\to\infty}v_{1-q}^{(N)}=v_c$. 
\end{proof}

\section{Proof of Lemma~\ref{lemma:kaneko:largeDev}}

\begin{proof}
When $\gamma>1$, $\lim_{x\to0}\ln\lp r(x)\rp=-\infty$ and $\lim_{x\to1}\ln\lp r(x)\rp=\infty$. The continuity of $\ln(r(x))$ implies the existence of a root in the interval $\left]0,\,1\right[$. A necessary condition for the existence of at least three roots is that the function is not monotone increasing. This happens if the derivative 
$$\frac{\textrm{d}}{\textrm{d}x}\ln\lp r(x)\rp = -\,\frac{1+(x-1)x\ln\lp\gamma\rp}{(x-1)x}$$
is negative for some $x\in\left[0,\,1 \right]$. This holds true if and only if conditions $(C1)$ and $(C2)$ are satisfied. Let $x_{min}$ and $x_{max}$ be local minima and maxima of $\ln\lp r(x)\rp$; this function possesses at least two different roots when $\ln\lp r(x_{min})\rp<0$ and $\ln\lp r(x_{max})\rp>0$ which occurs when $(C3)$ is satisfied.
 
Let $x_1$, $x_2$ and $x_3$ be the zeros of $\ln\lp r(x) \rp$ and assume $(C4)$. Then $x_2=0.5$ and $\ln\lp r(x+0.5) \rp$ is odd. Hence,
$$J(x_3) = \int_0^{x_3}\ln\lp r(x)\rp\textrm{d}x = \int_0^{x_1}\ln\lp r(x)\rp\textrm{d}x +\underbrace{\int_{x_1}^{x_3}\ln\lp r(x)\rp\textrm{d}x}_{=0} = J(x_1),$$
so that $I$ possesses two roots.
\end{proof}

\section*{Acknowledgment}
We are grateful to our colleague Ioan Manolescu for helpful discussions on sharp-threshold results and influence functions.


\bibliographystyle{plain}    
\bibliography{arxiv}

\end{document}